\newcommand{\R}{\mathbb{R}}
\newcommand{\Z}{\mathbb{Z}}
\newcommand{\N}{\mathbb{N}}
\renewcommand{\d}{\mathrm{d}}
\date{}
\theoremstyle{plain}
\newtheorem{lem}{Lemma}[section]
\newtheorem{thm}{Theorem}[section]
\newtheorem{prop}{Proposition}[section]
\newtheorem{coro}{Corollary}[section]
\newtheorem{mydef}{Definition}[section]
\newtheorem{remark}{Remark}[section]
\numberwithin{equation}{section}
\newcommand{\ffi}{\varphi}
\newcommand{\tffi}{\widetilde{\varphi}}
\newcommand{\e}{\varepsilon}
\newcommand{\dr}{\partial}
\newcommand{\dive}{\mathrm{div}}
\newcommand{\tr}{\mathrm{tr}}
\newcommand{\f}{\mathfrak{f}}
\newcommand{\tW}{\widetilde{W}}
\newcommand{\tsigma}{\widetilde{\sigma}}
\newcommand{\D}{\mathbf{D}}
\newcommand{\la}{\lambda}
\newcommand{\GO}[1]{O\left( #1 \right)}
\renewcommand{\l}{\left\|}
\renewcommand{\r}{\right\|}
\newcommand{\half}{\frac{1}{2}}
\newcommand{\enstq}[2]{\left\{#1~\middle|~#2\right\}}
\newcommand{\saut}{\par\leavevmode\par}
\newcommand{\h}{\mathfrak{h}}
\newcommand{\simf}{\;\overset{\mathrm{osc}}{\sim}\;}
\newcommand{\1}{\mathbf{1}}
\newcommand{\2}{\mathbf{2}}
\newcommand{\Db}{\bar{\mathbf{D}}}
\title{High-frequency solutions to the constraint equations}
\author{Arthur Touati
 \thanks{Institut des Hautes Etudes Scientifiques, Bures-sur-Yvette, France (\href{mailto:touati@ihes.fr}{touati@ihes.fr})}
}
\begin{document}

\maketitle

\begin{abstract}
We construct high-frequency initial data for the Einstein vacuum equations in dimension 3+1 by solving the constraint equations on $\R^3$. Our family of solutions $(\bar{g}_\lambda,K_\lambda)_{\lambda\in(0,1]}$ is defined through a high-frequency expansion similar to the geometric optics approach and converges in a particular sense to the data of a null dust. In order to solve the constraint equations, we use their conformal formulation and the main challenge of our proof is to adapt this method to the high-frequency context. In particular, the parameters of the conformal formulation are oscillating. The main application of this article is our companion article \cite{Touati2022a} where we construct high-frequency gravitational waves in generalised wave gauge. 
\end{abstract}

\tableofcontents

\section{Introduction}

In this article, we construct high-frequency initial data for the Einstein vacuum equations
\begin{align}
R_{\mu\nu}(g)=0,   \label{EVE chap 2}
\end{align}
where $R_{\mu\nu}(g)$ is the Ricci tensor of $g$, a Lorentzian metric on the manifold $[0,1]\times \R^3$. Initial data for \eqref{EVE chap 2} on $\Sigma_0 = \{t=0\}$ are given by a Riemannian metric $\bar{g}$ and a symmetric 2-tensor $K$. In the spacetime $([0,1]\times \R^3,g)$ that $(\Sigma_0,\bar{g},K)$ gives birth to, $\bar{g}$ will be the restriction of $g$ to $\Sigma_0$ and $K$ will be the second fundamental form of $\Sigma_0$, that is $K=-\half \mathcal{L}_T g$ where $T$ is the unit normal to $\Sigma_0$ for $g$. A necessary condition for $(\bar{g},K)$ to be the set of initial data to a solution of \eqref{EVE chap 2} is that $(\bar{g},K)$ solves the following \textit{vacuum constraint equations}:
\begin{align}
R(\Bar{g})+(\tr_{\Bar{g}}K)^2-|K|^2_{\Bar{g}} & = 0,   \label{hamiltonian constraint general}
\\ -\dive_{\Bar{g}} K +\d\tr_{\Bar{g}}K & = 0,\label{momentum constraint general}
\end{align}
where $R(\bar{g})$ denotes the scalar curvature of $\bar{g}$. Equation \eqref{hamiltonian constraint general} is the Hamiltonian constraint, and equation \eqref{momentum constraint general} is the momentum constraint.  Together, they form a coupled system of non-linear elliptic partial differential equations and we refer to Chapter 7 of \cite{ChoquetBruhat2009} for a complete presentation of their mathematical study.

\subsection{Backreaction for the constraint equations}

We construct a family $(\bar{g}_\la,K_\la)_{\la\in (0,1]}$ of solutions to \eqref{hamiltonian constraint general}-\eqref{momentum constraint general} which oscillate at frequency $\la^{-1}$. The mean feature of this family is its high-frequency limit, i.e its behaviour when $\la$ tends to 0. It converges in the following weak sense
\begin{align}
\bar{g}_\lambda& \to  \bar{g}_0,\quad\text{uniformly in $L^\infty$,}\nonumber
\\ \nabla\bar{g}_\lambda& \rightharpoonup \nabla\bar{g}_0,\quad\text{weakly in $L^2_{loc}$,} \label{behaviour}
\\ K_\lambda & \rightharpoonup K_0,\quad\text{weakly in $L^2_{loc}$,}\nonumber
\end{align} 
to a fixed solution $(\bar{g}_0,K_0)$ of the \textit{null dust maximal constraint equations}:
\begin{align}
R(\Bar{g}_0) - |K_0|^2_{\Bar{g}_0} & = 2 |\nabla u_0|^2_{\bar{g}_0}F_0^2 ,   \label{hamiltonian background}
\\ -\dive_{\Bar{g}_0} K_0  & = |\nabla u_0|_{\bar{g}_0}F_0^2\d u_0,\label{momentum background}
\\ \tr_{\Bar{g}_0}K_0 & = 0,\label{maximal}
\end{align}
where $u_0$ and $F_0$ are scalar functions defined on $\R^3$. Therefore, this article provides the first example of \textit{backreaction} for the constraint equations: a sequence of solutions to the \textit{vacuum} equations describes, in the high-frequency limit, a matter model. 

\saut
This phenomenon is at the heart of Burnett's conjecture in general relativity. In \cite{Burnett1989}, he conjectured that if a sequence $(g_\la)_\la$ of solutions to \eqref{EVE chap 2} converges weakly in $H^1_{loc}$ to a metric $g_0$, then $g_0$ solves the massless Einstein-Vlasov system
\begin{align}
R_{\alpha\beta}(g_0) & = \int_{g_0^{-1}(p,p)=0} f(x,p) p_\alpha p_\beta \d\mu_{g_0}, \label{Einstein Vlasov 1}
\\ p^\alpha \dr_\alpha f - p^\alpha p^\beta \Gamma(g_0)^\rho_{\alpha\beta} \dr_{p^\rho}f & =0, \label{Einstein Vlasov 2}
\end{align} 
for a density $f$ defined on the tangent bundle and where $\d\mu_{g_0}$ is the measure on the tangent bundle. The second equation is the Vlasov equation for the density of massless particles $f$. In his article, Burnett also asked the reverse question : given $g_0$ a solution to \eqref{Einstein Vlasov 1}-\eqref{Einstein Vlasov 2}, can one construct a sequence $(g_\la)_\la$ of solutions to \eqref{EVE chap 2} converging weakly in $H^1$ to $g_0$ ? The present article provides a positive answer to this question at the level of the spacelike data for \eqref{EVE chap 2} and for a discrete version of the massless Einstein-Vlasov system, namely the Einstein-null dust system
\begin{align}
R_{\alpha\beta}(g_0) & = F_0^2 \dr_\alpha u_0 \dr_\beta u_0, \label{ND1}
\\ g_0^{-1}(\d u_0, \d u_0) & =0, \label{ND2}
\\ 2 g_0^{\rho\sigma}\dr_\rho u_0 \dr_\sigma F_0 + (\Box_{g_0} u_0) F_0 & = 0. \label{ND3}
\end{align}
The equations \eqref{hamiltonian background}-\eqref{maximal} are the constraint equations that spacelike data for the system \eqref{ND1}-\eqref{ND3} need to solve, in the particular case where the initial hypersurface is maximal. This explains why we refered to the system \eqref{hamiltonian background}-\eqref{maximal} as the null dust maximal constraint equations. Note that we slightly abuse notations: in this article the scalar functions $F_0$ and $u_0$ will be defined on $\Sigma_0$ only, while in \eqref{ND1}-\eqref{ND3} they are defined on a whole spacetime.

\saut
The two aspects of Burnett's conjecture, the direct one and the indirect one, have been studied in several works and different settings. Assuming $\mathbb{U}(1)$ symmetry, Huneau and Luk address the direct conjecture by means of microlocal defect measure in \cite{Huneau2019} (see also \cite{Guerra2021}). Under the same symmetry, they construct high-frequency vacuum spacetimes converging to $N$ null dusts in \cite{Huneau2018a}. The first result without symmetry assumptions was obtained by Luk and Rodnianski in \cite{Luk2020}, where they address both sides of Burnett's conjecture in double null gauge. This choice of gauge restricts the class of kinetic spacetimes they consider to 2 null dusts. 

\saut
In a companion paper \cite{Touati2022a}, the author constructs high-frequency vacuum spacetimes in generalised wave gauge, paving the way to a proof of Burnett's conjecture in this gauge. This motivates the need for the high-frequency initial data solving \eqref{hamiltonian constraint general}-\eqref{momentum constraint general} provided by this article. The link between this article and \cite{Touati2022a} and its mathematical implications will be further discussed in Section \ref{section application spacetime}, after we state the main result in Section \ref{section main result}.

\saut
Finally, note that Burnett's conjecture and the backreaction it describes for the Einstein vacuum equations falls into the widest category of non-linear effects induced by homogenization, that is through the interaction of multiple scales. This type of effects can be encountered in the study of virtually all non-linear equations coming from physics. Examples are porous media, hydrodynamics, quantum mechanics etc. We refer to \cite{Tartar2009} for a rich presentation of the field of homogenization.

\subsection{The conformal method}

The constraint equations \eqref{hamiltonian constraint general}-\eqref{momentum constraint general} are underdetermined, and in order to solve them we use the conformal method. Introduced by Lichnerowicz in \cite{Lichnerowicz1944}, this method is based on a conformal formulation of the data $(\bar{g},K)$ and it identifies free parameters. In particular, this method transforms \eqref{hamiltonian constraint general}-\eqref{momentum constraint general} into a determined system of equations composed of a vectorial equation and a scalar equation. The idea giving its name to the method is to prescribe the conformal class of $\bar{g}$, i.e to fix a Riemannian metric $\gamma$ on $\R^3$ and to solve for the scalar function $\ffi$ such that
\begin{align}
\Bar{g} & =\ffi^4\gamma \label{g bar}.
\end{align}
We say that a tensor is a TT-tensor if it is traceless divergence free and symmetric and one can show that the space of TT-tensor depends only on the conformal class of the metric. Therefore, the next step in the conformal method is to decompose the symmetric 2-tensor $K$ in connection with its trace and divergence features. More precisely we fix a scalar function $\tau$ and a TT-tensor $\sigma$ for the metric $\gamma$ and solve for the vector field $W$ such that
\begin{align}
K & = \ffi^{-2}(\sigma+L_\gamma W)+\frac{1}{3}\ffi^4\gamma\tau\label{K}
\end{align}
where $L_\gamma W$ is defined in \eqref{def L W}. The exponents appearing in \eqref{g bar} and \eqref{K} are linked to the dimension of the manifold, here 3, see Chapter 6 of \cite{ChoquetBruhat2009} for their general expression. Once the parameters $\gamma$, $\tau$ and $\sigma$ are chosen, the constraint equations \eqref{hamiltonian constraint general}-\eqref{momentum constraint general} rewrite as the following coupled system of non-linear elliptic equations for $(\ffi,W)$:
\begin{align}
8\Delta_{\gamma}\ffi &= R(\gamma)\ffi+\frac{2}{3}\tau^2\ffi^5 - \left|\sigma+L_\gamma W\right|^2_{\gamma}\ffi^{-7}, \label{hamiltonian constraint}
\\ \dive_{\gamma}L_{\gamma}W & =\frac{2}{3}\ffi^6\d\tau \label{momentum constraint}.
\end{align}
A wealth of literature has been produced over the years on the construction of solutions to \eqref{hamiltonian constraint}-\eqref{momentum constraint} on various Riemannian manifolds. Note that if the mean curvature $\tau$ is constant (CMC setting), then \eqref{hamiltonian constraint} and \eqref{momentum constraint} decouple and the construction of solutions is simplified. The results on \eqref{hamiltonian constraint}-\eqref{momentum constraint} can thus be categorized into CMC, near CMC (where $\frac{\d \tau}{\tau}$ is small) or far from CMC results, where no assumption is made on $\tau$. On compact manifolds, CMC results are obtained in \cite{Isenberg1995}. Near CMC results on asymptotically Euclidean manifolds are obtained in \cite{ChoquetBruhat2000a}, where the authors also treat the constraint equations with matter. For far from CMC results, we refer the reader to \cite{Maxwell2009} and \cite{Dilts2014}, where the case of compact manifolds and asymptotically Euclidean manifolds are respectively treated.

\saut
Our main challenge is to adapt the conformal method to the high-frequency context, i.e choose the parameters so that the resulting solution $(\bar{g}_\la,K_\la)$ displays the behaviour \eqref{behaviour}. Not only will the solutions $(\ffi,W)$ be defined by high-frequency ansatz, but also the parameters $\gamma$, $\tau$ and $\sigma$. In particular, we will construct an oscillating TT-tensor, i.e solve $\dive_\gamma \sigma = 0$ and $\tr_\gamma \sigma=0$, with $\gamma$ itself oscillating.

\subsection{Preliminaries}

We fix here our notations and present the analytic setting of this article. 

\subsubsection{Geometric notations}

Throughout this article, the notation $\Sigma_0$ refers to the manifold $\R^3$. On $\Sigma_0$ we consider the usual Euclidean coordinates $x=(x^1,x^2,x^3)$, and we denote by $e$ the standard Euclidean metric, i.e
\begin{align*}
e= \left(\d x^1\right)^2 + \left(\d x^2\right)^2 + \left(\d x^3\right)^2.
\end{align*}
Latin indices are used for the Euclidean coordinates and therefore runs from 1 to 3. In this article, repeated indices are always summed over. A second frame adapted to the background structure will be defined in Section \ref{section bg metric}.

\saut
 If $f$ is a scalar function, we define its gradient by $\nabla f=(\dr_1 f ,  \dr_2 f , \dr_3 f)$.  If $h$ is a Riemannian metric on $\Sigma_0$ we define 
\begin{align*}
|\nabla f|_h^2 = h^{ij}\dr_i f \dr_j f.
\end{align*}
In the particular case of the Euclidean metric we simply write $|\nabla f|^2$.
Moreover if $T$ and $S$ are two symmetric 2-tensors on $\Sigma_0$, we define $| T\cdot S|_h=h^{ij}h^{k\ell}T_{ik}S_{j\ell}$ and $|T|^2_h=|T\cdot T|_h$. The trace of $T$ with respect to $h$ is defined by $\tr_h T=h^{ij}T_{ij}$.

\subsubsection{Function spaces and asymptotically Euclidean manifold}\label{section function spaces chapter 2}

If $x\in \R^3$ we set $\langle x  \rangle\vcentcolon = (1+|x|^2)^\half$ with $|x|=\sqrt{(x^1)^2 + (x^2)^2 + (x^3)^2}$. We define the following weighted Sobolev spaces on $\R^3$.

\begin{mydef}[Weighted Sobolev spaces]
For $1\leq p < + \infty$, $\delta\in\R$ and $k\in\N$ we define the space $W^{k,p}_\delta$ as the completion of $C^\infty_c$ for the norm
\begin{align*}
\l u\r_{W^{k,p}_\delta} & = \sum_{0\leq |\alpha|\leq k}\l  \langle x  \rangle^{\delta+|\alpha|} \nabla^\alpha u   \r_{L^p},
\end{align*}
where the $L^p$ norm is defined with the Euclidean volume element. We extend this definition to tensors of any type by summing over all components in the Euclidean coordinates.  Some special cases are $H^k_\delta \vcentcolon = W^{k,2}_\delta$ and $L^p_\delta \vcentcolon = W^{0,p}_\delta$.
\end{mydef}

We also define the following $L^\infty$-based spaces:
\begin{mydef}
For $k\in\N$ and $\delta\in\R$ we define $C^k_\delta$ as the completion of $C^\infty_c$ for the norm
\begin{align*}
\l u\r_{C^{k}_\delta} & = \sum_{0\leq |\alpha|\leq k}\l  \langle x  \rangle^{\delta+|\alpha|} \nabla^\alpha u   \r_{L^\infty},
\end{align*}
\end{mydef}
In the following proposition we recall some useful facts about these spaces (see \cite{ChoquetBruhat2009} for the proofs).
\begin{prop}\label{prop WSS chap 2}
Let $s,s',s_1,s_2,m\in\N$, $\delta,\delta',\delta_1,\delta_2,\beta\in\R$ and $1\leq p<+\infty$.
\begin{enumerate}
\item If $s\leq \min(s_1,s_2)$, $s<s_1+s_2 - \frac{3}{p}$ and $\delta<\delta_1+\delta_2+\frac{3}{p}$ we have the continuous embedding
\begin{align*}
W^{s_1,p}_{\delta_1}\times W^{s_2,p}_{\delta_2} \subset W^{s,p}_\delta.
\end{align*}
\item If $m<s-\frac{3}{p}$ and $\beta\leq\delta + \frac{3}{p}$ we have the continuous embedding
\begin{align*}
W^{s,p}_\delta \subset C^m_\beta.
\end{align*}
\end{enumerate}
\end{prop}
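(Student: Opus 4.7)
The plan is to establish both parts by reducing to the classical unweighted Sobolev inequalities via a dyadic annular decomposition of $\R^3$. Introduce the annuli $A_k \vcentcolon = \{2^{k-1} \leq |x| \leq 2^{k+1}\}$ for $k\geq 1$ and $A_0 \vcentcolon = \{|x|\leq 2\}$, together with a subordinate smooth partition of unity. On each $A_k$ one has $\langle x\rangle \sim R_k$ with $R_k \vcentcolon = 2^k$, and after the rescaling $v(y) \vcentcolon = u(R_k y)$ sending $A_k$ to the fixed reference annulus $A_1$, the change-of-variables formula gives $\|\langle x\rangle^{\delta+|\alpha|}\nabla^\alpha u\|_{L^p(A_k)} \sim R_k^{\delta+3/p}\|\nabla^\alpha v\|_{L^p(A_1)}$ for every multi-index $\alpha$. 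In particular the localization of $u$ to $A_k$ contributes $R_k^{p(\delta+3/p)}\|v\|_{W^{s,p}(A_1)}^p$ to $\|u\|_{W^{s,p}_\delta}^p$, up to multiplicative constants.

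For statement (1), I would apply the classical Sobolev product inequality $W^{s_1,p}(A_1)\cdot W^{s_2,p}(A_1) \subset W^{s,p}(A_1)$, which holds precisely under the regularity conditions $s\leq \min(s_1,s_2)$ and $s < s_1 + s_2 - 3/p$, to the rescaled pair on $A_1$. Undoing the scaling produces on each annulus the estimate $\|uv\|_{W^{s,p}_\delta(A_k)} \lesssim R_k^{\delta - \delta_1 - \delta_2 - 3/p}\|u\|_{W^{s_1,p}_{\delta_1}(A_k)}\|v\|_{W^{s_2,p}_{\delta_2}(A_k)}$. The hypothesis $\delta < \delta_1 + \delta_2 + 3/p$ makes the prefactor uniformly bounded (in fact decaying for large $k$), so bounding one factor by its global norm and $\ell^p$-summing the other over $k$ yields the continuous product embedding.

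For statement (2), I would apply the classical Sobolev embedding $W^{s,p}(A_1) \hookrightarrow C^m(A_1)$, valid when $m < s - 3/p$, to the rescaled function on each annulus. Unraveling the scaling yields $\|\langle x\rangle^{\beta+|\alpha|}\nabla^\alpha u\|_{L^\infty(A_k)} \lesssim R_k^{\beta - \delta - 3/p}\|u\|_{W^{s,p}_\delta(A_k)}$, which under the hypothesis $\beta \leq \delta + 3/p$ is controlled uniformly in $k$ by the global norm $\|u\|_{W^{s,p}_\delta}$. Taking the supremum over $k$ and over $|\alpha|\leq m$ gives the stated embedding.

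The main point that requires attention is the consistent derivative-by-derivative scaling: the $|\alpha|$-dependent weight in the definition of $W^{s,p}_\delta$ is engineered precisely so that every derivative of order $|\alpha|\leq s$ rescales with the same prefactor $R_k^{\delta+3/p}$, which is what makes the reduction to a single classical inequality on the reference annulus $A_1$ possible. Once that bookkeeping is set up, the whole argument reduces to the classical Sobolev inequalities on $A_1$ combined with the summability condition on $\delta$ (respectively the uniform boundedness condition on $\beta$). The near-origin piece $A_0$ is handled directly by the unweighted inequalities on a bounded domain, since $\langle x\rangle$ is bounded above and below there.
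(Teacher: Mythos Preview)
Your argument is correct and is the standard route to these weighted inequalities: localize to dyadic annuli, rescale to a fixed reference annulus so that the weight $\langle x\rangle^{\delta+|\alpha|}$ collapses to the single prefactor $R_k^{\delta+3/p}$, apply the classical (unweighted) Sobolev product and embedding theorems there, and then sum or take a supremum in $k$ using the hypotheses on $\delta$ and $\beta$. The scaling bookkeeping you describe is exactly right, and the bounded-overlap property of dyadic annuli (together with the scaled partition of unity satisfying $|\nabla^j\chi_k|\lesssim R_k^{-j}$) makes the localization harmless.

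The paper itself does not prove this proposition: it is stated as a recollection of standard facts with a reference to \cite{ChoquetBruhat2009} for the proofs. What you have written is essentially the argument one finds in that reference (or in the original sources on weighted Sobolev spaces on $\R^n$), so there is no meaningful difference in approach to compare.
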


The background metric and the solution of the constraint equations we will produce are asymptotically Euclidean, meaning that they converge in some sense to the Euclidean metric at infinity. We give the definition of \cite{ChoquetBruhat2000a}.

\begin{mydef}[Asymptotically Euclidean initial data]\label{def asymp}
Let $\bar{g}$ be a metric on $\R^3$ and $K$ a symmetric 2-tensor on $\R^3$.  If $k>\frac{5}{2}$ and $\delta>-\frac{3}{2}$, we say that the pair $(\bar{g} , K)$ is $H^k_\delta$ asymptotically flat if
\begin{align*}
\bar{g}-e \in H^k_\delta \quad \text{and} \quad K \in H^{k-1}_{\delta+1}.
\end{align*}
\end{mydef}
Note that the restrictions on $k$ and $\delta$ in the previous definition ensure that $\bar{g}$ is $C^1$ and $\bar{g}-e$ tends to 0 at infinity, since Proposition \ref{prop WSS chap 2} implies $H^k_\delta \xhookrightarrow{} C^1_{\delta+\frac{3}{2}}$.

\begin{remark}\label{remark ADM 1}
While the assumptions in Definition \ref{def asymp} are adapted to the inversion of elliptic operators, the decay assumption $\delta>-\frac{3}{2}$ is too weak to be able to define the ADM mass of $\bar{g}$. Introduced in \cite{Arnowitt2008}, the ADM mass of an asymptotically Euclidean manifold is defined by
\begin{align*}
M(\bar{g})= \frac{1}{16\pi} \lim_{R\to +\infty} \int_{\dr B_R} \sum_{i,j}\left( \dr_i \bar{g}_{ij} - \dr_j \bar{g}_{ii}  \right) \nu_j \d\mu ,
\end{align*}
where $\nu$ is the outgoing normal to the spheres $\dr B_R$ and $\d\mu$ the measure on these spheres. Since the volume of the spheres $\dr B_R$ grows like $R^2$, we need to ask for $\delta\geq-\half$ in Definition \ref{def asymp} in order to be able to define $M(\bar{g})$ in general. See also Remark \ref{remark ADM 2}.
\end{remark}

\subsubsection{Elliptic estimates}\label{section elliptic estimates}

Solving the constraint equations with the conformal method requires to invert the elliptic operators $\Delta_\gamma$ and $\dive_\gamma L_\gamma$, for $\gamma$ a Riemannian metric. The first one is the Laplace-Beltrami operator acting on scalar functions and the second one is the conformal Laplacian acting on vector field, see Section \ref{section exp diff op} for their exact definition. The metric $\gamma$ will be defined in Section \ref{section def gamma} but we can already say that it will be close to a background metric $\bar{g}_0$, itself close to the Euclidean metric $e$ on $\R^3$. We will benefit from this fact and invert $\Delta=\Delta_e$ and $\dive_e L_e$ rather than $\Delta_\gamma$ and $\dive_\gamma L_\gamma$, which have oscillating coefficients. The following proposition gives the desired inversion properties. The proof of its first part can be found in \cite{McOwen1979} while the second part is proved in \cite{ChoquetBruhat2000a}.

\begin{prop}\label{prop Delta et dive e L e}
If $-\frac{3}{2}<\delta< -\half$ then $\Delta : H^2_\delta \longrightarrow L^2_{\delta+2}$ and $\dive_eL_e : H^2_\delta \longrightarrow L^2_{\delta+2}$ are isomorphisms.
\end{prop}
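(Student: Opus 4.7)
Both operators are second-order elliptic on $\R^3$ with the same principal symbol as $\Delta$, and the plan is to prove the isomorphism property for the Laplacian first and then reduce the vector operator to it via the pointwise identity
$$
\dive_e L_e W = \Delta W + \tfrac{1}{3}\nabla\dive_e W,
$$
obtained by expanding $L_e W_{ij}=\dr_i W_j+\dr_j W_i-\tfrac{2}{3}\delta_{ij}\dr_k W^k$. The range $\delta\in(-\frac{3}{2},-\frac{1}{2})$ is precisely the non-critical one, excluding the two harmonic homogeneities on $\R^3$ that would obstruct the isomorphism: at $\delta = -\frac{3}{2}$ constants would belong to $L^2_\delta$, and at $\delta = -\frac{1}{2}$ the fundamental solution $|x|^{-1}$ would appear.

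For $\Delta:H^2_\delta\to L^2_{\delta+2}$, continuity is immediate from the definition of the weighted norms, since each derivative shifts the weight by one. For injectivity, Proposition \ref{prop WSS chap 2} yields $H^2_\delta\hookrightarrow C^0_{\delta+3/2}$, and since $\delta+\frac{3}{2}>0$ any $u$ in the kernel tends to $0$ at infinity; Liouville's theorem then forces $u\equiv 0$. For surjectivity, I would invert $\Delta$ by convolution with the fundamental solution,
$$
u(x)=-\frac{1}{4\pi}\int_{\R^3}\frac{f(y)}{|x-y|}\,\d y,
$$
and show $u\in H^2_\delta$ with $\|u\|_{H^2_\delta}\lesssim\|f\|_{L^2_{\delta+2}}$. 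The top-order bound $\|\nabla^2 u\|_{L^2_{\delta+2}}\lesssim\|f\|_{L^2_{\delta+2}}$ comes from the boundedness of the Riesz kernels $\dr_i\dr_j(-\Delta)^{-1}$ on weighted $L^2$; the lower-order bounds $\|u\|_{L^2_\delta}$ and $\|\nabla u\|_{L^2_{\delta+1}}$ are obtained from weighted Hardy-type inequalities combined with a dyadic decomposition of the convolution into the regions $|y|<|x|/2$, $|x|/2\leq|y|\leq 2|x|$, $|y|>2|x|$, using the explicit decay of the Newtonian kernel.

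For $\dive_e L_e:H^2_\delta\to L^2_{\delta+2}$, continuity is immediate from the identity above. Injectivity reduces to the scalar case: if $\dive_e L_e W=0$, applying $\dive_e$ yields $\frac{4}{3}\Delta\dive_e W=0$, and $\dive_e W$ decays at infinity by Proposition \ref{prop WSS chap 2}, so Liouville gives $\dive_e W=0$; then $\Delta W_i=0$ componentwise with $W_i$ decaying forces $W=0$. For surjectivity, a Helmholtz-type decomposition $W=W_\perp+\nabla\phi$ with $\dive_e W_\perp=0$ reduces the problem to a scalar biharmonic equation $\Delta^2\phi$ proportional to $\dive_e f$ (solved by iterating the scalar inverse in suitable weighted spaces, shifting $\delta$ by integers) together with componentwise Laplacian inversions for $W_\perp$.

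The main obstacle is the weighted elliptic estimate $\|u\|_{H^2_\delta}\lesssim\|\Delta u\|_{L^2_{\delta+2}}$ underlying surjectivity. While the top-order Calder\'on--Zygmund control on weighted $L^2$ is standard, the non-local control of $\|u\|_{L^2_\delta}$ by $\|\Delta u\|_{L^2_{\delta+2}}$ depends genuinely on the homogeneity of the Newtonian kernel matching the excluded endpoints $\delta=-\frac{3}{2}$ and $\delta=-\frac{1}{2}$; this is where the sharpness of the weight range really enters, and it is the step that one would verify carefully by following the explicit computations in \cite{McOwen1979}.
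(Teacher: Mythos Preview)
The paper does not prove this proposition: it is stated as a known result, with the scalar Laplacian case attributed to \cite{McOwen1979} and the conformal Laplacian case to \cite{ChoquetBruhat2000a}. So your sketch is already doing more than the paper does, and the scalar part is correctly outlined (continuity is trivial, injectivity is Liouville, surjectivity is the weighted mapping property of the Newtonian potential --- which, as you rightly say, is the heart of \cite{McOwen1979}).

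There is however a genuine gap in your surjectivity argument for $\dive_e L_e$. You propose to solve the biharmonic equation $\tfrac{4}{3}\Delta^2\phi=\dive_e f$ by ``iterating the scalar inverse in suitable weighted spaces, shifting $\delta$ by integers''. This cannot work as stated: the isomorphism range $-\tfrac{3}{2}<\delta<-\tfrac{1}{2}$ has width exactly $1$, so any integer shift of $\delta$ lands outside it. Concretely, after one application of $\Delta^{-1}:L^2_{\delta+2}\to H^2_\delta$, the next inversion would require $\Delta:H^2_{\delta-2}\to L^2_\delta$ (or the analogue at regularity level $H^1$), i.e.\ the weight parameter $\delta-2$ or $\delta+1$ to lie in $(-\tfrac{3}{2},-\tfrac{1}{2})$, which it does not. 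The same obstruction blocks the Helmholtz decomposition of $f$ itself.

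A clean fix is to avoid the iteration entirely. Writing the Fourier inverse of $\dive_e L_e$ as $(-\Delta)^{-1}\bigl(I-\tfrac{1}{4}P\bigr)$ with $P=\nabla(-\Delta)^{-1}\dive$ the Leray/Riesz projector, the factor $I-\tfrac{1}{4}P$ is a Calder\'on--Zygmund operator of order zero and hence bounded on $L^2_{\delta+2}$ (the weight $\langle x\rangle^{2(\delta+2)}$ is $A_2$ for $\delta+2\in(\tfrac{1}{2},\tfrac{3}{2})$); then a \emph{single} componentwise application of the scalar inverse $\Delta^{-1}:L^2_{\delta+2}\to H^2_\delta$ suffices. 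Alternatively --- and this is closer to what \cite{ChoquetBruhat2000a} actually do --- one can use the variational identity $\int W\cdot\dive_e L_e W=-\tfrac{1}{2}\int|L_eW|^2$ to identify the kernel with decaying conformal Killing fields (there are none) and then invoke the Fredholm theory for elliptic systems on weighted spaces directly, without reducing to the scalar case at all.
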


\subsubsection{High-frequency notations}\label{section HF notations}

In this article we consider high-frequency quantities, i.e tensors of all types including scalar functions, metrics, 1-forms etc.  A quantity is said to be high-frequency if it admits an expansion in powers of the small parameter $\lambda$ with coefficients of the form
\begin{align}
\mathrm{T} \left( \frac{u_0}{\lambda}\right) f \label{T u0 lambda f}
\end{align}
where $f$ depends only on $x\in\Sigma_0$ and $\mathrm{T}$ is an oscillating function, i.e an element of
\begin{align}
\enstq{\theta\in \R\longmapsto\sin(k\theta)}{k\in\N} \cup \enstq{\theta\in \R\longmapsto\cos(k\theta)}{k\in\N} .\label{ensemble fcts trigo}
\end{align}
When considering a high-frequency quantity such as a tensor $S$, we denote by $S^{(i)}$ the coefficients of $\lambda^i$ in the expansion defining $S$, which thus expands formally as
\begin{align*}
S = \sum_{i\in\Z}\lambda^i S^{(i)}.
\end{align*}
Note that $S^{(i)}$ is a tensor of the same type as $S$. Moreover,  if $j\in \Z$ we define $S^{(\geq j)}$ by $S^{(\geq j)} = \sum_{k\geq j} \lambda^{k-j}S^{(k)}$.  This allows us to clearly troncate high-frequency expansions at a fixed order as in
\begin{align*}
S= \sum_{k\leq j-1}\lambda^k S^{(k)} + \lambda^j S^{(\geq j)}.
\end{align*}
To emphasize the fact that a high-frequency coefficient $S^{(i)}$ of a tensor $S$ is oscillating we will often write $S^{(i)}\left(\frac{u_0}{\lambda}\right)$ instead of just $S^{(i)}$.

\saut
In order to describe concisely the \textit{oscillating behaviour} of a high-frequency coefficient $S^{(i)}$ of a tensor $S$, we write for $\mathcal{A}$ a finite subset of \eqref{ensemble fcts trigo}
\begin{align*}
S^{(i)} \simf \sum_{\mathrm{T}\in \mathcal{A}} \mathrm{T}(\theta)
\end{align*}
if there exists tensors $(S^{(i)}_{\mathrm{T}})_{\mathrm{T}\in \mathcal{A}}$ of the same type as $S^{(i)}$ such that
\begin{align*}
S^{(i)} = \sum_{\mathrm{T}\in \mathcal{A}} \mathrm{T} \left( \frac{u_0}{\lambda}\right) S^{(i)}_{\mathrm{T}}.
\end{align*}
This notation allows us to compute the oscillating behaviour of non-linear quantities without caring too much about the non-oscillating coefficients $S^{(i)}_{\mathrm{T}}$. Note that $S^{(i)}\simf 1$ simply means that $S^{(i)}$ is non-oscillating, i.e does not depend on $\frac{u_0}{\lambda}$.  

\saut
In terms of derivation, we use the symbol $\theta$ for the derivation with respect to the $\frac{u_0}{\lambda}$ variable. For example, if $f$ is a scalar function and if $g=\mathrm{T} \left( \frac{u_0}{\lambda}\right) f$ then 
\begin{align*}
\dr_\theta g & = \mathrm{T}' \left( \frac{u_0}{\lambda}\right) f.
\end{align*}

\subsection{Acknowledgment}

The author would like to thank Caterina Vâlcu for all the helpful discussions.

\section{Statement of the results}\label{section statement result chap 2}

In this section, we give the assumptions on the background, state our main result and discuss its main application.

\subsection{The background}\label{section bg metric}

The background metric and 2-tensor $(\bar{g}_0,K_0)$ solve \eqref{hamiltonian background}-\eqref{maximal} with $F_0$ and $u_0$ two scalar functions defined on $\R^3$.  The full background solution is then $(\bar{g}_0, K_0,F_0,u_0)$ and we make some assumptions on it.
\begin{itemize}
\item \textbf{Assumptions on $(\bar{g}_0,K_0)$.} Even though $(\bar{g}_0,K_0)$ satisfies non-vacuum constraint equations, the sources are compactly supported (see \eqref{support F0} below) and we assume that $(\bar{g}_0,K_0)$ corresponds to asymptotically Euclidean and highly regular initial data. By this we mean that there exists $\delta>-\frac{3}{2}$, a large integer $N\geq 10$ and $\e>0$ such that
\begin{align}
\l \bar{g}_0 - e \r_{H^{N+1}_\delta} + \l K_0 \r_{H^{N}_{\delta+1}} \leq \e.\label{estim g0 K0}
\end{align}
We denote by $\Db$ the covariant derivative associated to $\bar{g}_0$.
\item \textbf{Assumptions on $F_0$.} The density $F_0$ is supported in a ball of size $R>0$ in $\R^3$, i.e
\begin{align}
\mathrm{supp}(F_0)\subset B_R\vcentcolon =\enstq{x\in\R^3}{  |x|\leq R  }.\label{support F0}
\end{align}
It also enjoys some regularity:
\begin{align}
\l F_0 \r_{H^{N}} \leq \e\label{estim F0 chap 2}
\end{align}
where $\e$ is defined above. 
\item \textbf{Assumptions on $u_0$.} There exists a constant non-zero vector field $\mathfrak{z}=(\mathfrak{z}_1,\mathfrak{z}_2,\mathfrak{z}_3)$ such that 
\begin{align}
\l \nabla u_0 - \mathfrak{z}  \r_{H^{N}_{\delta+1}}\leq \e\label{estim u0}
\end{align}
where $\nabla u_0 = (\dr_1 u_0, \dr_2 u_0, \dr_3 u_0)$ is the euclidean gradient of $u_0$. By taking $\e$ small enough in \eqref{estim u0} we can assume that $|\nabla u_0|$ is uniformly bounded from below, which implies that $u_0$ has no critical points. Moreover, the level hypersurfaces of $u_0$ defined by
\begin{align*}
P_{0,u}=\enstq{x\in \R^3}{u_0(x)=u}
\end{align*}
foliates $\R^3$ and have the topology of planes thanks to \eqref{estim u0}. This allows us to define a particular frame at each point of $\R^3$. We define the vector field
\begin{align*}
N_0 = - \frac{\bar{g}_0^{ij}\dr_i u_0 \dr_j}{| \nabla u_0 |_{\bar{g}_0}}.
\end{align*}
It satisfies $\bar{g}_0(N_0,N_0)=1$ and is orthogonal to the hypersurface $P_{0,u}$. We consider at each point $x\in\R^3$ an orthonormal basis $(e_\1,e_\2)$ of $T_xP_{0,u}$ for the metric $\bar{g}_0$. The frame $(N_0,e_\1,e_\2)$ will play a crucial role in our construction. While we reserve the usual latin indices for the coordinates system $(x^1,x^2,x^3)$, i.e $i,j\in\{1,2,3\}$, the bold latin indices are used for the frame $(e_\1,e_\2)$, i.e $\mathbf{i}, \mathbf{j} \in \{ \1,\2\}$.
\end{itemize}
In this article, we don't prove that a background solution $(\bar{g}_0,K_0,F_0,u_0)$ solving \eqref{hamiltonian background}-\eqref{maximal} and satisfying the above assumptions exists. We refer to \cite{ChoquetBruhat2000a} for the details of how one can solve the constraint equations with sources in the asymptotically Euclidean setting.

\subsection{Solving the constraint equations}\label{section main result}

The following theorem is the main result of this article.

\begin{thm}\label{theo main chap 2}
Let $(\bar{g}_0,K_0,F_0,u_0)$ be the solution of the maximal constraint equations coupled with a null dust described in Section \ref{section bg metric}, and let $\e>0$ be the smallness threshold. There exists $\e_0=\e_0(\delta,R)>0$ such that if $0<\e\leq \e_0$, there exists for all $\lambda\in (0,1]$ a solution $(\bar{g}_\lambda,K_\lambda)$ solution of the vacuum constraint equations \eqref{hamiltonian constraint general}-\eqref{momentum constraint general} on $\R^3$ of the form
\begin{align}
\bar{g}_\lambda & = \bar{g}_0 + \lambda \cos\left( \frac{u_0}{\lambda}\right) \bar{F}^{(1)} + \lambda^2 \left( \sin\left( \frac{u_0}{\lambda}\right) \bar{F}^{(2,1)} + \cos\left( \frac{2u_0}{\lambda}\right) \bar{F}^{(2,2)} \right) + \lambda^2 \bar{\h}_\lambda,\label{g bar theo chap 2}
\\ K_\lambda & = K^{(0)}_\lambda + \lambda K^{(1)}_\lambda + \lambda^2 K^{(\geq 2)}_\lambda, \label{K lambda theo chap 2}
\end{align}
with
\begin{align}
K^{(0)}_\lambda & = K_0 + \half \sin\left( \frac{u_0}{\lambda}\right) |\nabla u_0|_{\bar{g}_0} \bar{F}^{(1)},\label{Klambda 0}
\\ K^{(1)}_\lambda & = \cos\left(\frac{u_0}{\lambda}\right)K^{(1,1)} + \sin\left(\frac{2u_0}{\lambda}\right)K^{(1,2)}.  \label{Klambda 1}
\end{align}
Moreover:
\begin{itemize}
\item[(i)] the tensors $\bar{F}^{(1)}$, $\bar{F}^{(2,1)}$ and $\bar{F}^{(2,2)}$ are supported in $\{|x|\leq R\}$ and there exists $C_{\mathrm{cons}}=C_{\mathrm{cons}}(\delta,R)>0$ such that
\begin{align*}
\l \bar{F}^{(1)} \r_{H^N} + \l \bar{F}^{(2,1)}  \r_{H^{N-1}} + \l \bar{F}^{(2,2)} \r_{H^{N-1}} & \leq C_{\mathrm{cons}} \e,
\end{align*}
\item[(ii)] the tensor $\bar{F}^{(1)}$ is $\bar{g}_0$-traceless, tangential to $P_{0,u}$ and satisfies
\begin{align}
\left| \bar{F}^{(1)} \right|^2_{\bar{g}_0} = 8 F_0^2 \label{energie condition theo chap 2},
\end{align}
\item[(iii)] $K^{(1,1)}$ and $K^{(1,2)}$ are given by
\begin{align}
K^{(1,1)} & =   \half N_0\bar{F}^{(1)}_{ij} + \half \left( \bar{g}_0^{k\ell}(K_0)_{(i\ell} - N_0^\ell \Gamma(\bar{g}_0)^k_{\ell (i} \right) \bar{F}^{(1)}_{j)k} \label{K 11 theo}
\\&\quad  + \frac{1}{2|\nabla u_0|_{\bar{g}_0}} \left(  K_0^{k\ell}   \dr_k u_0 \dr_\ell u_0  +   \bar{g}_0^{k\ell} \dr_k u_0  \dr_\ell |\nabla u_0|_{\bar{g}_0}  - \half \bar{g}_0^{k\ell}\dr_k\dr_\ell u_0 \right) \bar{F}^{(1)}_{ij}  \nonumber
\\&\quad  -\half   \left|\nabla u_0 \right|_{\bar{g}_0} \bar{F}^{(2,1)},\nonumber
\\ K^{(1,2)} & =  \left|\nabla u_0 \right|_{\bar{g}_0} \bar{F}^{(2,2)},\label{K 12 theo}
\end{align}
\item[(iv)] the tensors $\bar{\h}_\lambda$ and $K^{(\geq 2)}_\lambda$ belong to the spaces $H^5_\delta$ and $H^4_{\delta+1}$ respectively and satisfy
\begin{align}
\max_{ r\in\llbracket 0,4\rrbracket} \lambda^r \l \nabla^{r+1} \bar{\h}_\lambda \r_{L^2_{\delta+r+1}}&  \leq C_{\mathrm{cons}} \e ,\label{estim h bar}
\\ \max_{r\in\llbracket 0,4\rrbracket } \lambda^r \l \nabla^r K_\lambda^{(\geq 2)} \r_{L^2_{\delta+r+1}}&  \leq C_{\mathrm{cons}} \e . \label{estim K geq 2}
\end{align}
\end{itemize}
\end{thm}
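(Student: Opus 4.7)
The plan is to apply the conformal method from Section 1.2 with suitably chosen oscillating parameters $(\gamma, \tau, \sigma)$, and to solve for $(\ffi, W)$ via a high-frequency ansatz whose leading orders reproduce the explicit decompositions \eqref{g bar theo chap 2}-\eqref{K lambda theo chap 2}. The construction is largely reverse-engineered: I would normalize by taking $\gamma_0 = \bar{g}_0$ and $\ffi_0 = 1$, and postulate formal expansions in powers of $\lambda$ for $\gamma$, $\ffi$, $W$, $\tau$ and $\sigma$ in the $\simf$ calculus of Section 1.3.4, choosing trigonometric modes in their coefficients so that $\ffi^4 \gamma$ and $\ffi^{-2}(\sigma + L_\gamma W) + \frac{1}{3}\ffi^4 \gamma \tau$ produce the prescribed oscillating structure of $\bar{g}_\lambda$ and $K_\lambda$. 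A separate preliminary ingredient is the explicit construction of the oscillating TT-tensor $\sigma$ satisfying $\dive_\gamma \sigma = 0$ and $\tr_\gamma \sigma = 0$ with respect to the oscillating $\gamma$ itself.

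Substituting these expansions into the Lichnerowicz system \eqref{hamiltonian constraint}-\eqref{momentum constraint} and identifying coefficients of $\lambda^k \mathrm{T}(u_0/\lambda)$ produces, at each order and each trigonometric mode, an equation. The non-oscillating part at order $\lambda^0$ must reproduce, up to a backreaction term generated by squaring the principal oscillation $\cos(u_0/\lambda)\bar{F}^{(1)}$ in $R(\gamma)$ and in $|\sigma+L_\gamma W|_\gamma^2$, the null-dust constraint equations \eqref{hamiltonian background}-\eqref{maximal} for $(\bar{g}_0, K_0, F_0, u_0)$. The condition $|\bar{F}^{(1)}|^2_{\bar{g}_0} = 8 F_0^2$ of item (ii) should emerge precisely as the algebraic compatibility identity equating this geometric backreaction with the null-dust source term. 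The oscillating modes at the lowest orders then yield algebraic relations fixing $\bar{F}^{(2,1)}$, $\bar{F}^{(2,2)}$, $K^{(1,1)}$ and $K^{(1,2)}$ in terms of $\bar{F}^{(1)}$ and background quantities, recovering the formulas \eqref{K 11 theo}-\eqref{K 12 theo}.

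Once the formal ansatz has been truncated at sufficiently high order so that the residual in \eqref{hamiltonian constraint}-\eqref{momentum constraint} is formally of size $\lambda^2$ in the relevant norms, the remainders $\bar{\h}_\lambda$ and $K^{(\geq 2)}_\lambda$ satisfy a coupled nonlinear elliptic system. Following the strategy announced in Section 1.3.3, I would rewrite this system by inverting the Euclidean operators $\Delta$ and $\dive_e L_e$, whose isomorphism properties $H^2_\delta \to L^2_{\delta+2}$ are guaranteed by Proposition \ref{prop Delta et dive e L e} for $-\frac{3}{2}<\delta<-\half$, and absorb the difference with $\Delta_\gamma$ and $\dive_\gamma L_\gamma$ into perturbative source terms. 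A Banach fixed-point argument in a ball of radius $O(\e)$ in $H^5_\delta \times H^4_{\delta+1}$, with smallness threshold $\e_0(\delta,R)$ coming from the norm of the inverse operators and the embedding constants of Proposition \ref{prop WSS chap 2}, should close the construction and deliver the bounds \eqref{estim h bar}-\eqref{estim K geq 2}.

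The main obstacle is controlling the high-frequency remainders uniformly in $\lambda$: every derivative applied to an oscillating quantity costs a factor $\lambda^{-1}$, which is exactly what the $\lambda$-weighted norms on the left-hand sides of \eqref{estim h bar}-\eqref{estim K geq 2} absorb. Balancing these losses requires carrying the formal expansion far enough that no negative power of $\lambda$ survives in the remainder equations, carefully arranging the oscillating TT-tensor $\sigma$ so that its $\gamma$-divergence-freeness does not introduce uncancelled $\lambda^{-1}$ source terms in the momentum constraint, and systematically exploiting averaging (the vanishing of integrals of most trigonometric monomials) to cancel the principal oscillating contributions produced by the non-linear terms $|\sigma+L_\gamma W|_\gamma^2 \ffi^{-7}$ and $R(\gamma)\ffi$. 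These cancellations, combined with the construction of an oscillating TT-tensor for an oscillating metric, form the technical heart of the proof.
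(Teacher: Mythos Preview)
Your proposal is correct and follows essentially the same strategy as the paper: oscillating conformal data $(\gamma,\tau,\sigma)$, a high-frequency ansatz for $(\ffi,W)$, order-by-order determination of the explicit tensors via the background null-dust constraints, and closure of the remainders by a contraction mapping built on the Euclidean operators of Proposition~\ref{prop Delta et dive e L e}. The only noteworthy implementation differences are that the paper runs \emph{two} separate fixed points---first for an auxiliary vector field $Y$ to upgrade the almost-TT tensor $\sigma^{(0)}+\lambda\sigma^{(1)}$ to an exact TT-tensor, then for $(\tffi,\tW)$---and that both contractions are performed in $H^2_\delta$ with the higher regularity \eqref{estim h bar}--\eqref{estim K geq 2} recovered afterwards by elliptic bootstrap, rather than working directly in the $\lambda$-weighted $H^5_\delta\times H^4_{\delta+1}$ spaces.
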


Note that in Theorem \ref{theo main chap 2}, the only free data are the background quantities $(\bar{g}_0,K_0,F_0,u_0)$. The tensors $\bar{F}^{(1)}$, $\bar{F}^{(2,1)}$ and $\bar{F}^{(2,2)}$ are obtained in the proof and are determined by $(\bar{g}_0,K_0,F_0,u_0)$. The only freedom lies in the choice of the polarization of $\bar{F}^{(1)}$, see the definition of $\omega^{(1)}$ in Section \ref{section def gamma}.

\begin{remark}\label{remark ADM 2}
Note that if the background metric $\bar{g}_0$ satisfies a stronger decay assumption, one can define its ADM mass (see Remark \ref{remark ADM 1}). In this case, it is also possible to show the convergence of the ADM mass of $\bar{g}_\la$ towards the ADM mass of the background when $\la$ tends to 0. Indeed, the high-frequency perturbations in \eqref{g bar theo chap 2} are compactly supported and only have an impact at infinity through the remainder $\bar{\h}_\lambda$, for which we can show enough control in terms of decay and when $\la$ tends to 0.
\end{remark}

\subsection{Application to high-frequency vacuum spacetimes}\label{section application spacetime}

Even though Theorem \ref{theo main chap 2} is formulated in a purely elliptic way, its main purpose is to provide spacelike initial data for the high-frequency solutions to \eqref{EVE chap 2} constructed in \cite{Touati2022a} in generalised wave gauge. More precisely, in this article we construct a family of Lorentzian metric $(g_\la)_{\la\in (0,1]}$ of the form
\begin{align}
g_\lambda = g_0 + \lambda g^{(1)} \left(\frac{u_0}{\lambda}\right) + \lambda^2 g^{(2)} \left(\frac{u_0}{\lambda}\right) + \GO{\lambda^2}   \label{ansatz g rough}
\end{align}
where the $g^{(i)}$ are periodic and smooth functions of their argument $\frac{u_0}{\lambda}$ and where $(g_0,F_0,u_0)$ is a given solution of the Einstein-null dust system \eqref{ND1}-\eqref{ND3} with initial data on $\Sigma_0$ given by $(\bar{g}_0, K_0, F_0, u_0)$ described in Section \ref{section bg metric}. In this section, let us discuss how it affects the construction of Theorem \ref{theo main chap 2}.

\saut
Since $\bar{g}_\la$ is the induced metric of $g_\la$ on $\Sigma_0$, the oscillating behaviours of $g^{(1)}$ and $g^{(2)}$ in \cite{Touati2022a} explains the high-frequency expansion of $\bar{g}_\la$ in \eqref{g bar theo chap 2}. In particular, we have $g^{(1)}=\cos\left(\frac{u_0}{\la}\right)F^{(1)}$ with $F^{(1)}\restriction{\Sigma_0}=\bar{F}^{(1)}$. The fact that $\bar{F}^{(1)}$ is $\bar{g}_0$-traceless and tangential to $P_{0,u}$ translates as an initial polarization condition for $F^{(1)}$ equivalent to $R_{\mu\nu}(g_\la)=\GO{\la^0}$.

\begin{remark}\label{remark TT gauge}
These conditions correspond exactly to the definition of the TT gauge in the linearized gravity setting for a plane wave propagating in the $N_0$ direction. Note that $N_0$ is not a constant vector field, therefore strictly speaking a plane wave can't propagate in the $N_0$ direction. However, as \eqref{estim g0 K0} and \eqref{estim u0} show, $N_0$ is close to the $\mathfrak{z}$ direction and the analogy with the TT gauge of linearized gravity is thus valid.
\end{remark}

In \cite{Touati2022a}, these polarization conditions are propagated by the transport equation that $F^{(1)}$ must satisfy in the spacetime so that $R_{\mu\nu}(g_\la)=\GO{\la^1}$, namely
\begin{align}
-2\D_{L_0} F^{(1)} + (\Box_{g_0}u_0) F^{(1)} = 0,\label{transport F1 spacetime}
\end{align}
where $\D$ is the covariant derivative associated to $g_0$ and $L_0=-g_0^{\mu\nu}\dr_\mu u_0 \dr_\nu$ is the spacetime gradient of $u_0$. Moreover, in \cite{Touati2022a} we assume without loss of generality that $\dr_t$ is the unit normal to $\Sigma_0$ for $g_0$ and that $u_0$ solves the eikonal equation (see \eqref{ND2}), which thus give $\dr_t u_0 = |\nabla u_0|_{\bar{g}_0}$ and $L_0 = |\nabla u_0|_{\bar{g}_0}(\dr_t + N_0)$ on $\Sigma_0$. This also implies that on $\Sigma_0$ we have
\begin{align}
\dr_t \left( \la g^{(1)}_{ij} \right) = -\sin\left(\frac{u_0}{\lambda}\right) |\nabla u_0|_{\bar{g}_0} F^{(1)}_{ij}  + \lambda \cos\left(\frac{u_0}{\lambda}\right) \dr_t F^{(1)}_{ij} \label{dt g1 initialement}
\end{align}
where $\dr_t F^{(1)}_{ij}$ on $\Sigma_0$ is directly given by the first order transport equation \eqref{transport F1 spacetime}. Note in particular that thanks to our previous geometric assumption on $\dr_t$ and $u_0$ on $\Sigma_0$ and thanks to the wave gauge condition that $g_0$ is also assumed to satisfy in the spacetime, we can completely rewrite $\dr_t F^{(1)}_{ij}\restriction{\Sigma_0}$ given by \eqref{transport F1 spacetime} in terms of $\bar{g}_0$, $K_0$ and $\bar{F}^{(1)}$. For instance, we can show that
\begin{align*}
-\half\Box_{g_0}u_0\restriction{\Sigma_0} = (K_0)^{k\ell}   \dr_k u_0 \dr_\ell u_0  +   \bar{g}_0^{k\ell} \dr_k u_0  \dr_\ell |\nabla u_0|_{\bar{g}_0}  - \half \bar{g}_0^{k\ell}\dr_k\dr_\ell u_0.
\end{align*}
A similar treatment can be applied to all the terms in \eqref{transport F1 spacetime}. Now, on the one hand, $K_\la$ must be the second fundamental form $-\half \mathcal{L}_{T_\la} g_\la$ of $\Sigma_0$, where $T_\la$ is the unit normal to $\Sigma_0$ for $g_\la$. On the other hand, we choose data in \cite{Touati2022a} such that $T_\la = \dr_t + \GO{\la^2}$. Therefore, the oscillating parts of $K^{(0)}_\la$ and $K^{(1)}_\la$ must be consistent with \eqref{dt g1 initialement}, which explains expressions \eqref{Klambda 0} and \eqref{K 11 theo}.  

\begin{remark}\label{remark redundancy}
This entanglement between the construction of the data and the evolution equations in the spacetime is characteristic of the present high-frequency setting. Indeed, the spacelike constraint equations give two initial data, roughly $g\restriction{\Sigma_0}$ and $\dr_t g \restriction{\Sigma_0}$, while the oscillating terms in the spacetime ansatz satisfy first order transport equations which only require $g\restriction{\Sigma_0}$ and prescribe, in part, $\dr_t g \restriction{\Sigma_0}$.
\end{remark}

Finally, note that the use of Sobolev spaces instead of Hölder spaces in Theorem \ref{theo main chap 2} also comes from the application to \cite{Touati2022a}, since Sobolev spaces are more suited to non-linear wave equations. Similarly, the constraint $N\geq 10$ allows us to perform a bootstrap argument for the remainder in \cite{Touati2022a} using only $L^2-L^\infty$ estimates.

\section{Strategy of proof}

In this section, we present the strategy of the proof of Theorem \ref{theo main chap 2}, which consists mainly in an adaptation of the conformal method to the high-frequency setting.

\begin{remark}
From now on and until Section \ref{section conclusion chap 2} we drop the index $\lambda$ in the solution $(\bar{g}_\lambda,K_\lambda)$ obtained in Theorem \ref{theo main chap 2} and simply write $(\bar{g},K)$.
\end{remark}

\subsection{High-frequency expansion of the parameters}\label{section HF parameters unknowns}

We start by giving and motivating the expansions of the parameters of the conformal method, i.e the metric $\gamma$, the scalar function $\tau$ and the TT-tensor $\sigma$.

\saut
Since the metric $\bar{g}$ obtained in Theorem \ref{theo main chap 2} will ultimately inherit its properties from the metric $\gamma$ defining the conformal class, $\gamma$ needs needs to fulfill two requirements: it needs to oscillate and to be close to $\bar{g}_0$. Therefore, we define $\gamma$ by
\begin{align}
\gamma & = \bar{g}_0 + \lambda \gamma^{(1)}\left( \frac{u_0}{\lambda}\right) + \lambda^2\gamma^{(2)}\left( \frac{u_0}{\lambda}\right),\label{ansatz gamma}
\end{align}
where we recall that the notation $\gamma^{(i)}\left( \frac{u_0}{\lambda}\right)$ is used to emphasize the fact that $\gamma^{(i)}$ is a linear combination of terms of the form \eqref{T u0 lambda f}. See Section \ref{section def gamma} for the exact definition of $\gamma^{(1)}$ and $\gamma^{(2)}$.

\saut
Similarly, we want $K$ to be close to $K_0$ (in a weak sense since it is at the level of one derivative of the metric, see \eqref{behaviour}) and $K_0$ is assumed to be $\bar{g}_0$-traceless. We thus define $\tau$ to be of order $\lambda^1$, i.e
\begin{align}
 \tau & =  \lambda\tau^{(1)}\left( \frac{u_0}{\lambda}\right) .\label{ansatz tau}
\end{align} 
The fact that the high-frequency perturbations don't contribute to the mean curvature at the $\lambda^0$ order will be justified in the proof, but this can already be seen in Theorem \ref{theo main chap 2}: the contribution at $\la^0$ order to $K$ is given by $\half \sin\left(\frac{u_0}{\la}\right)|\nabla u_0|_{\bar{g}_0} \bar{F}^{(1)}$ (see \eqref{Klambda 0}), and $\bar{F}^{(1)}$ is $\bar{g}_0$-traceless.

\saut
The definition of the TT-tensor $\sigma$ requires a special construction linked to expressions \eqref{K 11 theo}-\eqref{K 12 theo}. Since it involves also the expansion of the solution $\ffi$ given in the next section, we postpone the discussion of this special construction to Section \ref{section strategy TT tensor}. However, we can already give its high-frequency expansion: 
\begin{align}
\sigma & = \sigma^{(0)}\left( \frac{u_0}{\lambda}\right) +\lambda \sigma^{(1)}\left( \frac{u_0}{\lambda}\right) + \lambda^2 \left( \sigma^{(2)}\left( \frac{u_0}{\lambda}\right) + \tsigma \right).\label{ansatz sigma strategy}
\end{align}
Here, $\tsigma$ is a non-oscillating remainder whose role is to solve the equations defining the class of TT-tensors, that is $\dive_\gamma \sigma =0$ and $\tr_\gamma \sigma =0$, with $\gamma$ defined by \eqref{ansatz gamma}.

\subsection{High-frequency expansion of the solutions}

The solutions $(\ffi,W)$ need to solve equations \eqref{hamiltonian constraint}-\eqref{momentum constraint}, where the parameters defined above appear as coefficients. To understand the expansions for $(\ffi,W)$, let us look at the effect of the expansions \eqref{ansatz gamma}-\eqref{ansatz tau}-\eqref{ansatz sigma strategy} on the equations \eqref{hamiltonian constraint}-\eqref{momentum constraint}. 

\saut
A standard feature of high-frequency quantities is that they lose one power of $\lambda$ per derivative. In \eqref{hamiltonian constraint}, derivatives of the metric $\gamma$ only appear in its scalar curvature $R(\gamma)$ and we can derive from \eqref{ansatz gamma} that
\begin{align}
R(\gamma) & = \frac{ |\nabla u_0|^2_{\bar{g}_0} }{\lambda} \dr^2_\theta  \left(      \gamma^{(1)}_{N_0N_0}   -  \tr_{\bar{g}_0}  \gamma^{(1)}_{ij}      \right) + \GO{\lambda^0}.\label{R(gamma) intro}
\end{align}
Therefore, if $\gamma^{(1)}$ is well-chosen, the RHS of \eqref{hamiltonian constraint} is $\GO{\lambda^0}$. This implies that the first oscillating term in $\ffi$ must be at order $\lambda^2$, since $\Delta_\gamma$ is a second order elliptic operator. More precisely we choose
\begin{align}
 \ffi & = 1 + \lambda^2 \left( \ffi^{(2)}\left( \frac{u_0}{\lambda}\right) + \tffi \right) + \lambda^3 \ffi^{(3)}\left( \frac{u_0}{\lambda}\right) ,  \label{ansatz ffi}
\end{align}
where $\tffi$ is a non-oscillating remainder. The constant coefficient in \eqref{ansatz ffi} ensures that $\bar{g}_\lambda  =\bar{g}_0 + \GO{\lambda}$. Even though $\ffi^{(3)}$ appears after the remainder $\tffi$ in the ansatz for $\ffi$, it appears before $\tffi$ in the hierarchy obtained from \eqref{hamiltonian constraint}, again since $\Delta_\gamma$ is a second order elliptic operator. 

\saut
Similarly, \eqref{ansatz tau} and \eqref{ansatz ffi} imply that $\ffi^6\d\tau=\GO{\lambda^0}$ and the ellipticity of $\dive_\gamma L_\gamma$ allows us to choose $W$ of the following form
\begin{align}
W & =  \lambda^2\left( W^{(2)}\left( \frac{u_0}{\lambda}\right) + \tW \right)+ \lambda^3 W^{(3)}\left( \frac{u_0}{\lambda}\right) ,  \label{ansatz W}
\end{align}
where $\tW$ is a non-oscillating remainder. 

\saut
As is standard when considering high-frequency expansions, we obtain a hierarchy of equations. Very schematically, one can say that $\ffi^{(2)}$ and $\ffi^{(3)}$ (resp. $W^{(2)}$ and $W^{(3)}$) will solve the orders $\lambda^0$ and $\lambda^1$ of \eqref{hamiltonian constraint} (resp. \eqref{momentum constraint}), while the non-oscillating remainder $\tffi$ (resp. $\tW$) will solve the orders higher than $\lambda^2$. 

\saut
Even though the operators $\Delta_\gamma$ and $\dive_\gamma L_\gamma$ are both second order elliptic operators, they depend differently on $\gamma$. Indeed, the coefficients of $\Delta_\gamma$ depend on $\gamma$ and $\nabla\gamma$ only, while the coefficients of $\dive_\gamma L_\gamma$ also depend on $\nabla^2\gamma$. From \eqref{ansatz gamma}, we thus see that $\dive_\gamma L_\gamma$ loses inherently one power of $\lambda$. The analysis of the last equations in the hierarchy will therefore differ from the hamiltonian to the momentum constraint.

\subsection{The TT-tensor}\label{section strategy TT tensor}

In this section, we give more details on the definition of $\sigma$, and in particular the first two terms in its high-frequency expansion \eqref{ansatz sigma strategy}, i.e $\sigma^{(0)}$ and $\sigma^{(1)}$. Thanks to \eqref{ansatz ffi} we have $\ffi= 1 + \GO{\la^2}$ which implies that
\begin{align}
K^{(0)} & = \sigma^{(0)} + (L_\gamma W)^{(0)}, \label{strategy def sigma0}
\\ K^{(1)} & = \sigma^{(1)} + (L_\gamma W)^{(1)} + \frac{1}{3}\bar{g}_0\tau^{(1)},\label{strategy def sigma1}
\end{align}
where we used \eqref{K}, \eqref{ansatz gamma} and \eqref{ansatz tau}. As explained in Section \ref{section application spacetime}, the expressions of $K^{(0)}$ and $K^{(1)}$ are actually prescribed the application of Theorem \ref{theo main chap 2} to the definition of initial data for \eqref{EVE chap 2}. Therefore, \eqref{Klambda 0}-\eqref{K 12 theo} are used to define $K^{(0)}$ and $K^{(1)}$, while $\sigma^{(0)}$ and $\sigma^{(1)}$ are defined \textit{a posteriori} such that \eqref{strategy def sigma0}-\eqref{strategy def sigma1} hold. We are then left with the task of showing that $\sigma^{(0)}+\lambda\sigma^{(1)}$ defines an approximate TT-tensor and of constructing an exact one, with the help of $\sigma^{(2)}$ and the remainder $\tsigma$ introduced in \eqref{ansatz sigma strategy}. The full construction of $\sigma$ is the content of Sections \ref{section almost TT} and \ref{section TT}.

\saut
This procedure goes against the standard conformal method, where one starts by defining the parameters, then solve for the $(\ffi,W)$, and finally obtain $(\bar{g},K)$ via \eqref{g bar}-\eqref{K}. This is another characteristic feature of the high-frequency setting, which originates in the redundancy described in Remark \ref{remark redundancy}.

\subsection{Outline of the proof}

We give here the outline of the rest of this article,  which proves Theorem \ref{theo main chap 2}.
\begin{itemize}
\item In Section \ref{section HF conformal class} we define the metric $\gamma$ and compute useful high-frequency expansions such as its scalar curvature $R(\gamma)$ or differential operators depending on $\gamma$ and appearing in the constraint equations.
\item In Section \ref{section approximate solution} we define $\ffi^{(2)}$, $\ffi^{(3)}$ and $W^{(2)}$ such that they solve the first orders of the constraint equations and by doing so we also define the parameter $\tau$. This section is concluded by the construction of the TT-tensor $\sigma$.
\item In Section \ref{section exact solution} we fully solve the constraint equations with a fixed point argument for the remainders $\tffi$ and $\tW$ (we also define $W^{(3)}$ in the process).
\item In Section \ref{section conclusion chap 2} we conclude the proof of Theorem \ref{theo main chap 2} by proving \eqref{estim h bar} and \eqref{estim K geq 2}.
\end{itemize}

\section{High-frequency conformal class}\label{section HF conformal class}

In this section we define the metric $\gamma$ on $\Sigma_0$,  that is the prefered member of the conformal class in which we look for $\bar{g}$ according to \eqref{g bar}. 

\subsection{Definitions and first computations}\label{section def gamma}

We choose 
\begin{align}
\gamma  = \bar{g}_0 + \lambda \gamma^{(1)} + \lambda^2 \gamma^{(2)},\label{def gamma}
\end{align}
where 
\begin{align}
\gamma^{(1)} &  =  \cos\left( \frac{u_0}{\lambda}\right) F_0 \omega^{(1)},\label{def gamma1}
\\ \gamma^{(2)} &  = \sin \left( \frac{u_0}{\lambda}\right)\omega^{(2)} .\label{def gamma2}
\end{align}
The two symmetric 2-forms $\omega^{(1)}$ and $\omega^{(2)}$ are directly defined in the frame $(N_0, e_{\1},e_{\2})$.
\begin{itemize}
\item \textbf{Definition of $\omega^{(1)}$.} The coefficients of $\omega^{(1)}$ in the frame $(N_0, e_{\1},e_{\2})$ are constants and satisfy
\begin{align}
\omega^{(1)}_{N_0 i} & = 0, \label{omega N}
\\ \omega^{(1)}_{\1\1} + \omega^{(1)}_{\2\2}  & = 0, \label{omega 11 + 22}
\\ \left(\omega^{(1)}_{\1\1} \right)^2 + \left(\omega^{(1)}_{\1\2} \right)^2  & = 4.\label{omega backreaction}
\end{align}
\item \textbf{Definition of $\omega^{(2)}$.} The 2-form $\omega^{(2)}$ depends on $\omega^{(1)}$ in the following way: all the coefficients of $\omega^{(2)}$ in the frame $(N_0, e_{\1},e_{\2})$ are set to be zero except $\omega^{(2)}_{N_0\1}$  and $\omega^{(2)}_{N_0\2}$ which are defined by
\begin{align}
\omega^{(2)}_{N_0\mathbf{j}} & = \frac{1}{|\nabla u_0|_{\bar{g}_0}^2} (\dive_{\bar{g}_0} |\nabla u_0|_{\bar{g}_0} F_0\omega^{(1)})_\mathbf{j}  -\frac{1}{\left|\nabla u_0 \right|_{\bar{g}_0} } F_0\omega^{(1)}_{i\mathbf{j}}  \left(  \D_{N_0}N_0^i - \bar{g}_0^{ik} (K_0)_{N_0 k}    \right) . \label{def omega2}
\end{align}
This expression will be justified in the proof of Lemma \ref{lem sigma01}, where we construct the TT-tensor of the conformal method.
\end{itemize}

\begin{remark}
The properties \eqref{omega N} and \eqref{omega 11 + 22} and the symmetry of $\omega^{(1)}$ imply that the tensor $\gamma^{(1)}$ is a linear combination of the tensors
\begin{align*}
e_\1 \otimes e_\1  -  e_\2 \otimes e_\2 \quad \text{and} \quad e_\1 \otimes e_\2  +  e_\2 \otimes e_\1.
\end{align*}
This is the equivalent of the 2 degrees of freedom (or polarization) of the TT gauge in the linearized gravity setting where $g_0$ is replaced by the Minkowskian metric, see Remark \ref{remark TT gauge}. The choice of the polarization of $\omega^{(1)}$, i.e the choice of $\omega^{(1)}_{\1\1}$ and $\omega^{(1)}_{\1\2}$ under the constraint \eqref{omega backreaction}, is the only freedom we have in the definition of the tensors $\bar{F}^{(1)}$, $\bar{F}^{(2,1)}$ and $\bar{F}^{(2,2)}$.
\end{remark}

We define
\begin{align}
\bar{F}^{(1)} & = F_0 \omega^{(1)}.\label{def Fbar 1}
\end{align}
The following lemma summarizes the important properties of $\bar{F}^{(1)}$.
\begin{lem}\label{lem F1}
The symmetric 2-tensor $\bar{F}^{(1)}$ is supported in $B_R$, $\bar{g}_0$-traceless and $P_{0,u}$-tangential. Moreover, \eqref{energie condition theo chap 2} holds and
\begin{align}
\l \bar{F}^{(1)} \r_{H^{N}} \lesssim \e, \label{estim Fbar 1}
\end{align}
with a constant depending only on $\delta$ and $R$.
\end{lem}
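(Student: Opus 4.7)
The plan is to verify each of the four claims of the lemma directly from the construction of $\omega^{(1)}$ in the orthonormal frame $(N_0,e_\1,e_\2)$ and from the standing assumptions on the background.

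\textbf{Support and $P_{0,u}$-tangentiality.} Since $\bar{F}^{(1)}=F_0\omega^{(1)}$, the support property is immediate from \eqref{support F0}. The $P_{0,u}$-tangentiality is a direct reformulation of \eqref{omega N}: as $N_0$ is the unit $\bar{g}_0$-normal to $P_{0,u}$, the condition $\omega^{(1)}_{N_0 i}=0$ in the frame says exactly that $\omega^{(1)}$, and hence $\bar{F}^{(1)}$, vanishes when contracted with $N_0$.

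\textbf{Tracelessness and the identity \eqref{energie condition theo chap 2}.} Because $(N_0,e_\1,e_\2)$ is $\bar{g}_0$-orthonormal, the trace and the squared norm of $\omega^{(1)}$ read simply as sums of components in this frame. Thus
\begin{align*}
\tr_{\bar{g}_0}\omega^{(1)} = \omega^{(1)}_{N_0 N_0} + \omega^{(1)}_{\1\1} + \omega^{(1)}_{\2\2},
\end{align*}
which vanishes by \eqref{omega N} and \eqref{omega 11 + 22}, and
\begin{align*}
|\omega^{(1)}|^2_{\bar{g}_0} = (\omega^{(1)}_{N_0 N_0})^2 + 2(\omega^{(1)}_{N_0\1})^2 + 2(\omega^{(1)}_{N_0\2})^2 + (\omega^{(1)}_{\1\1})^2 + 2(\omega^{(1)}_{\1\2})^2 + (\omega^{(1)}_{\2\2})^2.
\end{align*}
Invoking \eqref{omega N}, \eqref{omega 11 + 22} and then \eqref{omega backreaction}, this collapses to $2\bigl((\omega^{(1)}_{\1\1})^2 + (\omega^{(1)}_{\1\2})^2\bigr)=8$. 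Multiplying by $F_0^2$ gives \eqref{energie condition theo chap 2} and the tracelessness of $\bar{F}^{(1)}$.

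\textbf{Sobolev estimate.} Writing $\bar{F}^{(1)}$ in Euclidean coordinates requires expressing the frame vectors $N_0,e_\1,e_\2$ in those coordinates. From \eqref{estim g0 K0} and \eqref{estim u0}, $\bar{g}_0$ is a small $H^{N+1}_\delta$ perturbation of $e$ and $\nabla u_0$ a small $H^N_{\delta+1}$ perturbation of the constant vector $\mathfrak{z}\neq 0$; for $\e$ small enough, $|\nabla u_0|_{\bar{g}_0}$ is uniformly bounded away from $0$. Using the product and embedding properties of Proposition \ref{prop WSS chap 2} (applied iteratively for sums, inverses and square roots of the relevant quantities), the components of $N_0$ in Euclidean coordinates are controlled in $H^N_{\delta+1}$ by a constant depending on $\delta$; an explicit Gram–Schmidt construction of $(e_\1,e_\2)$ in the orthogonal complement of $N_0$ then produces frame vectors with Euclidean components likewise controlled, so that $\omega^{(1)}$, whose frame coefficients are constants, has Euclidean components bounded in $C^N(B_R)$ by a constant depending only on $\delta$ and $R$. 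Combining this with \eqref{estim F0 chap 2} and the compact support of $F_0$ in $B_R$ yields $\|\bar{F}^{(1)}\|_{H^N}\lesssim \|F_0\|_{H^N}\|\omega^{(1)}\|_{C^N(B_R)}\lesssim \e$, proving \eqref{estim Fbar 1}.

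\textbf{Main obstacle.} The algebraic identities (i)–(iii) are immediate, and the only mildly technical point is the Sobolev estimate: one has to convert the frame-based definition of $\omega^{(1)}$ into Euclidean coordinates while keeping track of the background regularity. This is a routine but careful bookkeeping exercise using the multiplication properties of weighted Sobolev spaces.
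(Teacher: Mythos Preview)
Your proof is correct and follows essentially the same route as the paper's own argument, only with more detail (the paper dispatches the Sobolev estimate in a single sentence invoking \eqref{estim F0 chap 2} and \eqref{estim u0}). One minor imprecision: $N_0$ itself is not in $H^N_{\delta+1}$ since it tends to a nonzero constant at infinity---rather $N_0$ minus that constant is---but since you only use the resulting $C^N(B_R)$ bound on $\omega^{(1)}$, this is harmless.
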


\begin{proof}
The support property is implied by the support property of $F_0$. The property \eqref{omega N} implies that $\bar{F}^{(1)}$ is $P_{0,u}$-tangential,  which together with \eqref{omega 11 + 22} implies that $\bar{F}^{(1)}$ is $\bar{g}_0$-traceless since
\begin{align*}
\tr_{\bar{g}_0}\bar{F}^{(1)} = \bar{F}^{(1)}_{N_0N_0} + \bar{F}^{(1)}_{\1\1} + \bar{F}^{(1)}_{\2\2}.
\end{align*}
Moreover, \eqref{omega backreaction} and \eqref{def Fbar 1} imply \eqref{energie condition theo chap 2}. The estimation \eqref{estim Fbar 1} comes from \eqref{estim F0 chap 2} and \eqref{estim u0}, the latter allowing us to estimate the coefficients of $\omega^{(1)}$ in the usual Euclidean coordinates, i.e $\omega^{(1)}_{ij}$.
\end{proof}

The following lemma gives the expansion of the scalar curvature of $\gamma$. This will be used to solve the Hamitonian constraint \eqref{hamiltonian constraint}.
\begin{lem}\label{lem R(gamma)}
We have
\begin{align*}
R(\gamma) & = R^{(0)} +\lambda R^{(1)} + \lambda^2R^{(\geq 2)},
\end{align*}
with
\begin{align}
R^{(0)} &= R(\bar{g}_0)  - |\nabla u_0|_{\Bar{g}_0}^2 F_0^2   -  7  \cos\left(\frac{2u_0}{\lambda}\right) |\nabla u_0|_{\Bar{g}_0}^2 F_0^2   \label{R 0}
\\&\quad +\sin\left(\frac{u_0}{\lambda}\right) |\nabla u_0|_{\Bar{g}_0}\bar{F}^{(1)}_{\ell j} \left( -\bar{g}_0^{ij} \dr_i N_0^\ell + \half N_0 \bar{g}_0^{\ell j} \right),\nonumber
\end{align}
and
\begin{align}
\left| R^{(1)} \right| + \left| R^{(\geq 2)} \right| \lesssim \left| \gamma^{-2}\dr^2\gamma \right| + \left| \gamma^{-3} (\dr\gamma)^2 \right| \label{estim R1 et R2}.
\end{align}
Moreover
\begin{align}
R^{(1)} \simf \cos(\theta) + \sin(2\theta) + \cos(3\theta) .\label{freq R1}
\end{align}
\end{lem}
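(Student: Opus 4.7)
The plan is to compute $R(\gamma)$ by expanding the standard formula for the scalar curvature in powers of $\lambda$ starting from \eqref{def gamma}-\eqref{def gamma2}. First I would record the expansions needed to assemble the Christoffel symbols and the Ricci tensor: the inverse metric expands as $\gamma^{ij} = \bar{g}_0^{ij} - \lambda \bar{g}_0^{ik}\bar{g}_0^{j\ell}\gamma^{(1)}_{k\ell} + \GO{\lambda^2}$, while for $h = \lambda \gamma^{(1)} + \lambda^2 \gamma^{(2)}$ a direct differentiation using the chain rule on $\cos(u_0/\lambda), \sin(u_0/\lambda)$ gives
\begin{align*}
\dr_a h_{bc} &= -\sin\left(\tfrac{u_0}{\lambda}\right)\dr_a u_0 \cdot F_0\,\omega^{(1)}_{bc} + \GO{\lambda}, \\
\dr_a \dr_b h_{cd} &= -\tfrac{1}{\lambda}\cos\left(\tfrac{u_0}{\lambda}\right)\dr_a u_0\,\dr_b u_0 \cdot F_0\,\omega^{(1)}_{cd} + \GO{1}.
\end{align*}
In particular, second derivatives of $\gamma$ acquire one power of $\lambda^{-1}$, which is the source of the singular term appearing in \eqref{R(gamma) intro}.

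Using the variation formula for the scalar curvature around $\bar{g}_0$, the only $\lambda^{-1}$ contribution to $R(\gamma)$ comes from the most singular part of $\dr^2 h$ acting on $\lambda \gamma^{(1)}$, and by $\bar{g}_0^{ij}\dr_j u_0 = -|\nabla u_0|_{\bar{g}_0} N_0^i$ it reduces to $\lambda^{-1} |\nabla u_0|_{\bar{g}_0}^2 (\dr_\theta^2 \gamma^{(1)}_{N_0 N_0} - \tr_{\bar{g}_0}\dr_\theta^2 \gamma^{(1)})$. By the polarization conditions \eqref{omega N} and \eqref{omega 11 + 22} on $\omega^{(1)}$, this quantity vanishes identically; this is the cancellation that forces $R(\gamma) = \GO{1}$ and is the whole rationale behind the tangential and traceless conditions on $\omega^{(1)}$.

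Next I would collect the $\lambda^0$ contributions defining $R^{(0)}$. These come from three sources: (a) the non-singular part of the linear variation of $R$ applied to $\lambda \gamma^{(1)}$, combined with the leading oscillating residue of $\lambda^2\dr^2 \gamma^{(2)}$; using the specific definition \eqref{def omega2} of $\omega^{(2)}_{N_0 \mathbf{j}}$ and identities such as $\bar{g}_0^{ij}\dr_i\dr_j u_0 = -\dr_i(|\nabla u_0|_{\bar{g}_0} N_0^i)$, these terms repackage into $\sin(\tfrac{u_0}{\lambda})|\nabla u_0|_{\bar{g}_0} \bar{F}^{(1)}_{\ell j}(-\bar{g}_0^{ij}\dr_i N_0^\ell + \tfrac{1}{2} N_0 \bar{g}_0^{\ell j})$; (b) the quadratic interaction $(\dr h)^2$ (arising from the $\Gamma\Gamma$ terms of the Ricci tensor and from the expansion of $\gamma^{ij}$), which produces a contribution proportional to $\sin^2(\tfrac{u_0}{\lambda})|\nabla u_0|_{\bar{g}_0}^2 |F_0\omega^{(1)}|^2_{\bar{g}_0}$; and (c) the cross interaction $h\,\dr^2 h$, which produces a contribution proportional to $\cos^2(\tfrac{u_0}{\lambda})|\nabla u_0|_{\bar{g}_0}^2 F_0^2 |\omega^{(1)}|^2_{\bar{g}_0}$. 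Using $|\omega^{(1)}|^2_{\bar{g}_0} = (\omega^{(1)}_{\1\1})^2 + (\omega^{(1)}_{\2\2})^2 + 2(\omega^{(1)}_{\1\2})^2 = 8$ (from \eqref{omega 11 + 22} and \eqref{omega backreaction}) together with $\sin^2\theta = \half - \half\cos(2\theta)$ and $\cos^2\theta = \half + \half\cos(2\theta)$, a careful sum of these quadratic terms yields exactly the non-oscillating $-|\nabla u_0|_{\bar{g}_0}^2 F_0^2$ and the $-7\cos(2\tfrac{u_0}{\lambda})|\nabla u_0|_{\bar{g}_0}^2 F_0^2$ pieces of $R^{(0)}$.

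For the remainders, the bound \eqref{estim R1 et R2} reflects the general structure of the scalar curvature as a polynomial expression in $\gamma^{-1}, \dr\gamma, \dr^2 \gamma$ (linear in $\dr^2\gamma$, quadratic in $\dr\gamma$): once the explicit $\lambda^0$ piece is peeled off, the tail is pointwise controlled by $|\gamma^{-2}\dr^2\gamma| + |\gamma^{-3}(\dr\gamma)^2|$. For the oscillating structure \eqref{freq R1}, the $\lambda^1$ coefficient $R^{(1)}$ is a sum of products of at most three factors of $\cos(\tfrac{u_0}{\lambda})$ (from $\gamma^{(1)}$ and its first derivative) and $\sin(\tfrac{u_0}{\lambda})$ (from $\gamma^{(2)}$), and elementary trigonometric identities reduce such products to a linear combination of $\cos\theta$, $\sin(2\theta)$ and $\cos(3\theta)$. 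The main obstacle is the precise bookkeeping behind the coefficients $-1$ and $-7$ in $R^{(0)}$: one must correctly assemble the $\Gamma\Gamma$ terms, the correction coming from expanding $\gamma^{ij}$ in the scalar trace, and the cross terms with the background connection, invoking each of the three polarization conditions \eqref{omega N}, \eqref{omega 11 + 22}, \eqref{omega backreaction} at the right place.
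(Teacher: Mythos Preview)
Your overall strategy---expand the scalar curvature in powers of $\lambda$, kill the $\lambda^{-1}$ piece via the polarization conditions on $\omega^{(1)}$, then collect the $\lambda^0$ contributions---is the same as the paper's, which proceeds by first expanding the Christoffel symbols as $\Gamma(\gamma)=\Gamma(\bar g_0)+\tilde\Gamma^{(0)}+\lambda\Gamma^{(1)}+\cdots$ and then organizing $R^{(0)}-R(\bar g_0)$ into three blocks $I,II,III$ according to which Christoffel pieces enter.

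There is, however, one concrete mistake in your item~(a). You claim that the specific formula \eqref{def omega2} for $\omega^{(2)}_{N_0\mathbf{j}}$ is used to produce the $\sin(u_0/\lambda)$ term of $R^{(0)}$. It is not. The contribution of $\gamma^{(2)}$ to $R^{(0)}$ reduces, after the same computation you outline, to $\sin(u_0/\lambda)\,|\nabla u_0|_{\bar g_0}^2\bigl(\omega^{(2)}_{\1\1}+\omega^{(2)}_{\2\2}\bigr)$, and this vanishes simply because all \emph{diagonal} components of $\omega^{(2)}$ in the frame $(N_0,e_\1,e_\2)$ are set to zero. The off-diagonal components $\omega^{(2)}_{N_0\mathbf{j}}$ defined by \eqref{def omega2} play no role in this lemma; that choice is only used later, in the proof of Lemma~\ref{lem sigma01}, to make $\mathbf{d}^{[0]}(\sigma^{(0)})+\mathbf{d}^{[-1]}(\sigma^{(1)})$ vanish in the tangential directions. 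So the $\sin(u_0/\lambda)$ piece of $R^{(0)}$ comes entirely from the first-order spatial derivatives of $\bar F^{(1)}$ (what the paper packages as the tensor $\bar Q^k_{ij}$ inside $\Gamma^{(1)}$), not from any interplay with $\omega^{(2)}$.

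A secondary point on the bookkeeping for the coefficients $-1$ and $-7$: your split into (b) $\sim\sin^2\theta$ and (c) $\sim\cos^2\theta$ captures the paper's block $III$, which gives $-|\nabla u_0|_{\bar g_0}^2F_0^2-3\cos(2u_0/\lambda)|\nabla u_0|_{\bar g_0}^2F_0^2$. The remaining $-4\cos(2u_0/\lambda)|\nabla u_0|_{\bar g_0}^2F_0^2$ comes from a term you have not isolated: the $\tfrac14\sin(2u_0/\lambda)(\bar F^{(1)})^{k\ell}\partial_{(i}u_0\bar F^{(1)}_{\ell j)}$ part of $\Gamma^{(1)}$ (itself arising from the inverse-metric correction $(\gamma^{-1})^{(1)}$ hitting $\partial\gamma^{(1)}$), which after one further $\partial_\theta$ and contraction with $\partial u_0$ produces the extra $\cos(2\theta)$. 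This is not captured by either a pure $(\partial h)^2$ or a pure $h\,\partial^2 h$ pattern in your organization, so you should account for it separately.
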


\begin{proof}
We recall the definition of the scalar curvature
\begin{align*}
R(\gamma) & = \gamma^{ij} \left( \dr_k \Gamma(\gamma)^k_{ij} - \dr_i  \Gamma(\gamma)^k_{jk} +  \Gamma(\gamma)^k_{k\ell} \Gamma(\gamma)^\ell_{ij} -  \Gamma(\gamma)^k_{i\ell} \Gamma(\gamma)^\ell_{jk} \right).
\end{align*}
We start by giving an estimation up to second order in $\lambda$ of the inverse of $\gamma$:
\begin{align}
\gamma^{ij} & = \bar{g}_0^{ij} - \lambda\cos\left(\frac{u_0}{\lambda}\right) (\bar{F}^{(1)})^{ij} \label{inverse gamma}
\\&\quad + \lambda^2 \left( \cos^2 \left(\frac{u_0}{\lambda}\right)  \bar{g}_0^{ik} (\bar{F}^{(1)})^{j\ell} \bar{F}^{(1)}_{k\ell} -  \sin \left( \frac{u_0}{\lambda}\right)(\omega^{(2)})^{ij}  \right) + \GO{\lambda^3},\nonumber
\end{align}
where on the RHS all the inverses are taken with respect to the background metric $\bar{g}_0$. We now expand the Christoffel symbols of $\gamma$, we obtain
\begin{align}
\Gamma(\gamma)^k_{ij} & = \Gamma(\bar{g}_0)^k_{ij} + (\Tilde{\Gamma}^{(0)})^k_{ij} + \lambda (\Gamma^{(1)})^k_{ij}  + \GO{\lambda^2},\label{Christoffel decomposition}
\end{align}
with
\begin{align}
(\Tilde{\Gamma}^{(0)})^k_{ij} & = -\half\sin\left(\frac{u_0}{\lambda}\right)  \bar{g}_0^{k\ell} \left(  \dr_{(i}u_0 \bar{F}^{(1)}_{\ell j)} - \dr_\ell u_0 \bar{F}^{(1)}_{ij}  \right), \label{gamma tilde 0}
\\ (\Gamma^{(1)})^k_{ij}  & =  \half\cos\left(\frac{u_0}{\lambda}\right)  \bar{g}_0^{k\ell} \left(  \dr_{(i}u_0 \omega^{(2)}_{\ell j)} - \dr_\ell u_0 \omega^{(2)}_{ij}  \right)  \label{gamma 1}
\\&\quad +  \frac{1}{4} \sin\left(\frac{2u_0}{\lambda}\right)  (\bar{F}^{(1)})^{k\ell}   \dr_{(i}u_0 \bar{F}^{(1)}_{\ell j)} + \cos\left(\frac{u_0}{\lambda}\right)\bar{Q}^k_{ij} ,\nonumber
\end{align}
where we defined
\begin{align*}
\bar{Q}^k_{ij} & =  \half \bar{g}_0^{k\ell}\left(  \dr_{(i} \bar{F}^{(1)}_{\ell j)} - \dr_\ell \bar{F}^{(1)}_{ij}  \right)  - \half  (\bar{F}^{(1)})^{k\ell}  \left(  \dr_{(i} (\bar{g}_0)_{\ell j)} - \dr_\ell (\bar{g}_0)_{ij}  \right) .
\end{align*}
By using $\bar{F}^{(1)}_{N_0i}=0$ and $\tr_{\bar{g}_0}\bar{F}^{(1)}=0$, we can compute useful contractions of $\bar{Q}^k_{ij}$. We first look at $ \bar{Q}^k_{jk}$ which vanishes thanks to $\tr_{\bar{g}_0}\bar{F}^{(1)}=0$:
\begin{align}
\bar{Q}^k_{jk} & = \half  \left(  \bar{g}_0^{k\ell}   \dr_{j} \bar{F}^{(1)}_{\ell k}   -   (\bar{F}^{(1)})^{k\ell}     \dr_{j} (\bar{g}_0)_{\ell k}  \right) = \half \dr_j \tr_{\bar{g}_0}\bar{F}^{(1)} = 0. \label{prop Q bar 1}
\end{align}
Using in addition $\bar{F}^{(1)}_{N_0i}=0$ we have:
\begin{align}
\bar{g}_0^{ij}(N_0)_k\bar{Q}^k_{ij} & =  \bar{g}_0^{ij}   N_0^\ell  \left(  \dr_{i} \bar{F}^{(1)}_{\ell j} - \half \dr_\ell \bar{F}^{(1)}_{ij}  \right)\nonumber
\\& =  \bar{g}_0^{ij}  \dr_i \bar{F}^{(1)}_{N_0 j}   -   \half  N_0 \tr_{\bar{g}_0}\bar{F}^{(1)} -  \bar{g}_0^{ij}  \bar{F}^{(1)}_{\ell j} \dr_{i} N_0^\ell     +   \half \bar{F}^{(1)}_{\ell j}   N_0 \bar{g}_0^{\ell j}\nonumber
\\& = \bar{F}^{(1)}_{\ell j} \left( -\bar{g}_0^{ij} \dr_i N_0^\ell + \half N_0 \bar{g}_0^{\ell j} \right). \label{prop Q bar 2}
\end{align}

\saut
Since the scalar curvature contains first derivatives of the Christoffel symbols,  $R(\gamma)$ contains \textit{a priori} a $\lambda^{-1}$ contribution from $\dr_\theta \Tilde{\Gamma}^{(0)}$,  but thanks to the properties of $\bar{F}^{(1)}$ we can see that it vanishes.  Indeed we have
\begin{align*}
R^{(-1)} = \bar{g}_0^{ij}\left(  \dr_ku_0 \dr_\theta (\Tilde{\Gamma}^{(0)})^k_{ij} - \dr_i u_0 \dr_\theta (\Tilde{\Gamma}^{(0)})^k_{jk}  \right),
\end{align*}
and $\bar{F}^{(1)}_{N_0 i} = 0$ and $\tr_{\bar{g}_0}\bar{F}^{(1)}=0$ imply that 
\begin{align}
\bar{g}_0^{ij}(\Tilde{\Gamma}^{(0)})^k_{ij} =0\quad \text{and}\quad  (\Tilde{\Gamma}^{(0)})^k_{jk}=0.\label{nice properties gamma tilde}
\end{align}
Let us now look at the $\lambda^0$ terms in $R(\gamma)$.  Using again \eqref{nice properties gamma tilde} we obtain:
\begin{align*}
R^{(0)} - R(\bar{g}_0) & =  \bar{g}_0^{ij}\left(  \dr_ku_0 \dr_\theta (\Gamma^{(1)})^k_{ij} - \dr_i u_0 \dr_\theta (\Gamma^{(1)})^k_{jk}  \right) 
\\&\quad + \bar{g}_0^{ij}  \dr_k  (\Tilde{\Gamma}^{(0)})^k_{ij} -2 \bar{g}_0^{ij} (\Tilde{\Gamma}^{(0)})^k_{i\ell} \Gamma(\bar{g}_0)^\ell_{jk}\nonumber
\\&\quad -  \cos\left(\frac{u_0}{\lambda}\right) (\bar{F}^{(1)})^{ij}  \left(  \dr_ku_0 \dr_\theta (\Tilde{\Gamma}^{(0)})^k_{ij} -\dr_i u_0 \dr_\theta (\tilde{\Gamma}^{(0)})^k_{jk}  \right)  -\bar{g}_0^{ij}  (\Tilde{\Gamma}^{(0)})^k_{i\ell} (\Tilde{\Gamma}^{(0)})^\ell_{jk}\nonumber
\\&= \bar{g}_0^{ij}\left(  \dr_ku_0 \dr_\theta (\Gamma^{(1)})^k_{ij} - \dr_i u_0 \dr_\theta (\Gamma^{(1)})^k_{jk}  \right) \nonumber
\\&\quad + \bar{g}_0^{ij}  \dr_k  (\Tilde{\Gamma}^{(0)})^k_{ij} -2 \bar{g}_0^{ij} (\Tilde{\Gamma}^{(0)})^k_{i\ell} \Gamma(\bar{g}_0)^\ell_{jk}\nonumber
\\&\quad -  \cos\left(\frac{u_0}{\lambda}\right) (\bar{F}^{(1)})^{ij}    \dr_ku_0 \dr_\theta (\Tilde{\Gamma}^{(0)})^k_{ij}  -\bar{g}_0^{ij}  (\Tilde{\Gamma}^{(0)})^k_{i\ell} (\Tilde{\Gamma}^{(0)})^\ell_{jk},\nonumber
\end{align*}
where we used $(\bar{F}^{(1)})^{ij}\dr_i u_0 = 0$. We set 
\begin{align*}
I & \vcentcolon = \bar{g}_0^{ij}\left(  \dr_ku_0 \dr_\theta (\Gamma^{(1)})^k_{ij} - \dr_i u_0 \dr_\theta (\Gamma^{(1)})^k_{jk}  \right) ,
\\ II & \vcentcolon =   \bar{g}_0^{ij}  \dr_k  (\Tilde{\Gamma}^{(0)})^k_{ij} -2 \bar{g}_0^{ij} (\Tilde{\Gamma}^{(0)})^k_{i\ell} \Gamma(\bar{g}_0)^\ell_{jk}  ,
\\ III & \vcentcolon = - \cos\left(\frac{u_0}{\lambda}\right) (\bar{F}^{(1)})^{ij}    \dr_ku_0 \dr_\theta (\Tilde{\Gamma}^{(0)})^k_{ij}  -\bar{g}_0^{ij}  (\Tilde{\Gamma}^{(0)})^k_{i\ell} (\Tilde{\Gamma}^{(0)})^\ell_{jk} ,
\end{align*}
and compute further each of these terms. The term $I$ contains all the contributions from $\Gamma^{(1)}$ and \eqref{gamma 1} implies
\begin{align*}
I & =  \sin\left( \frac{u_0}{\lambda}\right)   |\nabla u_0|_{\bar{g}_0}^2 \left( \omega^{(2)}_{\1\1}  +  \omega^{(2)}_{\2\2}  \right)   -\half\cos\left( \frac{2u_0}{\lambda}\right)   |\nabla u_0|^2_{\bar{g}_0} \left| \bar{F}^{(1)} \right|^2_{\bar{g}_0}
\\&\quad  -\sin\left( \frac{u_0}{\lambda}\right) \bar{g}_0^{ij}\left(  \dr_ku_0 \bar{Q}^k_{ij} - \dr_i u_0 \bar{Q}^k_{jk}  \right)
\\& =  -4\cos\left( \frac{2u_0}{\lambda}\right)   |\nabla u_0|^2_{\bar{g}_0}F_0^2 +\sin\left( \frac{u_0}{\lambda}\right)  |\nabla u_0|_{\bar{g}_0} \bar{F}^{(1)}_{\ell j} \left( -\bar{g}_0^{ij} \dr_i N_0^\ell + \half N_0 \bar{g}_0^{\ell j} \right),
\end{align*}
where we used that the diagonal coefficients of $\omega^{(2)}$ in the frame $(N_0,e_\1,e_\2)$ vanish in addition to \eqref{prop Q bar 1}-\eqref{prop Q bar 2} and \eqref{energie condition theo chap 2}. The term $II$ contains all the linear terms in $\tilde{\Gamma}^{(0)}$. Note that in the notation $\dr_k (\tilde{\Gamma}^{(0)})^k_{ij}$ the derivative does not hit the oscillating part of $\tilde{\Gamma}^{(0)}$ (i.e $\sin\left( \frac{u_0}{\lambda}  \right)$ in \eqref{gamma tilde 0}).  Using \eqref{nice properties gamma tilde} we obtain $ \bar{g}_0^{ij}\dr_k (\tilde{\Gamma}^{(0)})^k_{ij}= - (\tilde{\Gamma}^{(0)})^k_{ij} \dr_k  \bar{g}_0^{ij} $ and
\begin{align*}
II & =  - (\Tilde{\Gamma}^{(0)})^k_{i\ell} \left(   \dr_k \bar{g}_0^{i\ell} + 2 \bar{g}_0^{ij}  \Gamma(\bar{g}_0)^\ell_{jk}\right) =0,
\end{align*}
where we used $(\Tilde{\Gamma}^{(0)})^k_{i\ell}=(\Tilde{\Gamma}^{(0)})^k_{\ell i}$. The term $III$ contains quadratic terms in $\Tilde{\Gamma}^{(0)}$. Using \eqref{gamma tilde 0} and \eqref{energie condition theo chap 2} we obtain
\begin{align*}
III & = -  \cos\left(\frac{u_0}{\lambda}\right) (\bar{F}^{(1)})^{ij}   \dr_ku_0 \dr_\theta (\Tilde{\Gamma}^{(0)})^k_{ij}    -\bar{g}_0^{ij}  (\Tilde{\Gamma}^{(0)})^k_{i\ell} (\Tilde{\Gamma}^{(0)})^\ell_{jk}
\\& = - \half  \cos^2\left(\frac{u_0}{\lambda}\right)  |\nabla u_0|^2_{\bar{g}_0} \left| \bar{F}^{(1)}\right|^2_{\bar{g}_0} + \frac{1}{4} \sin^2\left(\frac{u_0}{\lambda}\right)  |\nabla u_0|^2_{\bar{g}_0} \left| \bar{F}^{(1)}\right|^2_{\bar{g}_0}
\\& = - |\nabla u_0|^2_{\bar{g}_0} F_0^2  - 3\cos\left(\frac{2u_0}{\lambda}\right) |\nabla u_0|^2_{\bar{g}_0} F_0^2.
\end{align*}
We conclude the proof of \eqref{R 0} by adding $I$ and $III$.

\saut
The proof of \eqref{freq R1} comes from the schematic formula
\begin{align*}
R(\gamma) = \gamma^{-2}\dr^2\gamma + \gamma^{-3}(\dr\gamma)^2,
\end{align*}
which gives schematically
\begin{align}
R^{(1)} & = \left(   (\gamma^{-1})^{(2)}(\gamma^{-1})^{(0)} + \left( (\gamma^{-1})^{(1)} \right)^2 \right)(\dr^2\gamma)^{(-1)}  +  (\gamma^{-1})^{(1)} (\gamma^{-1})^{(0)}(\dr^2\gamma)^{(0)} \label{R1 preuve}
\\&\quad+ \left( (\gamma^{-1})^{(0)}\right)^2 (\dr^2 \gamma)^{(1)}  + (\gamma^{-1})^{(1)}\left( (\gamma^{-1})^{(0)}(\dr\gamma)^{(0)}\right)^2   + \left( (\gamma^{-1})^{(0)}\right)^3(\dr\gamma)^{(1)}(\gamma^{-1})^{(0)},\nonumber
\end{align}
where we recall that our high-frequency notations introduced in Section \ref{section HF notations} give for instance 
\begin{align*}
 (\dr^2 \gamma)^{(1)} & = \cos\left( \frac{u_0}{\lambda} \right) \dr^2 \bar{F}^{(1)} + 2 \cos\left( \frac{u_0}{\lambda} \right) \dr\omega^{(2)},
\end{align*}
and
\begin{align*}
(\gamma^{-1})^{(1)} & = - \cos\left( \frac{u_0}{\lambda}\right) \bar{F}^{(1)}.
\end{align*}
Using \eqref{def gamma} and \eqref{inverse gamma} we obtain
\begin{align*}
(\gamma^{-1})^{(0)} &\simf 1, 
\\ (\gamma^{-1})^{(1)} &\simf \cos(\theta) ,  
\\ (\gamma^{-1})^{(2)} &\simf 1 + \sin(\theta) + \cos(2\theta),
\end{align*}
and 
\begin{align*}
(\dr\gamma)^{(0)},\; (\dr^2\gamma)^{(0)} &\simf 1 + \sin(\theta)  , 
\\ (\dr\gamma)^{(1)} ,\; (\dr^2\gamma)^{(-1)} ,\; (\dr^2\gamma)^{(1)} &\simf \cos(\theta).
\end{align*}
Using these oscillating behaviours we can prove by a direct computation that $R^{(1)}$ defined by \eqref{R1 preuve} satisfies \eqref{freq R1}. This concludes the proof of the lemma.
\end{proof}

\begin{remark}
Note that we denote by $\gamma^{-k}$ any product of $k$ coefficients of the inverse metric $\gamma^{-1}$. This also applies to the background inverse metric $\bar{g}_0^{-1}$.
\end{remark}

\subsection{Expansion of differential operators}\label{section exp diff op}

The equations \eqref{hamiltonian constraint} and \eqref{momentum constraint} involve differential operators depending on the metric $\gamma$. When they are applied to high-frequency quantities, we need to take into account the expansion \eqref{def gamma} affecting the coefficients of the operators as well as the expansion of the quantities themselves.  In this section we compute such expansions for all the operators involved, that is the Laplace-Beltrami operator, the divergence operator, the conformal Killing operator and the conformal Laplacian. 

\saut
In terms of notation, if $P_\gamma$ is a differential operator acting on tensors of any type and whose coefficients depends on $\gamma$, we can formally obtain an expansion of $P_\gamma (T)$ of the form 
\begin{align*}
P_\gamma (T) & = \sum_k \lambda^k P^{[k]}_\gamma (T).
\end{align*}
where the previous sum has finite support. The bracket notation $[k]$ thus plays the same role for differential operators as the parenthesis notation $(i)$ introduced in Section \ref{section HF notations} for tensors. 

\saut
Moreover we can mix the two cases, i.e apply differential operators $P_\gamma$ depending on $\gamma$ to oscillating tensors $T\left(\frac{u_0}{\lambda}\right)$. The expansion of $P_\gamma (T)$ then depends on the order of $T\longmapsto P_\gamma (T)$ and also $\gamma\longmapsto P_\gamma (T)$. By the order of $\gamma\longmapsto P_\gamma (T)$, we simply mean the top derivative of $\gamma$ appearing in the coefficients of $P_\gamma$.  Since both oscillating and non-oscillating terms appear in the expansions for the parameters and the unknowns of the conformal method (see Section \ref{section HF parameters unknowns}), we make a difference between the expansions for $P_\gamma(T)$ when $T$ is non-oscillating and when $T$ is oscillating. Usual capital letters are used in the first case and bold capital letters in the second case, i.e
\begin{align*}
P_\gamma (T) & = \sum_k \lambda^k P^{[k]}_\gamma (T) \quad \text{and} \quad P_\gamma \left(T\left(\frac{u_0}{\lambda}\right)\right)  = \sum_k \lambda^k \mathbf{P}^{[k]}_\gamma (T),
\end{align*}
where the support of the two previous finite sums are \textit{a priori} different, depending on $T$ and $P_\gamma$. This explains the difference between Lemmas \ref{lem conf laplace non osc} and \ref{lem conf laplace osc} below.

\subsubsection{The Laplace-Beltrami operator}

The only differential operator in the hamiltonian constraint \eqref{hamiltonian constraint} is the Laplace-Beltrami operator associated to $\gamma$. If $h$ is a generic Riemannian metric on $\Sigma_0$, we define its Laplace-Beltrami operator by
\begin{align*}
\Delta_h f  = h^{ij}\left( \dr_i\dr_j f - \Gamma(h)_{ij}^k \dr_k f\right),
\end{align*}
for $f$ a scalar function. Note that $\Delta_e$ is the usual Laplacian operator on $\R^3$ and is denoted by $\Delta$ in this article.  Since $\gamma=\GO{\lambda^0}$ and $\dr\gamma=\GO{\lambda^0}$, if $f$ does not admit a high-frequency expansion then
\begin{align*}
 \Delta_\gamma f = \GO{\lambda^0}.
\end{align*}
If $f$ admit a high-frequency expansion we have the following lemma.

\begin{lem}\label{lem expansion laplace}
For $f\left( \frac{u_0}{\lambda}\right)$ an oscillating scalar function we have
\begin{align*}
\Delta_\gamma \left( f\left( \frac{u_0}{\lambda}\right)\right) &   = \frac{1}{\lambda^2}\mathbf{H}^{[-2]}(f)+\frac{1}{\lambda} \mathbf{H}^{[-1]}(f)  +  \mathbf{H}^{[\geq 0]}(f) ,
\end{align*}
with
\begin{align}
\mathbf{H}^{[-2]}(f) & = |\nabla u_0|_{\bar{g}_0}^2\dr_\theta^2f ,  \label{H-2 a}
\\\mathbf{H}^{[-1]}(f) & =  2\Bar{g}_0^{ij}\dr_i u_0 \dr_j\dr_\theta f + (\Tilde{\Delta}_{\Bar{g}_0}u_0)\dr_\theta f -\bar{g}_0^{ij}\Gamma(\bar{g}_0)_{ij}^k\dr_k u_0 \dr_\theta f  ,\label{H-1 a}
\end{align}
and
\begin{align}
\left| \mathbf{H}^{[\geq 0]}(f) \right| \lesssim \left| \gamma^{-1}\dr^2 f \right| + \left| \gamma^{-2}\dr\gamma \dr f \right|    .    \label{H0}
\end{align}
\end{lem}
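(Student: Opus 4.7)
The plan is to apply the chain rule to $f$, viewed as the evaluation at $\theta = u_0(x)/\lambda$ of an underlying function $\tilde{f}(x,\theta)$ of two independent variables, substitute into
\begin{equation*}
\Delta_\gamma f = \gamma^{ij}\left(\dr_i\dr_j f - \Gamma(\gamma)^k_{ij}\dr_k f\right),
\end{equation*}
and then expand the coefficients of $\gamma$ via \eqref{inverse gamma} and \eqref{Christoffel decomposition}. The chain rule applied at $\theta = u_0/\lambda$ gives
\begin{equation*}
\Delta_\gamma f = \Delta_\gamma\tilde{f} + \frac{2}{\lambda}\gamma^{ij}\dr_j u_0\,\dr_i\dr_\theta\tilde{f} + \frac{1}{\lambda}(\Delta_\gamma u_0)\dr_\theta\tilde{f} + \frac{1}{\lambda^2}|\nabla u_0|^2_\gamma\dr_\theta^2\tilde{f},
\end{equation*}
where $\Delta_\gamma\tilde{f}$ denotes the Laplace-Beltrami operator acting in $x$ with $\theta$ held fixed. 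Grouping by powers of $\lambda$ identifies the candidates for $\mathbf{H}^{[-2]}(f)$, $\mathbf{H}^{[-1]}(f)$, and the remainder.

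For the $\lambda^{-2}$ coefficient, I must reduce $|\nabla u_0|^2_\gamma$ to $|\nabla u_0|^2_{\bar{g}_0}$ modulo $O(\lambda^2)$, which is necessary to avoid a spurious $\lambda^{-1}$ contribution. By \eqref{inverse gamma}, the $O(\lambda)$ correction is proportional to $(\bar{F}^{(1)})^{ij}\dr_i u_0\,\dr_j u_0$. Using $\bar{g}_0^{ij}\dr_j u_0 = -|\nabla u_0|_{\bar{g}_0} N_0^i$, this contraction equals $|\nabla u_0|^2_{\bar{g}_0}\bar{F}^{(1)}_{N_0 N_0}$, which vanishes by the $P_{0,u}$-tangentiality of $\bar{F}^{(1)}$ established in Lemma \ref{lem F1}. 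This yields \eqref{H-2 a}.

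For the $\lambda^{-1}$ block, the factor $\gamma^{ij} = \bar{g}_0^{ij} + O(\lambda)$ reduces the mixed-derivative contribution to $2\bar{g}_0^{ij}\dr_j u_0\,\dr_i\dr_\theta\tilde{f}$ modulo non-singular terms. For $\Delta_\gamma u_0$, the expansion \eqref{Christoffel decomposition} would a priori produce an extra oscillating contribution $-\bar{g}_0^{ij}(\Tilde{\Gamma}^{(0)})^k_{ij}\dr_k u_0$ at the same order, but this vanishes identically thanks to \eqref{nice properties gamma tilde}. Therefore $\Delta_\gamma u_0 = \Delta_{\bar{g}_0}u_0 + O(\lambda) = \Tilde{\Delta}_{\bar{g}_0}u_0 - \bar{g}_0^{ij}\Gamma(\bar{g}_0)^k_{ij}\dr_k u_0 + O(\lambda)$, and a relabeling of dummy indices produces \eqref{H-1 a}.

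Finally, $\mathbf{H}^{[\geq 0]}(f)$ collects $\Delta_\gamma\tilde{f}$ together with the non-singular remainders generated above; each such term is schematically of the form $\gamma^{-1}\dr^2 f$ or $\gamma^{-2}\dr\gamma\,\dr f$, producing the bound \eqref{H0}. The only genuine subtleties are the two cancellations at orders $\lambda^{-2}$ and $\lambda^{-1}$: without the tangentiality of $\bar{F}^{(1)}$ and the identities \eqref{nice properties gamma tilde}, the expansion would carry singular oscillating terms obstructing the claim. Everything else is routine bookkeeping.
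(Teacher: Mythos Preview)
Your proof is correct and follows essentially the same route as the paper: expand $\Delta_\gamma f$ via the chain rule into $\lambda^{-2}$, $\lambda^{-1}$, and $O(1)$ pieces, then use $\bar{F}^{(1)}_{N_0 i}=0$ to reduce $|\nabla u_0|^2_\gamma$ to $|\nabla u_0|^2_{\bar{g}_0}$ and invoke \eqref{nice properties gamma tilde} to kill the $\tilde{\Gamma}^{(0)}$ contribution in the Christoffel part. The only cosmetic difference is that you package the $\lambda^{-1}$ Christoffel terms as $(\Delta_\gamma u_0)\dr_\theta f$ before expanding, whereas the paper treats $\tilde{\Delta}_\gamma u_0$ and $\gamma^{ij}\Gamma(\gamma)^k_{ij}\dr_k u_0$ separately.
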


\begin{proof}
We start with the definition of the Laplace-Beltrami operator :
\begin{equation*}
\Delta_\gamma \left( \ffi\left( \frac{u_0}{\lambda}\right)\right) = \gamma^{ij}\dr_i\dr_j\left(\ffi\left( \frac{u_0}{\lambda}\right) \right) - \gamma^{ij}\Gamma(\gamma)_{ij}^k\dr_k \left(\ffi\left( \frac{u_0}{\lambda}\right) \right).
\end{equation*}
Using the expansion of the inverse of $\gamma$ and $\bar{F}^{(1)}_{N_0 i}=0$ we have
\begin{align*}
\gamma^{ij}\dr_i\dr_j\left(\ffi\left( \frac{u_0}{\lambda}\right) \right)&=\frac{1}{\lambda^2}|\nabla u_0|_{\gamma}^2\dr_\theta^2\ffi
+\frac{1}{\lambda}\left( 2\gamma^{ij}\dr_i u_0 \dr_j\dr_\theta\ffi + (\Tilde{\Delta}_{\gamma}u_0)\dr_\theta\ffi \right)   + \GO{\lambda^0}
\\&=\frac{1}{\lambda^2}|\nabla u_0|_{\bar{g}_0}^2\dr_\theta^2\ffi + \frac{1}{\lambda}\left( 2\Bar{g}_0^{ij}\dr_i u_0 \dr_j\dr_\theta\ffi  + (\Tilde{\Delta}_{\Bar{g}_0}u_0)\dr_\theta\ffi  \right)  + \GO{\lambda^0},
\end{align*}
where $\tilde{\Delta}_h = h^{ij}\dr_i \dr_j$. Moreover from the decomposition of the Christoffel symbols \eqref{Christoffel decomposition} and \eqref{nice properties gamma tilde} we obtain
\begin{align*}
\gamma^{ij}\Gamma(\gamma)_{ij}^k\dr_k \left(\ffi\left( \frac{u_0}{\lambda}\right) \right) & =\frac{1}{\lambda}\bar{g}_0^{ij}\Gamma(\bar{g}_0)_{ij}^k\dr_k u_0 \dr_\theta\ffi   + \GO{\lambda}.
\end{align*}
The estimate \eqref{H0} simply comes from the definition of $\Delta_\gamma$.
\end{proof}

\subsubsection{The divergence operator}

The divergence operator appears in the momentum constraint \eqref{momentum constraint} but also in the definition of the parameter $\sigma$, which in particular needs to be a divergence free tensor for the metric $\gamma$. Recall that if $h$ is a Riemannian metric on $\Sigma_0$ and if $\D^{(h)}$ denotes the covariant derivative associated, then $\dive_h A = h^{k\ell}\D^{(h)}_k A_{\ell}$ and $(\dive_h B)_i = h^{k\ell}\D^{(h)}_k B_{\ell i}$ for $A$ a 1-tensor and $B$ a 2-tensor. The divergence operator only depends on first derivatives of $\gamma$ which are $\GO{\lambda^0}$ so we only need an expansion when the tensor on which it acts is itself oscillating.

\begin{lem}\label{lem divergence gamma}
For $A_{ij}\left(\frac{u_0}{\lambda}\right)$ an oscillating symmetric 2-tensor  we have 
\begin{align*}
\dive_\gamma A\left(\frac{u_0}{\lambda}\right)_\ell & = \frac{1}{\lambda}\mathbf{d}^{[-1]}_\ell(A) +   \mathbf{d}^{[0]}_\ell(A) + \lambda \mathbf{d}^{[\geq 1]}_\ell(A) ,
\end{align*}
with
\begin{align}
\mathbf{d}^{[-1]}_\ell(A) & = -  |\nabla u_0|_{\Bar{g}_0} \dr_\theta A_{N_0\ell} ,\label{d-1}
\\ \mathbf{d}^{[0]}_\ell(A) & =\dive_{\Bar{g}_0} A_\ell - (\Bar{g}_0)^{ij} (\Tilde{\Gamma}^{(0)})^a_{i\ell} A_{aj},\label{d0}
\end{align}
and
\begin{align}
\left| \mathbf{d}^{[\geq 1]}_\ell(A) \right| \lesssim \left| \gamma^{-1}\dr A \right| + \left| \gamma^{-2}\dr\gamma  A \right|    . \label{d1}
\end{align}
Moreover
\begin{align}
\mathbf{d}^{[0]}(A) & \simf (1+\sin(\theta))A, \label{d0 oscillation}
\\ \mathbf{d}^{[1]}(A) & \simf \sin(\theta) \dr_\theta A  +   (\cos(\theta) + \sin(2\theta)) A. \label{d1 oscillation}
\end{align}
\end{lem}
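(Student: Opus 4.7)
The plan is to start from the coordinate expression for the divergence on a $(0,2)$-tensor,
\begin{align*}
\dive_\gamma A_\ell & = \gamma^{ij}\bigl( \dr_i A_{j\ell} - \Gamma(\gamma)^a_{ij} A_{a\ell} - \Gamma(\gamma)^a_{i\ell} A_{ja} \bigr),
\end{align*}
and expand every object according to its $\lambda$-expansion. I would apply the chain rule $\dr_i (A(u_0/\lambda)) = \lambda^{-1}\dr_i u_0\, \dr_\theta A + \dr_i^x A$, plug in the expansion \eqref{inverse gamma} of $\gamma^{ij}$ and the Christoffel decomposition \eqref{Christoffel decomposition}, and collect terms order by order in $\lambda$. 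Since $\Gamma(\gamma) = \GO{\lambda^0}$, the only source of a $\lambda^{-1}$ contribution is the chain rule acting on $A$, and only through the $\bar{g}_0^{ij}$-part of $\gamma^{ij}$; using $\bar{g}_0^{ij}\dr_i u_0 = -|\nabla u_0|_{\bar{g}_0} N_0^j$ this immediately produces \eqref{d-1}.

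At order $\lambda^0$ there are four candidate contributions: (i) $\bar{g}_0^{ij}\dr_i^x A_{j\ell}$; (ii) $(\gamma^{-1})^{(1)}{}^{ij}\dr_i u_0\,\dr_\theta A_{j\ell}$ from pairing the $\lambda^1$ piece of $\gamma^{-1}$ with the $\lambda^{-1}$ piece of the chain rule; (iii) the Christoffel contractions from $-\bar{g}_0^{ij}\Gamma(\gamma)^a_{ij} A_{a\ell}$; and (iv) the Christoffel contractions from $-\bar{g}_0^{ij}\Gamma(\gamma)^a_{i\ell} A_{ja}$. Term (ii) vanishes thanks to the tangentiality $(\bar{F}^{(1)})^{ij}\dr_i u_0 = 0$ (used via \eqref{inverse gamma} and \eqref{def Fbar 1}). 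The second summand of term (iii), namely $\bar{g}_0^{ij}(\tilde{\Gamma}^{(0)})^a_{ij}$, is killed by the identity \eqref{nice properties gamma tilde}. What remains is exactly $\dive_{\bar{g}_0} A_\ell$ plus the correction $-\bar{g}_0^{ij}(\tilde{\Gamma}^{(0)})^a_{i\ell} A_{aj}$, giving \eqref{d0}. The pointwise remainder bound \eqref{d1} then follows by reading off the structural form of the divergence: every term of order $\lambda^1$ or higher involves either one factor of an inverse metric paired with one derivative of $A$, or two factors of an inverse metric paired with one derivative of $\gamma$ and with $A$.

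For the oscillating structure claims \eqref{d0 oscillation}--\eqref{d1 oscillation}, I would use the fact that from \eqref{gamma tilde 0}, $\tilde{\Gamma}^{(0)} \simf \sin(\theta)$, so the correction in \eqref{d0} contributes the $\sin(\theta)A$ piece while $\dive_{\bar{g}_0} A$ contributes the $1\cdot A$ piece, yielding \eqref{d0 oscillation}. For \eqref{d1 oscillation}, I would track the candidate sources of $\mathbf{d}^{[1]}_\ell(A)$: the product $(\gamma^{-1})^{(2)}\dr u_0\, \dr_\theta A$, the product $(\gamma^{-1})^{(1)}\dr^x A$, the contraction of $\Gamma^{(1)}$ with $A$, and cross terms $(\gamma^{-1})^{(1)}\cdot(\Gamma(\bar{g}_0)+\tilde{\Gamma}^{(0)})\cdot A$. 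By \eqref{inverse gamma}, $(\gamma^{-1})^{(2)} \simf 1 + \sin(\theta) + \cos(2\theta)$, but the non-oscillating piece is proportional to $\bar{g}_0^{ik}(\bar{F}^{(1)})^{j\ell}\bar{F}^{(1)}_{k\ell}$, and when contracted against $\dr_i u_0$ it produces $\bar{F}^{(1)}_{N_0\ell} = 0$; this is the key cancellation that forbids a pure $\dr_\theta A$ term and leaves only $\sin(\theta)\dr_\theta A$. The remaining terms then carry the oscillating labels $\cos(\theta)$ (from $(\gamma^{-1})^{(1)}$ and from the $\cos(\theta)\bar{Q}$ piece of $\Gamma^{(1)}$) and $\sin(2\theta)$ (from the $\sin(2\theta)$ term in \eqref{gamma 1} and from $\cos(\theta)\cdot\sin(\theta) = \tfrac{1}{2}\sin(2\theta)$), which is exactly \eqref{d1 oscillation}.

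The main obstacle I expect is the bookkeeping in step (ii) of the $\lambda^0$ analysis and in the $\lambda^1$ oscillating-structure count: the lemma's clean form relies on two separate cancellations \big($(\bar{F}^{(1)})^{ij}\dr_i u_0 = 0$ and \eqref{nice properties gamma tilde}\big), and missing either one would introduce spurious terms. Everything else is a careful but routine expansion.
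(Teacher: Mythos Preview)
Your proposal is correct and follows essentially the same approach as the paper: both start from the coordinate expression $\dive_\gamma A_\ell = \gamma^{ij}\D^{(\gamma)}_i A_{j\ell}$, expand using \eqref{inverse gamma} and \eqref{Christoffel decomposition}, and invoke the two cancellations $(\bar{F}^{(1)})^{ij}\dr_i u_0=0$ and $\bar{g}_0^{ij}(\tilde{\Gamma}^{(0)})^a_{ij}=0$ to isolate \eqref{d-1}--\eqref{d0}. Your treatment of \eqref{d1 oscillation} --- in particular the observation that the entire $\cos^2(\theta)\,\bar{g}_0^{ik}(\bar{F}^{(1)})^{ja}\bar{F}^{(1)}_{ka}\dr_i u_0\,\dr_\theta A_{j\ell}$ contribution vanishes via $\bar{F}^{(1)}_{N_0 a}=0$, leaving only the $\sin(\theta)\dr_\theta A$ piece from $(\omega^{(2)})^{ij}\dr_i u_0$ --- matches exactly what the paper reads off from its explicit formula \eqref{dive A calcul preuve}.
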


\begin{proof}
We start by using the expansion of the Christoffel's symbols from \eqref{Christoffel decomposition}:
\begin{align*}
\D^{(\gamma)}_i A\left(\frac{u_0}{\lambda}\right)_{j\ell} & =  \frac{1}{\lambda} \dr_i u_0 \dr_\theta A_{j\ell} + \Db_i A_{j\ell} - (\Tilde{\Gamma}^{(0)})^a_{i(j}  A_{a\ell)} - \lambda (\Gamma^{(1)})^a_{i(j} A_{a\ell)} + \GO{\lambda^2}.
\end{align*}
We now use the expansion of the inverse of $\gamma$:
\begin{align}
\dive_\gamma A\left(\frac{u_0}{\lambda}\right)_\ell  & = \gamma^{ij} \D^{(\gamma)}_i A\left(\frac{u_0}{\lambda}\right)_{j\ell} \nonumber
\\& = - \frac{1}{\lambda} |\nabla u_0|_{\Bar{g}_0} \dr_\theta A_{N_0\ell} + \dive_{\Bar{g}_0} A_\ell - \bar{g}_0^{ij}(\Tilde{\Gamma}^{(0)})^a_{i(j}  A_{a\ell)} - \cos\left(\frac{u_0}{\lambda}\right) (\bar{F}^{(1)})^{ij}\dr_i u_0 \dr_\theta A_{j\ell}\label{dive A calcul preuve}
\\&\quad + \lambda \left[     \cos^2 \left(\frac{u_0}{\lambda}\right)  \bar{g}_0^{ik} (\bar{F}^{(1)})^{ja} \bar{F}^{(1)}_{ka}\dr_i u_0 \dr_\theta A_{j\ell}   -\sin\left( \frac{u_0}{\lambda}\right) (\omega^{(2)})^{ij}\dr_i u_0 \dr_\theta A_{j\ell}        \right.\nonumber
\\&\qquad\qquad \qquad       \left.                 - \cos\left(\frac{u_0}{\lambda}\right) (\bar{F}^{(1)})^{ij} \left( \Db_i A_{j\ell} - (\Tilde{\Gamma}^{(0)})^a_{i(j}  A_{a\ell)} \right) -\bar{g}_0^{ij}  (\Gamma^{(1)})^a_{i(j} A_{a\ell)}             \right]\nonumber
\\&\quad + \GO{\lambda^2}.\nonumber
\end{align}
This concludes the proof, using $\bar{F}^{(1)}_{N_0i}=0$ and $\Bar{g}_0^{ij}(\Tilde{\Gamma}^{(0)})^a_{ij}=0$. The estimate \eqref{d1} just comes from the definition of the divergence. The oscillating behaviours \eqref{d0 oscillation} and \eqref{d1 oscillation} can be directly read on \eqref{dive A calcul preuve}.
\end{proof}

\subsubsection{The conformal Killing operator}

In order to compute $L_\gamma W$ in the ansatz for $K$ \eqref{K}, we need to expand the conformal Killing operator defined by
\begin{align}
(L_h A)_{ij} = \D^{(h)}_{(i} A_{j)} - \frac{2}{3}(\dive_h A)h_{ij}, \label{def L W}
\end{align}
where $A$ is a 1-tensor. Note that $L_h A$ is a symmetric 2-tensor traceless with respect to $h$.  As for the divergence operator, $L_\gamma$ only depends on first derivatives of $\gamma$ so only an expansion of $L_\gamma W$ with $W$ oscillating is necessary.

\begin{lem}\label{lem conformal killing}
Let $W\left(\frac{u_0}{\lambda} \right)$ be an oscillating 1-form. We have
\begin{equation*}
 L_\gamma\left( W\left( \frac{u_0}{\lambda} \right)  \right)_{ij} = \frac{1}{\lambda} \mathbf{K}_{ij}^{[-1]}(W) +\mathbf{K}_{ij}^{[0]}(W) +\lambda \mathbf{K}_{ij}^{[\geq 1]}(W),
\end{equation*}
where
\begin{align}
\mathbf{K}_{ij}^{[-1]}(W) & = \dr_{(i} u_0 \dr_\theta W_{j)} + \frac{2}{3}|\nabla u_0|_{\bar{g}_0}(\Bar{g}_0)_{ij} \dr_\theta W_{N_0},\label{K-1}
\\ \mathbf{K}_{ij}^{[0]}(W) & = \Db_{(i}W_{j)}  -  2 W_k (\Tilde{\Gamma}^{(0)})^k_{ij}  + \frac{2}{3}\cos\left( \frac{u_0}{\lambda}\right)|\nabla u_0|_{\bar{g}_0}\Bar{F}^{(1)}_{ij} \dr_\theta W_{N_0} - \frac{2}{3} \dive_{\bar{g}_0}W (\Bar{g}_0)_{ij},\label{K 0}
\end{align}
and
\begin{align}
\left|  \mathbf{K}_{ij}^{[\geq 1]}(W)  \right| \lesssim  \left| \dr W \right|   +   \left| \gamma^{-1}\dr \gamma W \right|  .
\end{align}
The following hold:
\begin{align}
\tr_{\Bar{g}_0}\mathbf{K}^{[-1]}(W) & = 0,\label{trace K-1}
\\ \tr_{\Bar{g}_0}\mathbf{K}^{[0]}(W)  & = \cos\left( \frac{u_0}{\lambda}\right)  \tr_{\Bar{F}^{(1)}}\mathbf{K}^{[-1]}(W) .\label{trace K0}
\end{align}
The following oscillating behaviour holds
\begin{align}
\mathbf{K}^{[0]}(W) & \simf (1 + \sin(\theta))W + \cos(\theta) \dr_\theta W           . \label{K0 oscillation}
\end{align}
\end{lem}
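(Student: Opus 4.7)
The strategy parallels Lemma \ref{lem divergence gamma}. We start from \eqref{def L W} and expand separately $\D^{(\gamma)}_{(i} W_{j)}$ and $\frac{2}{3}(\dive_\gamma W)\gamma_{ij}$. Plugging the Christoffel expansion \eqref{Christoffel decomposition} into $\D^{(\gamma)}_i W_j = \partial_i W_j - \Gamma(\gamma)^k_{ij}W_k$ and using the chain rule for $\partial_i\bigl(W_j(u_0/\lambda)\bigr)$ gives
\begin{align*}
\D^{(\gamma)}_{(i} W_{j)}\left(\frac{u_0}{\lambda}\right) = \frac{1}{\lambda}\partial_{(i}u_0\,\partial_\theta W_{j)} + \Db_{(i} W_{j)} - 2(\Tilde{\Gamma}^{(0)})^k_{ij} W_k + \GO{\lambda},
\end{align*}
where the factor $2$ in front of the Christoffel correction comes from the symmetrization convention $(ij)$ used in \eqref{def L W}. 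For the trace piece, the computation performed in the proof of Lemma \ref{lem divergence gamma} adapts verbatim with the 2-tensor $A$ replaced by the 1-form $W$, yielding $\dive_\gamma W = -\lambda^{-1}|\nabla u_0|_{\bar{g}_0}\partial_\theta W_{N_0} + \dive_{\bar{g}_0} W + \GO{\lambda}$.

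Multiplying by $\frac{2}{3}\gamma_{ij} = \frac{2}{3}(\bar{g}_0)_{ij} + \frac{2\lambda}{3}\cos(u_0/\lambda)\bar{F}^{(1)}_{ij} + \GO{\lambda^2}$ and subtracting from the previous expansion, one reads off the expressions \eqref{K-1} and \eqref{K 0}; in particular the term $\frac{2}{3}\cos(u_0/\lambda)|\nabla u_0|_{\bar{g}_0}\bar{F}^{(1)}_{ij}\partial_\theta W_{N_0}$ in $\mathbf{K}^{[0]}(W)$ arises from the cross-term between the leading singular part of $\dive_\gamma W$ and the $\lambda^1$ correction to $\gamma_{ij}$. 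The estimate on $\mathbf{K}^{[\geq 1]}(W)$ is schematic, following from the structure $L_\gamma \sim \partial + \gamma^{-1}\partial\gamma$ applied to $W(u_0/\lambda)$.

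The trace identities \eqref{trace K-1} and \eqref{trace K0} are most naturally obtained as consequences of the exact identity $\gamma^{ij}(L_\gamma W)_{ij} \equiv 0$ (tracelessness of $L_\gamma W$ with respect to $\gamma$) expanded in $\lambda$ by means of \eqref{inverse gamma}: the $\lambda^{-1}$ order yields \eqref{trace K-1} and the $\lambda^0$ order yields \eqref{trace K0}. Both can alternatively be verified by direct computation using $\bar{F}^{(1)}_{N_0 i}=0$, $\tr_{\bar{g}_0}\bar{F}^{(1)}=0$ and \eqref{nice properties gamma tilde}. Finally, the oscillating behaviour \eqref{K0 oscillation} is read off \eqref{K 0}: the terms $\Db_{(i} W_{j)}$ and $\dive_{\bar{g}_0}W\,(\bar{g}_0)_{ij}$ are non-oscillating, the $(\Tilde{\Gamma}^{(0)})^k_{ij} W_k$ term carries a $\sin\theta$ factor by \eqref{gamma tilde 0}, and the last term of \eqref{K 0} carries $\cos\theta\,\partial_\theta W_{N_0}$. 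The main computational obstacle is the careful bookkeeping of the $\lambda^0$ contribution, especially the identification of the $\bar{F}^{(1)}$-dependent cross-term described above.
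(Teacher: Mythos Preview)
Your proof is correct and follows essentially the same approach as the paper: expand the covariant derivative and the divergence separately using the Christoffel decomposition \eqref{Christoffel decomposition} and the inverse metric expansion \eqref{inverse gamma}, then combine them via \eqref{def L W} and simplify with $\bar{F}^{(1)}_{N_0 i}=0$ and $\bar{g}_0^{ij}(\Tilde{\Gamma}^{(0)})^k_{ij}=0$. Your derivation of the trace identities \eqref{trace K-1}--\eqref{trace K0} by expanding the exact identity $\tr_\gamma(L_\gamma W)\equiv 0$ order by order in $\lambda$ is a nice conceptual shortcut; the paper instead simply states that these follow directly from the explicit formulas \eqref{K-1}--\eqref{K 0}, which is the direct-computation route you also mention.
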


\begin{proof}
We start with the covariant derivative of the oscillating 1-form $W^{(\cdot)}\left( \frac{u_0}{\lambda} \right)$:
\begin{align*}
\D^{(\gamma)}_i \left(W\left( \frac{u_0}{\lambda}\right) \right)_j & = \frac{1}{\lambda}\dr_i u_0 \dr_\theta W_j + \Db_iW_j\left( \frac{u_0}{\lambda}\right)- W_k\left(   (\Tilde{\Gamma}^{(0)})^k_{ij}+\lambda (\Gamma^{(1)})^k_{ij} + \lambda^2 (\Gamma^{(2)})^k_{ij}+\cdots \right),
\end{align*}
which also gives us the divergence 
\begin{align*}
\dive_\gamma\left(W\left( \frac{u_0}{\lambda}\right) \right) & = \gamma^{ij}\D^{(\gamma)}_i \left(W\left( \frac{u_0}{\lambda}\right) \right)_j
\\& = \frac{1}{\lambda}(\Bar{g}_0)^{k\ell}\dr_k u_0 \dr_\theta W_\ell -\cos\left(\frac{u_0}{\lambda}\right)(\Bar{F}^{(1)})^{k\ell}\dr_k u_0 \dr_\theta W_\ell 
\\&\quad + \dive_{\bar{g}_0}W  - (\Bar{g}_0)^{k\ell}(\Tilde{\Gamma}^{(0)})^a_{k\ell}W_a   +\GO{\lambda}.
\end{align*}
We conclude the proof of \eqref{K-1}-\eqref{K 0} with
\begin{align*}
L_\gamma\left( W\left( \frac{u_0}{\lambda} \right)  \right)_{ij} & = \D^{(\gamma)}_{(i} \left(W\left( \frac{u_0}{\lambda}\right) \right)_{j)} - \frac{2}{3}\dive_\gamma\left(W\left( \frac{u_0}{\lambda}\right) \right)\gamma_{ij},
\end{align*}
and $\bar{F}^{(1)}_{N_0i}=0$ and $(\Bar{g}_0)^{ij}(\Tilde{\Gamma}^{(0)})^a_{ij}=0$.  The trace identities \eqref{trace K-1}-\eqref{trace K0} and the oscillating behaviour \eqref{K0 oscillation} follows directly from \eqref{K-1}-\eqref{K 0}.
\end{proof}

\begin{remark}
In the previous lemma we used the notation $\tr_{\bar{F}^{(1)}} \mathbf{K}^{[-1]}(W)$ to denote \[(\bar{F}^{(1)})^{ij}\mathbf{K}^{[-1]}_{ij}(W) = \bar{g}_0^{ik}\bar{g}_0^{j\ell} \bar{F}^{(1)}_{k\ell} \mathbf{K}^{[-1]}_{ij}(W),  \] even though $\bar{F}^{(1)}$ is not a Lorentzian metric.
\end{remark}

\subsubsection{The conformal Laplacian}\label{section conformal laplacian}

The conformal Laplacian associated to a Riemannian metric $h$ is the operator $\dive_h L_h $ acting on 1-tensor.  It is the only operator considered in this article depending on $\dr^2\gamma$, i.e second derivatives of the metric $\gamma$. Indeed $L_\gamma$ contains the Christoffel symbols of $\gamma$ through the covariant derivative, and they are differenciated once by the divergence. Since $\dr^2\gamma = \GO{\lambda^{-1}}$, this a major difference with the Laplace-Beltrami operator or the divergence operator from the point of view of expanding quantities in powers of $\lambda$. Indeed, this implies that even if the 1-form $\beta$ is not oscillating, the quantity $\dive_\gamma L_\gamma \beta$ still loses one power of $\lambda$. If $\beta$ is oscillating, then $\dive_\gamma L_\gamma \beta$ loses two powers of $\lambda$ since the conformal Laplacian is a second order operator. This explains the two following lemmas.

\begin{lem}\label{lem conf laplace non osc}
Let $\beta$ be a 1-form.  We have
\begin{align*}
\dive_\gamma L_\gamma \beta & = \frac{1}{\lambda} M^{[-1]}(\beta) + M^{[\geq 0]}(\beta),
\end{align*}
with
\begin{align}
M^{[-1]}_\ell(\beta) & =  |\nabla u_0|^2_{\bar{g}_0}   \cos\left(\frac{u_0}{\lambda}\right)  \bar{g}_0^{ij}   \bar{F}^{(1)}_{i \ell}   \beta_j , \label{m-1}
\end{align}
and
\begin{align}
\left| M^{[\geq 0]}(\beta) - \dive_e L_e \beta  \right| & \lesssim \left| \gamma^{-1} - e^{-1}\right| |\dr^2\beta| \label{estimation m0} + \left|\gamma^{-2}\dr\gamma\right| \left| 1 + \gamma^{-1}\gamma\right|   |\dr\beta| 
\\&\quad + \left|\gamma^{-3}(\dr\gamma)^2 + \gamma^{-2}(\dr^2\gamma)^{(\geq 0)} \right| \left| 1 + \gamma^{-1}\gamma\right|   |\beta|\nonumber.
\end{align}
\end{lem}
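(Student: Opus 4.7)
The plan is to unpack the definition $\dive_\gamma L_\gamma \beta_d = \gamma^{ac}\D^{(\gamma)}_a (L_\gamma\beta)_{cd}$ and carefully track the powers of $\lambda$. Since $\beta$ does not oscillate, every $\lambda^{-1}$ factor must come from a derivative falling on the oscillating part of a coefficient. Using the decomposition \eqref{Christoffel decomposition}, the Christoffel symbols $\Gamma(\gamma) = \Gamma(\bar{g}_0) + \tilde{\Gamma}^{(0)} + \lambda\Gamma^{(1)} + \GO{\lambda^2}$ are uniformly $\GO{\lambda^0}$, so $L_\gamma\beta$ itself is $\GO{\lambda^0}$. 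Consequently, the two $\Gamma(\gamma)\cdot L_\gamma\beta$ terms produced by the outer covariant derivative $\D^{(\gamma)}_a$ contribute only at order $\lambda^0$ and cannot feed into $M^{[-1]}(\beta)$.

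The only source of a $\lambda^{-1}$ singularity is therefore $\gamma^{ac}\dr_a L_\gamma\beta_{cd}$, and inside it only the subpieces $-2\gamma^{ac}(\dr_a\Gamma(\gamma)^k_{cd})\beta_k$ and $-\tfrac{2}{3}\gamma^{ac}(\dr_a\dive_\gamma\beta)\gamma_{cd}$; the pure second derivatives of $\beta$, the products $\Gamma(\gamma)\dr\beta$, and the product $\dive_\gamma\beta\cdot\dr\gamma$ are all $\GO{\lambda^0}$. In each subpiece, the singular contribution arises when the derivative hits the oscillating factor of $\tilde{\Gamma}^{(0)}$, yielding $\tfrac{\dr_a u_0}{\lambda}\dr_\theta(\tilde{\Gamma}^{(0)})^k_{cd}$. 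The trace identity $\bar{g}_0^{ij}(\tilde{\Gamma}^{(0)})^k_{ij}=0$ from \eqref{nice properties gamma tilde} annihilates the singular part of $\dr_a\dive_\gamma\beta$, leaving
\[ \dive_\gamma L_\gamma \beta_d\big|_{\lambda^{-1}} = -\tfrac{2}{\lambda}\bar{g}_0^{ac}\dr_a u_0\,\dr_\theta(\tilde{\Gamma}^{(0)})^k_{cd}\beta_k. \]
Substituting the explicit formula \eqref{gamma tilde 0} and using $\bar{g}_0^{ac}\dr_a u_0 = -|\nabla u_0|_{\bar{g}_0}N_0^c$ together with $\bar{F}^{(1)}_{N_0 i}=0$, the second and third of the three pieces produced by $\dr_\theta(\tilde{\Gamma}^{(0)})^k_{cd}$ vanish, and the first collapses to the announced expression \eqref{m-1}.

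For the remainder estimate \eqref{estimation m0}, I would rewrite $M^{[\geq 0]}(\beta) - \dive_e L_e\beta$ as the full operator difference $\dive_\gamma L_\gamma\beta - \dive_e L_e\beta$ minus the extracted $\tfrac{1}{\lambda}M^{[-1]}(\beta)$. A schematic expansion then shows that the second-derivative-in-$\beta$ pieces carry a coefficient $\gamma^{-1}-e^{-1}$ (since $\dive_e L_e$ produces only such terms with constant Euclidean coefficients), the first-derivative pieces all come with an extra factor of $\dr\gamma$ and a prefactor of order $\gamma^{-2}(1+\gamma^{-1}\gamma)$, while the zero-derivative pieces carry either $(\dr\gamma)^2$ or $(\dr^2\gamma)^{(\geq 0)}$ --- the singular $(\dr^2\gamma)^{(-1)}$ piece having already been absorbed into $M^{[-1]}$. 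The main obstacle is the bookkeeping: one must verify systematically that no additional $\lambda^{-1}$ term lies hidden in $\Gamma(\gamma)\cdot L_\gamma\beta$, in $\dr_a\gamma_{cd}\cdot\dive_\gamma\beta$, or in the inverse-metric expansion $\gamma^{ac} = \bar{g}_0^{ac} + \GO{\lambda}$, which ultimately reduces to repeated use of \eqref{nice properties gamma tilde} and $\bar{F}^{(1)}_{N_0 i}=0$.
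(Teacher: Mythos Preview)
Your proposal is correct and follows essentially the same approach as the paper: both identify the $\lambda^{-1}$ contribution as coming solely from derivatives hitting the oscillating part of $\tilde{\Gamma}^{(0)}$, use the trace identity $\bar{g}_0^{ij}(\tilde{\Gamma}^{(0)})^k_{ij}=0$ from \eqref{nice properties gamma tilde} to kill the $\dive_\gamma\beta$ contribution, and then simplify with $\bar{F}^{(1)}_{N_0 i}=0$ to obtain \eqref{m-1}. The only cosmetic difference is that the paper first writes an expanded coordinate formula for $\dive_{\bar g}L_{\bar g}\beta$ (equation \eqref{laplacien conforme formule exacte}) and reads off the singular term from it, whereas you work layer by layer through the composition $\gamma^{ac}\D^{(\gamma)}_a(L_\gamma\beta)_{cd}$; both routes land on $-2\beta_k\bar{g}_0^{ij}\dr_i u_0\,\dr_\theta(\tilde{\Gamma}^{(0)})^k_{j\ell}$ and the same schematic remainder estimate.
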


\begin{proof}
Let $\bar{g}$ be any Riemannian metric on $\R^3$, we have
\begin{align}
\dive_{\bar{g}} L_{\bar{g}} \beta_\ell & = \bar{g}^{ij} \left(  \dr_i \dr_{(j} \beta_{\ell)} - \frac{2}{3} \dr_\ell \dr_i \beta_j  \right)  + \beta_k \left(  -2\bar{g}^{ij}\dr_i \Gamma(\bar{g})^k_{j\ell} + \frac{2}{3}\dr_\ell\left( \bar{g}^{ij}\Gamma(\bar{g})^k_{ij}  \right)   \right)\label{laplacien conforme formule exacte}
\\&\quad - \bar{g}^{ij}\left( 2\Gamma(\bar{g})^k_{j\ell}\dr_i \beta_k + \frac{2}{3}\dive_{\bar{g}}\beta \dr_i\bar{g}_{j\ell} \right) + \frac{2}{3}\left(  -\dr_i\beta_j \dr_\ell \bar{g}^{ij} + \bar{g}^{ij}\Gamma(\bar{g})^k_{ij} \dr_\ell\beta_k   \right) \nonumber
\\&\quad - \bar{g}^{ij}\Gamma(\bar{g})^k_{i(j}L_{\bar{g}}\beta_{k\ell)}.\nonumber
\end{align}
In the case of the high-frequency metric $\gamma$, the only terms loosing $\frac{1}{\lambda}$, i.e contributing to $M^{[-1]}(\beta)$,  are the $\gamma^{-2}\dr^2\gamma \beta$.  More precisely, it concerns terms involving $\dr \tilde{\Gamma}^{(0)}$, and since $\bar{g}_0^{ij}(\tilde{\Gamma}^{(0)})^k_{ij}=0$, the only contribution is
\begin{align*}
M^{[-1]}_\ell(\beta) & = -2 \beta_k\bar{g}^{ij}\dr_i u_0 \dr_\theta (\tilde{\Gamma}^{(0)})^k_{j\ell} = |\nabla u_0|^2_{\bar{g}_0}   \cos\left(\frac{u_0}{\lambda}\right)  \bar{g}_0^{ij}   \bar{F}^{(1)}_{i \ell}   \beta_j ,
\end{align*}
where we used \eqref{gamma tilde 0} and $\bar{F}^{(1)}_{N_0 i}=0$. From \eqref{laplacien conforme formule exacte} we also get
\begin{align*}
\dive_{e} L_{e} \beta_\ell & = e^{ij} \left(  \dr_i \dr_{(j} \beta_{\ell)} - \frac{2}{3} \dr_\ell \dr_i \beta_j  \right),
\end{align*}
which concludes the proof of \eqref{estimation m0}.
\end{proof}

The estimation \eqref{estimation m0} will allow us to invert the operator $M^{[\geq 0]}$, since we know how to invert $\dive_e L_e$ (see Proposition \ref{prop Delta et dive e L e}) and since we gain a smallness constant $\e$ in front of $\dr^2\beta$ thanks to \eqref{estim g0 K0} and \eqref{inverse gamma}. We now state the final lemma of this section,  which will allow us to construct high-frequency solutions of the momentum constraint. The proof is left to the reader since it follows from Lemmas \ref{lem divergence gamma} and \ref{lem conformal killing}.

\begin{lem}\label{lem conf laplace osc}
Let $W\left(\frac{u_0}{\lambda} \right)$ be an oscillating 1-form. We have
\begin{align*}
\dive_\gamma L_\gamma\left( W\left( \frac{u_0}{\lambda} \right)  \right)_\ell & = \frac{1}{\lambda^2} \mathbf{M}^{[-2]}_\ell(W) + \frac{1}{\lambda}\mathbf{M}^{[-1]}_\ell(W)+ \mathbf{M}^{[\geq 0]}_\ell(W) ,
\end{align*}
with
\begin{align}
 \mathbf{M}^{[-2]}_\ell(W) & = \mathbf{d}^{[-1]}_\ell\left( \mathbf{K}^{[-1]} (W)  \right),\label{M-2}
 \\ \mathbf{M}^{[-1]}_\ell(W) & = \mathbf{d}^{[-1]}_\ell\left( \mathbf{K}^{[0]} (W)  \right) +  \mathbf{d}^{[0]}_\ell\left( \mathbf{K}^{[-1]} (W)  \right).\label{M-1}
\end{align}
The following oscillating behaviour holds
\begin{align}
\mathbf{M}^{[-1]}(W) & \sim    \cos(\theta) W   +  \cos(\theta) \dr^2_\theta W    + (1+\sin(\theta))   \dr_\theta W.\label{M-1 oscillation}
\end{align}
\end{lem}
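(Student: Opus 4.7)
The proof is a direct composition of Lemmas \ref{lem divergence gamma} and \ref{lem conformal killing}. First I would apply Lemma \ref{lem conformal killing} to $W(u_0/\lambda)$, obtaining $L_\gamma(W(u_0/\lambda)) = \lambda^{-1}\mathbf{K}^{[-1]}(W) + \mathbf{K}^{[0]}(W) + \lambda\,\mathbf{K}^{[\geq 1]}(W)$. The key observation is that each of these coefficients is itself an oscillating symmetric 2-tensor, either because it contains an explicit trigonometric factor in front of $W$ or $\dr W$ (as in \eqref{K 0}) or because it involves $\dr_\theta W$ (as in \eqref{K-1}). Consequently, Lemma \ref{lem divergence gamma} can be applied term by term to the expansion of $L_\gamma W$.

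Doing so and collecting powers of $\lambda$ gives
\begin{align*}
\dive_\gamma L_\gamma\left(W\left(\frac{u_0}{\lambda}\right)\right)_\ell &= \frac{1}{\lambda^2}\mathbf{d}^{[-1]}_\ell\bigl(\mathbf{K}^{[-1]}(W)\bigr) \\
&\quad + \frac{1}{\lambda}\left(\mathbf{d}^{[0]}_\ell\bigl(\mathbf{K}^{[-1]}(W)\bigr) + \mathbf{d}^{[-1]}_\ell\bigl(\mathbf{K}^{[0]}(W)\bigr)\right) + \mathbf{M}^{[\geq 0]}_\ell(W),
\end{align*}
which gives \eqref{M-2}-\eqref{M-1} by direct identification. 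The remainder $\mathbf{M}^{[\geq 0]}_\ell(W)$ absorbs all contributions of the form $\mathbf{d}^{[\geq 1]}(\mathbf{K}^{[-1]})$, $\mathbf{d}^{[0]}(\mathbf{K}^{[0]})$, $\mathbf{d}^{[-1]}(\mathbf{K}^{[\geq 1]})$, and higher; although the statement does not require an explicit formula, each piece is controlled by the estimates in the two lemmas.

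For the oscillating behaviour \eqref{M-1 oscillation}, I would analyse the two summands of $\mathbf{M}^{[-1]}(W)$ separately. The term $\mathbf{d}^{[0]}_\ell(\mathbf{K}^{[-1]}(W))$ inherits via \eqref{d0 oscillation} the behaviour $(1+\sin\theta)\mathbf{K}^{[-1]}(W)$; since \eqref{K-1} shows $\mathbf{K}^{[-1]}(W) \simf \dr_\theta W$, this produces a $(1+\sin\theta)\dr_\theta W$ contribution. The term $\mathbf{d}^{[-1]}_\ell(\mathbf{K}^{[0]}(W))$ is, by \eqref{d-1}, proportional to $\dr_\theta\bigl(\mathbf{K}^{[0]}(W)_{N_0\ell}\bigr)$; using \eqref{K0 oscillation} and differentiating in $\theta$ produces $\cos\theta\, W$, a term of the form $\sin\theta\,\dr_\theta W$ (from differentiating the prefactor in front of $W$), and $\cos\theta\,\dr^2_\theta W$ (from differentiating $\dr_\theta W$). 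Adding both contributions and collapsing overlapping trigonometric prefactors via the $\simf$ equivalence yields exactly \eqref{M-1 oscillation}.

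There is no serious obstacle, only bookkeeping. The one conceptually important feature is that $\mathbf{d}^{[-1]}$ itself differentiates once in $\theta$: this is the precise reason the $\dr^2_\theta W$ term appears in \eqref{M-1 oscillation}, and more generally the reason $\dive_\gamma L_\gamma$ behaves as a second-order operator in $\theta$, as anticipated in Section \ref{section conformal laplacian}.
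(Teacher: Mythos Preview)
Your proposal is correct and follows exactly the approach indicated by the paper, which explicitly leaves the proof to the reader as a direct composition of Lemmas \ref{lem divergence gamma} and \ref{lem conformal killing}. The only very minor slip is the parenthetical attribution of the $\sin\theta\,\dr_\theta W$ contribution (it comes from $\dr_\theta$ hitting $W$ in the $\sin\theta\cdot W$ term or from $\dr_\theta$ hitting $\cos\theta$ in $\cos\theta\,\dr_\theta W$, not from differentiating a prefactor), but this does not affect the argument since the $\simf$ relation only tracks which trigonometric types appear.
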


From \eqref{d-1}, \eqref{K-1} and \eqref{M-2} we obtain
\begin{align}
\mathbf{M}^{[-2]}_\ell(W) & = |\nabla u_0|^2_{\bar{g}_0} \left(  \dr^2_\theta W_\ell + \frac{1}{3}(N_0)_\ell \dr^2_\theta W_{N_0}   \right).\label{M-2 expression}
\end{align}
In order to solve the momentum constraint, we need to invert this operator. This is done in the following simple lemma.

\begin{lem}\label{inversion M-2}
If $\alpha$ and $\beta$ are 1-forms such that
\begin{align}
\alpha_\ell + \frac{1}{3} (N_0)_\ell \alpha_{N_0} = \beta_\ell, \label{eq alpha}
\end{align}
then
\begin{align*}
\alpha_\ell = \beta_\ell - \frac{1}{4}(N_0)_\ell \beta_{N_0}.
\end{align*}
\end{lem}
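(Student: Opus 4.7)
The equation \eqref{eq alpha} is a purely algebraic linear relation between $\alpha$ and $\beta$, so the plan is to solve it componentwise rather than by any clever manipulation. The key observation is that the map $\alpha \longmapsto \alpha + \frac{1}{3}(N_0) \alpha_{N_0}$ is a rank-one perturbation of the identity, so its inverse should also be a rank-one perturbation of the identity acting along the $N_0$ direction.

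Concretely, I would first contract \eqref{eq alpha} with the vector field $N_0$ (equivalently, extract the $N_0$-component). Using the normalization $\bar{g}_0(N_0,N_0) = 1$ recorded in Section \ref{section bg metric}, which gives $(N_0)_\ell N_0^\ell = 1$, the equation reduces to the scalar relation
\begin{align*}
\alpha_{N_0} + \frac{1}{3} \alpha_{N_0} = \beta_{N_0},
\end{align*}
from which $\alpha_{N_0} = \frac{3}{4}\beta_{N_0}$ follows immediately. Substituting this value of $\alpha_{N_0}$ back into \eqref{eq alpha} yields
\begin{align*}
\alpha_\ell = \beta_\ell - \frac{1}{3}(N_0)_\ell \cdot \frac{3}{4}\beta_{N_0} = \beta_\ell - \frac{1}{4}(N_0)_\ell \beta_{N_0},
\end{align*}
which is the announced formula.

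There is no real obstacle here; the only thing to be careful about is consistently using $\bar{g}_0$ to raise the index on $N_0$ when writing $\alpha_{N_0} = N_0^i \alpha_i$ and checking that the trace coefficient $(N_0)_\ell N_0^\ell$ is indeed $1$ and not some metric-dependent factor. One may also verify the formula by plugging the candidate expression back into the left-hand side of \eqref{eq alpha}: the two contributions $-\frac{1}{4}(N_0)_\ell \beta_{N_0}$ and $\frac{1}{3}(N_0)_\ell(\beta_{N_0} - \frac{1}{4}\beta_{N_0}) = \frac{1}{3}\cdot\frac{3}{4}(N_0)_\ell\beta_{N_0} = \frac{1}{4}(N_0)_\ell\beta_{N_0}$ cancel, leaving $\beta_\ell$, which confirms the result.
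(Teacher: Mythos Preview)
Your proof is correct and follows exactly the same approach as the paper: contract \eqref{eq alpha} with $N_0$ to obtain $\frac{4}{3}\alpha_{N_0}=\beta_{N_0}$, then substitute back. The additional verification you include is a nice sanity check but not required.
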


\begin{proof}
We contract \eqref{eq alpha} with the vector field $N_0$ to obtain
\begin{align*}
\frac{4}{3}\alpha_{N_0} = \beta_{N_0}.
\end{align*}
Inserting this into \eqref{eq alpha} concludes the proof.
\end{proof}

\section{Approximate solution to the constraint equations}\label{section approximate solution}

In this section, we construct an approximate solution to the constraint equations \eqref{hamiltonian constraint} and \eqref{momentum constraint} by solving the $\lambda^0$ and $\lambda^1$ Hamiltonian levels and the $\lambda^0$ momentum level.  In the process we will define the parameter $\tau$ and in Sections \ref{section almost TT} and \ref{section TT} we construct the parameter $\sigma$.

\subsection{The approximate Hamiltonian constraint}

We solve the first two levels of the Hamiltonian constraint by choosing $\ffi^{(2)}$ and $\ffi^{(3)}$.  At those levels, \eqref{hamiltonian constraint} decouples from \eqref{momentum constraint} since we replace $(\sigma + L_\gamma W)^{(\leq 1)}$ by $\left( K - \frac{1}{3}\tau \gamma \right)^{(\leq 1)}$, where $K^{(\leq 1)}$ and $\tau^{(\leq 1)}$ will be defined along the way.

\subsubsection{The $\lambda^0$ Hamiltonian level} \label{section ham 0}

We want $\ffi^{(2)}$ to solve the $\lambda^0$ Hamiltonian level. Using the expansion of the Laplace-Beltrami operator $\Delta_\gamma$ (see Lemma \ref{lem expansion laplace}), this is equivalent to the following equation:
\begin{align}
8 |\nabla u_0|_{\Bar{g}_0}^2 \dr^2_\theta \ffi^{(2)} =  R^{(0)} + \frac{2}{3}\left( \tau^{(0)}\right)^2 - \left|K^{(0)} -\frac{1}{3}\tau^{(0)}\bar{g}_0  \right|^2_{\bar{g}_0}.\label{eq ffi 2 first}
\end{align}
where $K^{(0)}$ is defined by
\begin{align}
K^{(0)} & = K_0 + \half \sin\left(\frac{u_0}{\lambda}\right) |\nabla u_0|_{\bar{g}_0} \bar{F}^{(1)},\label{def K0}
\end{align}
following the discussion of Section \ref{section strategy TT tensor}. As the parameter $\tau$ is the trace with respect to $\gamma$ of $K$, we have
\begin{align}
\tau^{(0)} & = \tr_{\bar{g}_0}K^{(0)}\nonumber
\\&= \tr_{\bar{g}_0} K_0 + \half \sin\left(\frac{u_0}{\lambda}\right) |\nabla u_0|_{\bar{g}_0} \tr_{\bar{g}_0} \bar{F}^{(1)} \nonumber
\\& = 0,\label{tau0=0}
\end{align}
where we used \eqref{maximal} and Lemma \ref{lem F1}. Therefore, \eqref{eq ffi 2 first} rewrites
\begin{align}
8 |\nabla u_0|_{\Bar{g}_0}^2 \dr^2_\theta \ffi^{(2)} =  R^{(0)}  - \left|K^{(0)}   \right|^2_{\bar{g}_0}.\label{eq ffi 2}
\end{align}
The LHS of \eqref{eq ffi 2} being a derivative with respect to $\theta$, we need the RHS to be purely oscillating.  Since this shows how the creation of non-oscillating terms is absorbed by the background constraint equations, we state this in a separate lemma.

\begin{lem}
We have
\begin{align*}
R^{(0)}  - \left|K^{(0)}   \right|^2_{\bar{g}_0} & = \sin\left(\frac{u_0}{\lambda}\right) |\nabla u_0|_{\Bar{g}_0}   \bar{F}^{(1)}_{\ell j} \left( -\bar{g}_0^{ij} \dr_i N_0^\ell + \half N_0 \bar{g}_0^{\ell j} - K_0^{\ell j} \right)   -6 \cos\left(  \frac{2u_0}{\lambda}  \right) |\nabla u_0|_{\Bar{g}_0}^2 F_0^2 .
\end{align*}
\end{lem}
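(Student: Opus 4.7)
The plan is purely computational: I will substitute the expression for $R^{(0)}$ given by Lemma \ref{lem R(gamma)} and for $K^{(0)}$ given by \eqref{def K0} into the left-hand side, expand $|K^{(0)}|^2_{\bar g_0}$, and use the background Hamiltonian constraint \eqref{hamiltonian background} together with the maximality \eqref{maximal} to eliminate all non-oscillating contributions. The point of the lemma is precisely that this non-oscillating cancellation happens, so the calculation must exhibit it cleanly.

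First I expand the square using \eqref{def K0}:
\begin{align*}
\left|K^{(0)}\right|^2_{\bar g_0} &= |K_0|^2_{\bar g_0} + \sin\!\left(\tfrac{u_0}{\lambda}\right) |\nabla u_0|_{\bar g_0}\,\bar g_0^{ik}\bar g_0^{j\ell} (K_0)_{ij}\,\bar F^{(1)}_{k\ell} \\
&\quad + \tfrac{1}{4}\sin^2\!\left(\tfrac{u_0}{\lambda}\right) |\nabla u_0|^2_{\bar g_0}\,\left|\bar F^{(1)}\right|^2_{\bar g_0}.
\end{align*}
Invoking the energy identity $|\bar F^{(1)}|^2_{\bar g_0} = 8 F_0^2$ from Lemma \ref{lem F1} and the trigonometric identity $\sin^2\theta = \tfrac{1}{2}(1 - \cos(2\theta))$, the last term becomes $\bigl(1 - \cos(2u_0/\lambda)\bigr) |\nabla u_0|^2_{\bar g_0} F_0^2$.

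Next I subtract this from the formula \eqref{R 0} for $R^{(0)}$. The background constraint \eqref{hamiltonian background} combined with \eqref{maximal} (which makes $(\tr_{\bar g_0} K_0)^2$ vanish) gives
\begin{align*}
R(\bar g_0) - |K_0|^2_{\bar g_0} = 2 |\nabla u_0|^2_{\bar g_0} F_0^2,
\end{align*}
so in the resulting expression the non-oscillating terms collapse as
\begin{align*}
2|\nabla u_0|^2_{\bar g_0} F_0^2 \;-\; |\nabla u_0|^2_{\bar g_0} F_0^2 \;-\; |\nabla u_0|^2_{\bar g_0} F_0^2 \;=\; 0,
\end{align*}
exactly as required. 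The $\cos(2u_0/\lambda)$ terms collect with coefficient $-7 + 1 = -6$, yielding the announced $-6 \cos(2u_0/\lambda)|\nabla u_0|^2_{\bar g_0} F_0^2$.

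Finally, the $\sin(u_0/\lambda)$ contributions come from the $\sin$-term in \eqref{R 0} and from the cross term in $|K^{(0)}|^2_{\bar g_0}$; relabelling indices in the latter so that $\bar F^{(1)}$ carries the indices $(\ell,j)$ produces the combined coefficient
\begin{align*}
-\bar g_0^{ij}\partial_i N_0^\ell + \tfrac{1}{2} N_0 \bar g_0^{\ell j} - K_0^{\ell j},
\end{align*}
which matches the claimed expression. There is no real obstacle here — the only point requiring care is the sign and index bookkeeping in the $K_0 \cdot \bar F^{(1)}$ cross term and the cancellation mechanism via the background Hamiltonian constraint, which is the conceptual content of the lemma.
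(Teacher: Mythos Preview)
Your proof is correct and follows essentially the same approach as the paper: expand $|K^{(0)}|^2_{\bar g_0}$, invoke the energy identity \eqref{energie condition theo chap 2}, subtract from the expression \eqref{R 0} for $R^{(0)}$, and use the background Hamiltonian constraint \eqref{hamiltonian background} with maximality \eqref{maximal} to cancel the non-oscillating contributions. The paper organizes the computation slightly differently (writing out $R^{(0)}-|K^{(0)}|^2_{\bar g_0}$ in one block before invoking the constraint), but your frequency-by-frequency bookkeeping is equivalent and arguably clearer.
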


\begin{proof}
Let us first expand $\left|K^{(0)}   \right|^2_{\bar{g}_0}$ using \eqref{def K0} and \eqref{energie condition theo chap 2}:
\begin{align*}
\left| K^{(0)} \right|^2_{\Bar{g}_0} & =  |K_0|^2_{\Bar{g}_0} +  |\nabla u_0|_{\Bar{g}_0}^2 F_0^2 -\cos\left(  \frac{2u_0}{\lambda}  \right) |\nabla u_0|_{\Bar{g}_0}^2 F_0^2 + \sin\left( \frac{u_0}{\lambda}\right) |\nabla u_0|_{\Bar{g}_0} \left|\bar{F}^{(1)}\cdot K_0 \right|_{\bar{g}_0}   .
\end{align*}
We use now \eqref{R 0} to compute the full RHS of \eqref{eq ffi 2}:
\begin{align*}
R^{(0)}  - \left|K^{(0)}   \right|^2_{\bar{g}_0} & = R(\bar{g}_0) - |K_0|^2_{\Bar{g}_0} - 2|\nabla u_0|_{\Bar{g}_0}^2 F_0^2   
\\&\quad +\sin\left(\frac{u_0}{\lambda}\right) |\nabla u_0|_{\Bar{g}_0}  \left( \bar{F}^{(1)}_{\ell j} \left( -\bar{g}_0^{ij} \dr_i N_0^\ell + \half N_0 \bar{g}_0^{\ell j} \right)  - \left|\bar{F}^{(1)}\cdot K_0 \right|_{\bar{g}_0}   \right)
\\&\quad  -6 \cos\left(  \frac{2u_0}{\lambda}  \right) |\nabla u_0|_{\Bar{g}_0}^2 F_0^2 .
\end{align*}
Using the background Hamiltonian constraint \eqref{hamiltonian background} to cancel the non-oscillating term, we are left with
\begin{align*}
R^{(0)}  - \left|K^{(0)}   \right|^2_{\bar{g}_0} & = \sin\left(\frac{u_0}{\lambda}\right) |\nabla u_0|_{\Bar{g}_0}  \left( \bar{F}^{(1)}_{\ell j} \left( -\bar{g}_0^{ij} \dr_i N_0^\ell + \half N_0 \bar{g}_0^{\ell j} \right)  - \left|\bar{F}^{(1)}\cdot K_0 \right|_{\bar{g}_0}   \right)
\\&\quad  -6 \cos\left(  \frac{2u_0}{\lambda}  \right) |\nabla u_0|_{\Bar{g}_0}^2 F_0^2,
\end{align*}
which concludes the proof.
\end{proof}

This lemma allows us to solve \eqref{eq ffi 2} by simply setting
\begin{align}
\ffi^{(2)} &  = \frac{1}{8|\nabla u_0|_{\bar{g}_0}}\sin\left(\frac{u_0}{\lambda}\right)  \bar{F}^{(1)}_{\ell j} \left(  \bar{g}_0^{ij} \dr_i N_0^\ell - \half N_0 \bar{g}_0^{\ell j} + K_0^{\ell j} \right)   +\frac{3}{16} \cos\left(\frac{2u_0}{\lambda}\right) F_0^2 . \label{def ffi 2}
\end{align}
Now that $\ffi^{(2)}$ is defined, we can set
\begin{align}
\bar{F}^{(2,1)} & = \frac{\bar{F}^{(1)}_{\ell j}}{2|\nabla u_0|_{\bar{g}_0}}  \left( \bar{g}_0^{ij} \dr_i N_0^\ell - \half N_0 \bar{g}_0^{\ell j} + K_0^{\ell j}  \right) \bar{g}_0 + \omega^{(2)},\label{def F21}
\\ \bar{F}^{(2,2)} & = \frac{3}{4}  F_0^2\bar{g}_0 .\label{def F22}
\end{align}
These definitions are consistent with \eqref{g bar} and the following lemma summarizes the properties of $\bar{F}^{(2,1)}$ and $\bar{F}^{(2,2)}$:

\begin{lem}
The symmetric 2-tensors $\bar{F}^{(2,1)}$ and $\bar{F}^{(2,2)}$ are supported in $B_R$ and the following holds
\begin{align}
\l \bar{F}^{(2,1)} \r_{H^{N-1}} + \l \bar{F}^{(2,2)} \r_{H^{N-1}} & \lesssim \e , \label{estim Fbar 2}
\end{align}
with a constant depending only on $\delta$ and $R$.
\end{lem}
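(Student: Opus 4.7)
The plan is to split into a support statement and a Sobolev estimate, and then to handle the two tensors separately. For the support part, $\bar{F}^{(2,2)} = \tfrac{3}{4}F_0^2 \bar{g}_0$ is trivially supported in $B_R$. In $\bar{F}^{(2,1)}$, the first summand contains $\bar{F}^{(1)} = F_0 \omega^{(1)}$ as a factor in the scalar contraction, while every non-vanishing frame component of $\omega^{(2)}$ in \eqref{def omega2} is a linear expression in $F_0 \omega^{(1)}$ and $\dive_{\bar{g}_0}(|\nabla u_0|_{\bar{g}_0} F_0 \omega^{(1)})$; both vanish outside $B_R$, and the passage from frame components back to coordinate components is multiplication by smooth algebraic expressions in the dual frame, which preserves support.

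For the $H^{N-1}$ estimate on $\bar{F}^{(2,2)}$, I write $\bar{g}_0 = e + (\bar{g}_0 - e)$ and apply the product rule of Proposition \ref{prop WSS chap 2}. The compact support of $F_0$ lets us trade weight for a constant $C(R)$, so combining with \eqref{estim g0 K0} and \eqref{estim F0 chap 2} immediately yields $\l F_0^2 \bar{g}_0 \r_{H^{N-1}} \lesssim \e^2 + \e \lesssim \e$.

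The estimate on $\bar{F}^{(2,1)}$ requires more care. For the first summand, which has the form $C \bar{g}_0$ with $C$ a compactly supported scalar, the ingredient with the lowest regularity is $\nabla N_0$: from \eqref{estim g0 K0}--\eqref{estim u0} together with the fact that $|\nabla u_0|_{\bar{g}_0}$ is uniformly bounded below, Moser-type and composition estimates in weighted Sobolev spaces give $N_0$ close to $\mathfrak{z}/|\mathfrak{z}|$ in $H^N$-regularity, hence $\nabla N_0 \in H^{N-1}_{\delta+2}$. Combined with $\bar{F}^{(1)} \in H^N$ (Lemma \ref{lem F1}) and $K_0 \in H^N_{\delta+1}$, the product rule of Proposition \ref{prop WSS chap 2} produces $C \in H^{N-1}$ with norm $\lesssim \e$, and multiplying by $\bar{g}_0$ does not degrade this because $C$ is compactly supported. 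For $\omega^{(2)}$, the term $\dive_{\bar{g}_0}(|\nabla u_0|_{\bar{g}_0} F_0 \omega^{(1)})$ costs one derivative relative to $F_0 \omega^{(1)} \in H^N$, giving an $H^{N-1}$ contribution; the remaining piece involving $\D_{N_0}N_0$, $K_0$ and $F_0 \omega^{(1)}$ is also $H^{N-1}$ by the same product argument. Converting frame components back to coordinate components only involves multiplication by dual-frame vectors inheriting the regularity of $N_0$, causing no further loss.

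The main technical obstacle is purely the bookkeeping of regularity: the differentiation of $N_0$ and the divergence in \eqref{def omega2} each drop the $N$-regularity of \eqref{estim g0 K0}--\eqref{estim u0} and Lemma \ref{lem F1} down to $N-1$, which is exactly the regularity asserted in \eqref{estim Fbar 2}. Beyond this, the proof is a direct application of the product and composition properties of Proposition \ref{prop WSS chap 2}, with compact support in $B_R$ used to freely adjust weights at the cost of a constant depending on $R$.
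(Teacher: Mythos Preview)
Your proof is correct and follows essentially the same approach as the paper's own proof, which simply cites the support properties of $F_0$, $\bar{F}^{(1)}$ and $\omega^{(2)}$, and the estimates \eqref{estim g0 K0}, \eqref{estim F0 chap 2}, \eqref{estim Fbar 1} together with the definition \eqref{def omega2}. Your version is a fully unpacked elaboration of that sketch, with the regularity bookkeeping made explicit; in particular you correctly identify that the divergence in \eqref{def omega2} and the appearance of $\nabla N_0$ are precisely what bring the regularity down from $H^N$ to $H^{N-1}$.
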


\begin{proof}
The support property follows from the support properties of $F_0$, $\bar{F}^{(1)}$ and $\omega^{(2)}$. The estimate \eqref{estim Fbar 2} follows from the definitions \eqref{def F21} and \eqref{def F22} and the estimates \eqref{estim g0 K0}, \eqref{estim F0 chap 2} and \eqref{estim Fbar 1} (recall also \eqref{def omega2} for the definition of $\omega^{(2)}$).
\end{proof}

\subsubsection{The $\lambda^1$ Hamiltonian level}\label{section ham 1}

We now turn to the $\lambda^1$ Hamiltonian level, which is solved thanks to $\ffi^{(3)}$. More precisely, since $\tau=\GO{\lambda}$ we need $\ffi^{(3)}$ to satisfy
\begin{align}
8 |\nabla u_0|_{\bar{g}_0}^2\dr_\theta^2\ffi^{(3)} & = - 8 \mathbf{H}^{[-1]}\left(\ffi^{(2)}\right)  + R^{(1)} \label{eq ffi 3}
\\&\quad - \left| K^{(0)} \cdot \left( K^{(1)} - \frac{1}{3}\tau^{(1)} \bar{g}_0 \right) \right|_{\bar{g}_0} + \cos\left(\frac{u_0}{\lambda}\right) \left| K^{(0)} \right|^2_{\bar{F}^{(1)}},\nonumber
\end{align}
where $R^{(1)}$ is defined in Lemma \ref{lem R(gamma)} and $\mathbf{H}^{[-1]}\left(\ffi^{(2)}\right) $ in Lemma \ref{lem expansion laplace}. In the RHS of \eqref{eq ffi 3}, it remains to define $K^{(1)}$ and $\tau^{(1)}$. Following the discussion of Section \ref{section strategy TT tensor}, we define
\begin{align}
K^{(1)}_{ij} & = -\half \cos \left( \frac{u_0}{\lambda}\right) \bigg(  -N_0\bar{F}^{(1)}_{ij} + \left( -\bar{g}_0^{k\ell}(K_0)_{(i\ell} + N_0^\ell \Gamma(g_0)^k_{\ell (i} \right) \bar{F}^{(1)}_{j)k}  \label{def K1}
\\&\hspace{3cm} +  \frac{1}{|\nabla u_0|_{\bar{g}_0}} \left(- (K_0)^{k\ell}   \dr_k u_0 \dr_\ell u_0 -   \bar{g}_0^{k\ell} \dr_k u_0  \dr_\ell |\nabla u_0|_{\bar{g}_0}  +\half \bar{g}_0^{k\ell}\dr_k\dr_\ell u_0 \right) \bar{F}^{(1)}_{ij}  \bigg) \nonumber
\\&\quad -\half   \left|\nabla u_0 \right|_{\bar{g}_0} \left( \cos\left( \frac{u_0}{\lambda}\right) \bar{F}^{(2,1)}_{ij} - 2\sin\left( \frac{2u_0}{\lambda}\right)\bar{F}^{(2,2)}_{ij}   \right) . \nonumber
\end{align}
Since $\tau$ is the trace with respect to $\gamma$ of $K$ we set
\begin{align}
\tau^{(1)} &  = -\cos\left(\frac{u_0}{\lambda}\right) \left| \bar{F}^{(1)} \cdot K_0 \right|_{\bar{g}_0} + \tr_{\bar{g}_0} K^{(1)}, \label{def tau 1}
\end{align}
where we used the expansion of the inverse of $\gamma$ given by \eqref{inverse gamma}. This actually fully defines the parameter $\tau$, i.e
\begin{align}
\tau = \lambda \tau^{(1)}.\label{def tau}
\end{align}

\saut
As above, we need to show that the RHS of \eqref{eq ffi 3} is purely oscillating in order to find $\ffi^{(3)}$ solution of this equation. From \eqref{def K0}, \eqref{def K1} and \eqref{def tau 1} we obtain
\begin{align*}
K^{(0)} \simf 1 + \sin(\theta) \quad \text{and} \quad K^{(1)} ,  \; \tau^{(1)}  \simf  \cos(\theta) + \sin(2\theta),
\end{align*} 
which together with \eqref{freq R1}, \eqref{H-1 a} and \eqref{def ffi 2}
gives
\begin{align*}
\text{RHS of \eqref{eq ffi 3}} \simf \cos(\theta) + \sin(2\theta) + \cos(3\theta).
\end{align*}
This shows that we can solve \eqref{eq ffi 3} by setting
\begin{align}
\ffi^{(3)} & = \cos\left(\frac{u_0}{\lambda}\right) \ffi^{(3,1)} + \sin\left(\frac{2u_0}{\lambda}\right)\ffi^{(3,2)} + \cos\left(\frac{3u_0}{\lambda}\right)\ffi^{(3,3)}\label{def ffi 3},
\end{align}
for some $\ffi^{(3,i)}$ supported in $B_R$ and satisfying
\begin{align}
\sum_{i=1,2,3} \l \ffi^{(3,i)} \r_{H^{N-3}} & \lesssim \e . \label{estim ffi 3}
\end{align}
The estimate \eqref{estim ffi 3} follows from \eqref{estim R1 et R2} and \eqref{def omega2}. Since the ansatz constructed in this article is an order 2 ansatz, the term $\ffi^{(3)}$ will be ultimately hidden in the remainder $\bar{\h}_\lambda$ (see \eqref{g bar theo chap 2} in Theorem \ref{theo main chap 2}) and thus we don't need a precise expression of $\ffi^{(3,i)}$.

\subsection{The approximate momentum constraint}\label{section mom 0}

In this section we solve the first level of the momentum constraint.  Since $\tau = \lambda \tau^{(1)}$ we have $\d\tau =\frac{1}{\lambda}\d u_0 \dr_\theta \tau^{(1)} + \lambda\d\tau^{(1)}$ where the derivatives in $\d\tau^{(1)}$ don't hit the oscillating parts of $\tau^{(1)}$. Therefore the first momentum level corresponds to the $\lambda^0$ level of \eqref{momentum constraint}, solved by $W^{(2)}$. 

\begin{remark}\label{remark what if}
This is where the assumption $\tr_{\bar{g}_0}K_0=0$ simplifies our construction. If $\tr_{\bar{g}_0}K_0\neq 0$ then $\tau$ needs to include a non-oscillating $\lambda^0$ term in $\tau^{(0)}$. This would require a non-oscillating term in $W$ at the $\lambda^0$ order to absorb it
\begin{align*}
W=w^{(0)} + \GO{\lambda},
\end{align*}
where we use a lowercase letter to emphasize that $w^{(0)}$ is non-oscillating. However, as explained in Section \ref{section conformal laplacian},  this term would produce a $\lambda^{-1}$ term in $\dive_\gamma L_\gamma W$ (precisely $M^{[-1]}(w^{(0)})$, see Lemma \ref{lem conf laplace non osc}) and thus would require an oscillating term $W^{(1)}$ in $W$ to absorb it. We make the assumption $\tr_{\bar{g}_0}K_0=0$ precisely to avoid these technicalities. 
\end{remark}

More precisely, $W^{(2)}$ needs to solve
\begin{align}
 \mathbf{M}^{[-2]}_\ell(W^{(2)}) & =  \frac{2}{3} \dr_\ell u_0 \dr_\theta \tau^{(1)}. \label{momentum 0}
\end{align}
where we used the expansion obtained in Lemma \ref{lem conf laplace osc}.  Thanks to Lemma \ref{inversion M-2} this equation rewrites as
\begin{align}
\dr_\theta^2 W^{(2)}_{\ell}  & = - \frac{1}{2  |\nabla u_0|_{\bar{g}_0}} (N_0)_\ell \dr_\theta \tau^{(1)} .\label{def W2}
\end{align}
Since the RHS of this equation is a $\dr_\theta$ derivative it is purely oscillating and we can integrate twice to obtain $W^{(2)}$.  Using \eqref{def tau 1} this also gives
\begin{align}
 W^{(2)} &\simf  \sin(\theta) + \cos(2\theta).   \label{W2 oscillation}
\end{align}

\saut
Note that as opposed to the Hamiltonian constraint, we don't solve the $\lambda^1$ momentum level here. Indeed, as Lemma \ref{lem conf laplace non osc} shows, the operator $\dive_\gamma L_\gamma$ loses one $\lambda$ power even when applied to a non-oscillating field like $\tW$ (the remainder in \eqref{ansatz W}). Therefore the $\lambda^1$ momentum level also involves $\tW$ and the equation for $W^{(3)}$ is coupled with the remainder in the ansatz \eqref{ansatz W}, as opposed to $\ffi^{(3)}$ which is not coupled to $\tffi$ (the remainder in \eqref{ansatz ffi}), since $\Delta_\gamma \tffi =\GO{\lambda^0}$.

\subsection{An almost TT-tensor}\label{section almost TT}

In this section we define the first terms in the expansion of the parameter $\sigma$ of the conformal method, i.e $\sigma^{(0)}$ and $\sigma^{(1)}$.  As explained in Section \ref{section strategy TT tensor}, the definition of the first orders of $\sigma$ follows simply from our constraint
\begin{align}
\left(\sigma + L_\gamma W\right)^{(\leq 1)} = \left( K - \frac{1}{3}\tau \gamma \right)^{(\leq 1)}.\label{condition sigma 0 1}
\end{align}
Since $W$ is given by $W=\lambda^2 \left( W^{(2)}+ \tW\right) +\GO{\lambda^3}$ (with $W^{(2)}$ defined by \eqref{def W2} and $\tW$ non-oscillating),  we use Lemma \ref{lem conformal killing} to compute $(L_\gamma W)^{(\leq 1)}$ and \eqref{condition sigma 0 1} forces us to define

\begin{align}
\sigma^{(0)}  & = K^{(0)},  \label{sigma 0}
\\ \sigma^{(1)}  & = K^{(1)} - \frac{1}{3}\tau^{(1)}\bar{g}_0  - \mathbf{K}^{[-1]}(W^{(2)}), \label{sigma 1}
\end{align}
where $K^{(0)}$, $K^{(1)}$ and $\tau^{(1)}$ are given by \eqref{def K0}, \eqref{def K1} and \eqref{def tau 1} respectively and $\mathbf{K}^{[-1]}(W^{(2)})$ is defined in Lemma \ref{lem conformal killing}.

\saut
In this section, we prove that $\sigma^{(0)} + \lambda \sigma^{(1)}  $ is \textit{almost} a TT-tensor, that is 
\begin{align}
\tr_\gamma \left(\sigma^{(0)} + \lambda \sigma^{(1)}  \right) = \GO{\lambda^2} \quad \text{and} \quad \dive_\gamma \left(\sigma^{(0)} + \lambda \sigma^{(1)}  \right) = \GO{\lambda}.\label{almost TT}
\end{align}
Using the expansion of $\gamma^{ij}$ given by \eqref{inverse gamma} we have
\begin{align*}
\tr_\gamma \left( \sigma^{(0)} + \lambda \sigma^{(1)} \right) & = \tr_{\bar{g}_0}\sigma^{(0)} + \lambda \left( -\cos\left(\frac{u_0}{\lambda}\right)\tr_{\bar{F}^{(1)}}\sigma^{(0)} + \tr_{\bar{g}_0}\sigma^{(1)}  \right) + \GO{\lambda^2}.
\end{align*}
Using now the expansion of the divergence operator $\dive_\gamma$ obtained in Lemma \ref{lem divergence gamma} we have
\begin{align*}
\dive_\gamma \left( \sigma^{(0)} + \lambda \sigma^{(1)} \right) & = \frac{1}{\lambda} \mathbf{d}^{[-1]}(\sigma^{(0)}) +  \mathbf{d}^{[0]}(\sigma^{(0)}) +  \mathbf{d}^{[-1]}(\sigma^{(1)}) + \GO{\lambda}. 
\end{align*}
The following lemma, which shows that $\sigma^{(0)} + \lambda \sigma^{(1)}  $ is almost a TT-tensor, is crucial since it validates a posteriori our whole approximate construction.

\begin{lem}\label{lem sigma01}
We have
\begin{align}
 \tr_{\bar{g}_0}\sigma^{(0)} & = 0,\label{trace sigma01 a}
 \\ -\cos\left(\frac{u_0}{\lambda}\right)\tr_{\bar{F}^{(1)}}\sigma^{(0)} + \tr_{\bar{g}_0}\sigma^{(1)} & = 0,\label{trace sigma01 b}
\end{align}
and
\begin{align}
\mathbf{d}^{[-1]}(\sigma^{(0)})  &= 0,\label{dive sigma01 a}
\\  \mathbf{d}^{[0]}(\sigma^{(0)}) +  \mathbf{d}^{[-1]}(\sigma^{(1)})  & = 0.\label{dive sigma01 b}
\end{align}
\end{lem}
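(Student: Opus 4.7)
The plan is to verify the four scalar identities \eqref{trace sigma01 a}--\eqref{dive sigma01 b} in turn. The first three reduce to short algebraic manipulations using the key properties of $K_0$ and $\bar{F}^{(1)}$ gathered in Lemma~\ref{lem F1} and the background constraints \eqref{momentum background}--\eqref{maximal}. The last one is the substantive content of the lemma and will also explain the specific form of $\omega^{(2)}$ chosen in \eqref{def omega2}.

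For \eqref{trace sigma01 a}, I substitute $\sigma^{(0)}=K^{(0)}$ from \eqref{def K0} and use that $K_0$ is $\bar g_0$-traceless (the maximal constraint \eqref{maximal}) and that $\bar F^{(1)}$ is $\bar g_0$-traceless (Lemma~\ref{lem F1}). For \eqref{trace sigma01 b}, I apply \eqref{trace K-1} to eliminate the $\mathbf{K}^{[-1]}(W^{(2)})$ contribution, which reduces $\tr_{\bar g_0}\sigma^{(1)}$ to $\tr_{\bar g_0}K^{(1)}-\tau^{(1)}$; substituting \eqref{def tau 1}, and using $|\bar F^{(1)}|_{\bar g_0}^2=8F_0^2$ from \eqref{energie condition theo chap 2} to expand $\tr_{\bar F^{(1)}}\sigma^{(0)}$, the two sides match. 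For \eqref{dive sigma01 a}, the tangentiality property $\bar F^{(1)}_{N_0 i}=0$ shows that $\sigma^{(0)}_{N_0\ell}=(K_0)_{N_0\ell}$ is $\theta$-independent, so $\dr_\theta\sigma^{(0)}_{N_0\ell}=0$ and formula \eqref{d-1} gives the conclusion.

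The core of the lemma is \eqref{dive sigma01 b}. The strategy is to decompose $\mathbf{d}^{[0]}(\sigma^{(0)})+\mathbf{d}^{[-1]}(\sigma^{(1)})$ by oscillatory behaviour in $\theta=u_0/\lambda$ and show that each Fourier component vanishes. The non-oscillating piece of $\mathbf{d}^{[0]}(\sigma^{(0)})$ consists of $\dive_{\bar g_0}K_0$ together with the constant residue of the $\sin^2(u_0/\lambda)$ product arising from the last term of \eqref{d0} applied to the oscillating part of $\sigma^{(0)}$ (using the explicit shape of $\tilde{\Gamma}^{(0)}$ from \eqref{gamma tilde 0}); the two contributions cancel by virtue of the background momentum constraint \eqref{momentum background} and \eqref{energie condition theo chap 2}. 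This is precisely the backreaction mechanism underlying the whole construction. The $\cos(2u_0/\lambda)$ component is absorbed by the choice of $\bar F^{(2,2)}$ in \eqref{def F22} feeding into $K^{(1)}$ via \eqref{def K1}.

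The remaining and most delicate $\sin(u_0/\lambda)$ component is where the off-diagonal coefficients $\omega^{(2)}_{N_0\mathbf{j}}$ enter. Using the ODE \eqref{def W2} for $W^{(2)}$, one first simplifies $\dr_\theta \mathbf{K}^{[-1]}(W^{(2)})_{N_0\ell}$ to a scalar multiple of $(N_0)_\ell \dr_\theta\tau^{(1)}$, so that $\mathbf{d}^{[-1]}(\sigma^{(1)})$ depends essentially on $\dr_\theta K^{(1)}_{N_0\ell}$; matching the $\sin(u_0/\lambda)$ residue against the corresponding part coming from $\dive_{\bar g_0}(\half\sin(u_0/\lambda)|\nabla u_0|_{\bar g_0}\bar F^{(1)})$ and from $\tilde{\Gamma}^{(0)}$ contracted with $K_0$ yields a linear equation for $\omega^{(2)}_{N_0\mathbf{j}}$ whose unique solution is precisely \eqref{def omega2}. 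The main technical burden, and the step where I expect to spend most of the effort, lies in keeping track of the symmetrizations appearing in \eqref{def K1}, the trace contractions entering \eqref{d0}, and the conversions between the $(N_0,e_\1,e_\2)$ frame and the Euclidean coordinates used to define the divergence.
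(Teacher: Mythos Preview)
Your proposal is correct and follows essentially the same route as the paper. The only organizational difference is that you group the computation of \eqref{dive sigma01 b} by oscillatory mode from the outset, whereas the paper first computes $\mathbf{d}^{[0]}(\sigma^{(0)})$ and $\mathbf{d}^{[-1]}(\sigma^{(1)})$ separately (using \eqref{momentum 0} to eliminate the $W^{(2)}$ contribution, exactly as you indicate) and then adds; the ingredients and cancellations are identical. One small point worth making explicit when you carry this out: after the $\cos(2\theta)$ terms cancel, the paper splits the remaining $\sin(\theta)$ expression into its $N_0$ and tangential components, observing that the $N_0$ component vanishes automatically (since $\omega^{(2)}_{N_0N_0}=0$ and $\bar F^{(1)}_{N_0 i}=0$), so that \eqref{def omega2} is needed only for the tangential part --- this is consistent with what you wrote but deserves a sentence so the reader sees that the two components of the linear equation for $\omega^{(2)}_{N_0\cdot}$ are not overdetermined.
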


\begin{proof}
We start with the trace identities. Since $\sigma^{(0)}=K^{(0)}$, \eqref{trace sigma01 a} follows from \eqref{maximal} and $\tr_{\bar{g}_0}\bar{F}^{(1)}=0$. Moreover, from \eqref{sigma 0} and \eqref{sigma 1} we have
\begin{align*}
 -\cos\left(\frac{u_0}{\lambda}\right)\tr_{\bar{F}^{(1)}}\sigma^{(0)} + \tr_{\bar{g}_0}\sigma^{(1)} & =  -\cos\left(\frac{u_0}{\lambda}\right)\tr_{\bar{F}^{(1)}}K^{(0)} + \tr_{\bar{g}_0}K^{(1)} - \tau^{(1)} 
 \\&\quad - \tr_{\bar{g}_0}\mathbf{K}^{[-1]}(W^{(2)})
 \\& = -\cos\left(\frac{u_0}{\lambda}\right)\tr_{\bar{F}^{(1)}}K^{(0)} + \tr_{\bar{g}_0}K^{(1)} - \tau^{(1)}
 \\&= 0,
\end{align*}
where we use \eqref{trace K-1} and \eqref{def tau 1}.  This proves \eqref{trace sigma01 b}.

\saut
We now look at the divergence identities. From \eqref{sigma 0}, \eqref{def K0} and \eqref{d-1} we obtain
\begin{align*}
\mathbf{d}^{[-1]}_\ell(\sigma^{(0)}) & = \half |\nabla u_0|^2_{\bar{g}_0} \cos\left(\frac{u_0}{\lambda}\right)  \bar{F}^{(1)}_{N_0\ell} = 0,
\end{align*}
which proves \eqref{dive sigma01 a}. We now compute the two parts of \eqref{dive sigma01 b}.
Using \eqref{d0} and \eqref{sigma 0} we obtain
\begin{align*}
 \mathbf{d}^{[0]}_\ell(\sigma^{(0)}) & =  \dive_{\Bar{g}_0} K^{(0)}_\ell - \Bar{g}_0^{ij} (\Tilde{\Gamma}^{(0)})^a_{i\ell} K^{(0)}_{aj} 
 \\& = (\dive_{\bar{g}_0}K_0)_\ell + \half \sin\left(\frac{u_0}{\lambda}\right)\left( \dive_{\bar{g}_0}|\nabla u_0|_{\bar{g}_0}\bar{F}^{(1)}\right)_\ell  
 \\&\quad - \Bar{g}_0^{ij} (\Tilde{\Gamma}^{(0)})^a_{i\ell} (K_0)_{aj} -  \half\sin\left(\frac{u_0}{\lambda}\right)  |\nabla u_0|_{\bar{g}_0} \Bar{g}_0^{ij} (\Tilde{\Gamma}^{(0)})^a_{i\ell} \bar{F}^{(1)}_{aj}.
\end{align*}
We use \eqref{gamma tilde 0} and \eqref{energie condition theo chap 2} to rewrite the second line in the previous expression and obtain
\begin{align*}
 \mathbf{d}^{[0]}_\ell(\sigma^{(0)}) & = (\dive_{\bar{g}_0}K_0)_\ell + \half \sin\left(\frac{u_0}{\lambda}\right)\left( \dive_{\bar{g}_0}|\nabla u_0|_{\bar{g}_0}\bar{F}^{(1)}\right)_\ell  \nonumber
 \\&\quad + \frac{1}{2} \sin\left(\frac{u_0}{\lambda}\right) \dr_\ell u_0  \left| \bar{F}^{(1)} \cdot K_0 \right|_{\bar{g}_0} +  2 \sin^2\left(\frac{u_0}{\lambda}\right)  |\nabla u_0|_{\bar{g}_0}   \dr_\ell u_0  F_0^2. \nonumber
\end{align*}
Using the background momentum constraint \eqref{momentum background}, we see that the non-oscillating terms cancel each other and we are left with
\begin{align}
 \mathbf{d}^{[0]}_\ell(\sigma^{(0)}) & = \half \sin\left(\frac{u_0}{\lambda}\right)\left( \left( \dive_{\bar{g}_0}|\nabla u_0|_{\bar{g}_0}\bar{F}^{(1)}\right)_\ell  + \dr_\ell u_0 \left| \bar{F}^{(1)} \cdot K_0 \right|_{\bar{g}_0}  \right) \label{inter d0 sigma 0}
\\&\quad - \cos\left(\frac{2u_0}{\lambda}\right) |\nabla u_0|_{\bar{g}_0}   \dr_\ell u_0  F_0^2 \nonumber .
\end{align}  
We now look at $ \mathbf{d}^{[-1]}_\ell(\sigma^{(1)})$. Using \eqref{d-1} and \eqref{sigma 1} we obtain
\begin{align*}
\mathbf{d}^{[-1]}_\ell(\sigma^{(1)}) & =  -|\nabla u_0|_{\bar{g}_0}\dr_\theta K^{(1)}_{N_0\ell}  - \frac{1}{3}\dr_\ell u_0 \dr_\theta \tau^{(1)} - \mathbf{M}_\ell^{[-2]}(W^{(2)})
\\& =\dr_\theta \left(  -|\nabla u_0|_{\bar{g}_0}  K^{(1)}_{N_0\ell}  -\dr_\ell u_0  \tau^{(1)}\right),
\end{align*}
where we use the equation satisfied by $W^{(2)}$ (see \eqref{momentum 0}). We use \eqref{def K1}, \eqref{def F21} and \eqref{def F22} to obtain
\begin{align*}
K^{(1)}_{N_0\ell} & = -\cos \left( \frac{u_0}{\lambda}\right)  \left(  \frac{ \bar{F}^{(1)}_{k\ell} }{2} \left(  \D_{N_0}N_0^k -    \bar{g}_0^{kj} (K_0)_{N_0j} \right) \right.
\\&\left. \qquad\qquad\qquad\qquad +  \frac{\bar{F}^{(1)}_{j k}}{4}  \left( \bar{g}_0^{ik} \dr_i N_0^j - \half  N_0 \bar{g}_0^{j k} + K_0^{jk} \right) (N_0)_\ell + \half \left|\nabla u_0 \right|_{\bar{g}_0} \omega^{(2)}_{N_0 \ell}  \right)
\\&\quad +  \frac{3}{4}  \sin\left( \frac{2u_0}{\lambda}\right) \left|\nabla u_0 \right|_{\bar{g}_0} F_0^2 (N_0)_\ell .
\end{align*}
Using \eqref{def tau 1} and $\tr_{\bar{g}_0}\omega^{(2)}=0$ we also obtain
\begin{align}
\tau^{(1)} &  = -\frac{3}{4}  \cos\left( \frac{u_0}{\lambda}\right)   \bar{F}^{(1)}_{k j}  \left( \bar{g}_0^{ij} \dr_i N_0^k - \half  N_0 \bar{g}_0^{j k} + K_0^{jk}  \right)   +  \frac{1}{4} \sin\left(\frac{2u_0}{\lambda}\right) |\nabla u_0|_{\bar{g}_0}F_0^2 .\label{expression tau 1}
\end{align}
This gives

\begin{align*}
 -|\nabla u_0|_{\bar{g}_0} &  K^{(1)}_{N_0\ell}  -\dr_\ell u_0  \tau^{(1)} 
\\& = |\nabla u_0|_{\bar{g}_0} \cos \left( \frac{u_0}{\lambda}\right)  \left(  \frac{  \bar{F}^{(1)}_{k\ell}}{2} \left(  \D_{N_0}N_0^k -    \bar{g}_0^{kj} (K_0)_{N_0j} \right) \right.
\\&\left. \qquad\qquad\qquad\qquad\qquad -  \frac{\bar{F}^{(1)}_{j k}}{2}  \left( \bar{g}_0^{ik} \dr_i N_0^j - \half  N_0 \bar{g}_0^{j k} + K_0^{jk}  \right) (N_0)_\ell + \half\left|\nabla u_0 \right|_{\bar{g}_0} \omega^{(2)}_{N_0 \ell}  \right)
\\&\quad +  \frac{1}{2}  \sin\left( \frac{2u_0}{\lambda}\right) \left|\nabla u_0 \right|_{\bar{g}_0} F_0^2 \dr_\ell u_0.
\end{align*}
Adding this to the expression of $ \mathbf{d}^{[0]}_\ell(\sigma^{(0)})$ given by \eqref{inter d0 sigma 0} we notice that the terms oscillating like $2\theta$ cancel (see Remark \ref{remarque magique} below) and we obtain
\begin{align*}
 &  \mathbf{d}^{[0]}_\ell(\sigma^{(0)}) +  \mathbf{d}^{[-1]}_\ell(\sigma^{(1)})
 \\&\quad =\sin\left(\frac{u_0}{\lambda}\right)\left[  \half  \left( \dive_{\bar{g}_0}|\nabla u_0|_{\bar{g}_0}\bar{F}^{(1)}\right)_\ell   -   \frac{|\nabla u_0|_{\bar{g}_0}}{2} \bar{F}^{(1)}_{k\ell} \left(  \D_{N_0}N_0^k -    \bar{g}_0^{kj} (K_0)_{N_0j} \right)   \right.
 \\&\quad\quad \left.  \qquad\qquad\qquad\qquad - \half \left|\nabla u_0 \right|^2_{\bar{g}_0} \omega^{(2)}_{N_0 \ell} - \half \dr_\ell u_0 \bar{F}^{(1)}_{k j}   \bar{g}_0^{ij} \dr_i N_0^k      +      \frac{1}{4} \dr_\ell u_0 \bar{F}^{(1)}_{k j}     N_0  \bar{g}_0^{k j}  \right].
\end{align*}
Using $\bar{F}^{(1)}_{N_0 i}=0$ we can compute the divergence of $|\nabla u_0|_{\bar{g}_0}\bar{F}^{(1)}$ and using in addition that $\omega^{(2)}_{N_0N_0}=0$ we obtain
\begin{align*}
\mathbf{d}^{[0]}_{N_0}(\sigma^{(0)}) +  \mathbf{d}^{[-1]}_{N_0}(\sigma^{(1)}) & = 0.
\end{align*}
The tangential components of $\mathbf{d}^{[0]}(\sigma^{(0)}) +  \mathbf{d}^{[-1]}(\sigma^{(1)})$ are given by
\begin{align*}
 &  \mathbf{d}^{[0]}_{\mathbf{j}}(\sigma^{(0)}) +  \mathbf{d}^{[-1]}_{\mathbf{j}}(\sigma^{(1)})
 \\&\quad  =\sin\left(\frac{u_0}{\lambda}\right)\left[  \half  \left( \dive_{\bar{g}_0}|\nabla u_0|_{\bar{g}_0}\bar{F}^{(1)}\right)_{\mathbf{j}}   -   \frac{|\nabla u_0|_{\bar{g}_0}}{2} \bar{F}^{(1)}_{k\mathbf{j}} \left(  \D_{N_0}N_0^k -    \bar{g}_0^{kj} (K_0)_{N_0j} \right)   - \half \left|\nabla u_0 \right|^2_{\bar{g}_0} \omega^{(2)}_{N_0 \mathbf{j}}  \right].
\end{align*}
This previous expression vanishes thanks to the choice of $\omega^{(2)}$ made in \eqref{def omega2}. This concludes the proof of \eqref{dive sigma01 b}.

\end{proof}

\begin{remark}\label{remarque magique}
The cancellation of the $\cos\left( \frac{2u_0}{\lambda}\right)$ terms coming from $ \mathbf{d}^{[0]}_\ell(\sigma^{(0)}) $ and $ \mathbf{d}^{[-1]}_\ell(\sigma^{(1)}) $ in their sum seems to be linked to the weak polarized null condition satisfied by the semi-linear terms in the Einstein equations, which involved products of derivatives of the metric (see Section 3.1.3 in \cite{Touati2022a} for the definition of this condition). Indeed, these terms correspond to terms of the form $\Gamma K$, $\Gamma\Gamma$ or $KK$ with $\Gamma$ the Christoffel symbols and $K$ the second fundamental form, i.e terms of the form $\dr g \dr g$.
\end{remark}

\subsection{An exact TT-tensor}\label{section TT}

In the next section, we are going to solve completely the constraint equations, i.e solve for the remainders in the high-frequency ansatz \eqref{ansatz ffi}-\eqref{ansatz W}. We will thus need the full expression of the parameters of the conformal method. While $\gamma$ and $\tau$ are already fully defined, $\sigma$ is only partially known yet and is only an \textit{almost} TT-tensor, as it was shown in the previous section.  In this section we finish the construction of $\sigma$. We choose the following ansatz
\begin{align}
\sigma & = \sigma^{(0)} + \lambda \sigma^{(1)} + \lambda^2 \left( \sigma^{(2)} +L_\gamma Y  \right) + \frac{\lambda^3}{3} \f\gamma.\label{ansatz sigma}
\end{align}
In this expression, $\sigma^{(0)}$ and $\sigma^{(1)}$ are given by \eqref{sigma 0} and \eqref{sigma 1} and $\sigma^{(2)}$, $Y$ and $\f$ are yet to be defined such that 
\begin{equation}
\tr_\gamma\sigma=0\quad\text{and}\quad\dive_\gamma \sigma =0.\label{TT}
\end{equation}
Let us explain the ansatz \eqref{ansatz sigma}. Since we need to satisfy the compatibility with the spacetime ansatz \eqref{condition sigma 0 1}, we can't modifiy the order $\lambda^0$ and $\lambda^1$ of $\sigma$. Thus, a non-oscillating remainder can only appear at the order $\lambda^2$. However, such a remainder would not be able to solve the $\lambda^1$ level of $\dive_\gamma \sigma =0$ (recall \eqref{almost TT}). Therefore we need to add an oscillating field at the order $\lambda^2$, i.e $\sigma^{(2)}$. This field will also be able to solve the $\lambda^2$ level of $\tr_\gamma \sigma=0$. Finally the remainder is chosen of the form $L_\gamma Y + \frac{\lambda}{3}\f \gamma$,  where the vector field $Y$ ensures $\dive_\gamma \sigma =0$ and the scalar function $\f$ ensures $\tr_\gamma \sigma =0$.

\saut
We now derive the equations for $\sigma^{(2)}$, $Y$ and $\f$, which illustrates the above discussion. Thanks to Lemma \ref{lem sigma01} the equations \eqref{TT} rewrite as
\begin{align*}
 \lambda^2\left( \tr_{\gamma^{(\geq 2)}} \sigma^{(0)} +  \tr_{\gamma^{(\geq 1)}} \sigma^{(1)}  +  \tr_{\gamma} \sigma^{(2)}\right) + \lambda^3 \f & = 0,
 \end{align*}
 and
 \begin{align*}
   \lambda\left( \mathbf{d}^{[1]}(\sigma^{(0)}) + \mathbf{d}^{[0]}(\sigma^{(1)}) + \mathbf{d}^{[-1]}(\sigma^{(2)}) + M^{[-1]}(Y)  \right)\qquad\qquad\qquad\qquad\qquad\qquad&
\\ + \lambda^2 \left(   \mathbf{d}^{[\geq 2]}(\sigma^{(0)}) + \mathbf{d}^{[\geq 1]}(\sigma^{(1)}) + \mathbf{d}^{[\geq 0]}(\sigma^{(2)}) + M^{[\geq 0]}(Y) + \frac{\lambda}{3} \d\f  \right) & =0,
\end{align*}
where $\d\f$ also includes derivatives of the oscillating parts of $\f$, which implies in particular that $\d\f=\GO{\lambda^{-1}}$. In order to solve these two equations, we want $\sigma^{(2)}$, $Y$ and $\f$ to satisfy the following coupled system (recall the expression of $\mathbf{d}^{[-1]}$ given by \eqref{d-1}):
\begin{align}
\tr_{\bar{g}_0}\sigma^{(2)} & = -\tr_{\gamma^{( 2)}} \sigma^{(0)} -  \tr_{\gamma^{( 1)}} \sigma^{(1)}, \label{trace sigma2}
\\ \f & = - \tr_{\gamma^{(\geq 1)}} \sigma^{(2)} -  \tr_{\gamma^{(\geq 2)}} \sigma^{(1)}   - \tr_{\gamma^{(\geq 3)}} \sigma^{(0)}, \label{eq f}
\\ -|\nabla u_0|_{\bar{g}_0} \dr_\theta \sigma^{(2)}_{N_0\ell} & =- M^{[-1]}_\ell(Y) - \mathbf{d}^{[1]}_\ell(\sigma^{(0)}) - \mathbf{d}^{[0]}_\ell(\sigma^{(1)}), \label{divergence sigma2}
\\ M^{[\geq 0]}(Y)  & = - \mathbf{d}^{[\geq 2]}(\sigma^{(0)}) - \mathbf{d}^{[\geq 1]}(\sigma^{(1)}) - \mathbf{d}^{[\geq 0]}(\sigma^{(2)}) -  \frac{\lambda}{3} \d\f. \label{eq Y}
\end{align}
Equations \eqref{trace sigma2} and \eqref{eq f} ensure $\tr_\gamma \sigma =0$ while \eqref{divergence sigma2} and \eqref{eq Y} ensure $\dive_\gamma \sigma =0$.  The rest of this section is devoted to the resolution of the system \eqref{trace sigma2}-\eqref{eq f}-\eqref{divergence sigma2}-\eqref{eq Y}. It presents a triangular structure, despite the term $M^{[-1]}(Y)$ in \eqref{divergence sigma2}.

\subsubsection{Definition of $\sigma^{(2)}$ and $\f$}

We start by solving the non-differential equations of the previous system, that is \eqref{trace sigma2}-\eqref{eq f}-\eqref{divergence sigma2}. The first step is to show that the RHS of \eqref{divergence sigma2} is purely oscillating, which is a necessary condition since the LHS is a $\dr_\theta$ derivative. Thanks to \eqref{m-1}, the first term in the RHS is oscillating,  and the next lemma deals with the last two.

\begin{lem}
The following oscillating behaviour holds
\begin{align*}
 \mathbf{d}^{[1]}(\sigma^{(0)}) +  \mathbf{d}^{[0]}(\sigma^{(1)} )\simf \cos(\theta) + \sin(2\theta) + \cos(3\theta).
\end{align*}
\end{lem}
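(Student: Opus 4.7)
The plan is to reduce the claim to a purely trigonometric bookkeeping exercise. First I read off the oscillating behaviour of $\sigma^{(0)}$ and $\sigma^{(1)}$ from their defining formulas, then propagate this information through the structural identities \eqref{d0 oscillation} and \eqref{d1 oscillation} already established in Lemma \ref{lem divergence gamma}.

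From the definition \eqref{def K0} I get $\sigma^{(0)}=K^{(0)} \simf 1 + \sin(\theta)$. For $\sigma^{(1)}$ defined by \eqref{sigma 1}, the expression \eqref{def K1} gives $K^{(1)} \simf \cos(\theta) + \sin(2\theta)$, the expression \eqref{expression tau 1} gives the same behaviour for $\tau^{(1)}$, and formula \eqref{K-1} shows that $\mathbf{K}^{[-1]}(W^{(2)})$ is linear in $\dr_\theta W^{(2)}$, so \eqref{W2 oscillation} yields $\mathbf{K}^{[-1]}(W^{(2)}) \simf \cos(\theta) + \sin(2\theta)$. Summing these three contributions gives $\sigma^{(1)} \simf \cos(\theta) + \sin(2\theta)$.

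Next I apply \eqref{d1 oscillation} with $A=\sigma^{(0)}$ and \eqref{d0 oscillation} with $A=\sigma^{(1)}$, substitute the frequency content just identified, and expand products using the elementary identities $\sin\theta\cos\theta = \tfrac{1}{2}\sin(2\theta)$ and $\sin\theta\sin(2\theta) = \tfrac{1}{2}(\cos\theta - \cos(3\theta))$. A direct inspection shows that each of the two summands $\mathbf{d}^{[1]}(\sigma^{(0)})$ and $\mathbf{d}^{[0]}(\sigma^{(1)})$ has frequency content contained in $\{\cos\theta,\sin(2\theta),\cos(3\theta)\}$, and the claim follows by addition.

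The only genuinely substantive point in this computation is the absence of any $\theta$-independent contribution, which will be crucial in the next step for integrating the RHS of \eqref{divergence sigma2} in $\theta$. This absence holds automatically: the constant piece $K_0$ of $\sigma^{(0)}$ only meets the purely oscillating symbols $\cos\theta$ and $\sin(2\theta)$ coming from $\mathbf{d}^{[1]}$, and its remaining $\sin\theta$ piece only meets $\sin\theta, \cos\theta$ and $\sin(2\theta)$, none of whose pairwise products is constant; the same elementary check also rules out the $\sin\theta$, $\cos(2\theta)$ and $\sin(3\theta)$ modes. The analogous inspection for $\mathbf{d}^{[0]}(\sigma^{(1)})$, whose symbol $1+\sin\theta$ multiplies a purely oscillating argument, is immediate.
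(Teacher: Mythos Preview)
Your proof is correct and follows essentially the same route as the paper: identify $\sigma^{(0)}\simf 1+\sin\theta$ and $\sigma^{(1)}\simf\cos\theta+\sin(2\theta)$ from their definitions, then feed these into the structural identities \eqref{d0 oscillation}--\eqref{d1 oscillation} and reduce via product-to-sum formulas. The paper's argument is slightly terser and does not spell out the separate verification that no constant, $\sin\theta$, $\cos(2\theta)$ or $\sin(3\theta)$ modes appear, which you do explicitly; this extra paragraph is a welcome clarification. One small wording issue: when you say the $\sin\theta$ piece of $\sigma^{(0)}$ ``meets $\sin\theta$'' via the first term of \eqref{d1 oscillation}, note that the actual product is $\sin\theta\cdot\dr_\theta(\sin\theta)=\sin\theta\cos\theta$, not $\sin^2\theta$, so your claim that ``none of whose pairwise products is constant'' is correct in substance but could be phrased more carefully.
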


\begin{proof}
From \eqref{d0 oscillation} we have \[ \mathbf{d}^{[0]}(\sigma^{(1)} )\simf (1+\sin(\theta)) \sigma^{(1)},\] and from \eqref{sigma 1} we have
\begin{align*}
\sigma^{(1)} & \simf K^{(1)} + \cos(\theta) + \tau^{(1)} + \mathbf{K}^{[-1]}(W^{(2)})
\\&\simf \cos(\theta) + \sin(2\theta) + \dr_\theta W^{(2)},
\end{align*}
where we used \eqref{def K1} and \eqref{def tau 1}. Now using \eqref{W2 oscillation} we conclude that 
\begin{align*}
\mathbf{d}^{[0]}(\sigma^{(1)} ) \simf \cos(\theta) + \sin(2\theta) .
\end{align*}
Now, from \eqref{d1 oscillation} we have \[\mathbf{d}^{[1]}(\sigma^{(0)}) \simf \sin(\theta) \dr_\theta \sigma^{(0)} + (\cos(\theta) + \sin(2\theta)) \sigma^{(0)},\] and from \eqref{sigma 0} and \eqref{def K0} we have
\begin{align*}
\sigma^{(0)} & \simf  1 + \sin(\theta).
\end{align*}
This concludes the proof of the lemma.
\end{proof}

We have shown that the RHS of \eqref{divergence sigma2} is purely oscillating, which allows us to formally integrate this equation in $\theta$ and obtain $\sigma^{(2)}$. More precisely,  thanks to $\bar{F}^{(1)}_{i N_0}=0$, \eqref{m-1} implies that $M_{N_0}^{[-1]}(Y)=0$. Therefore, \eqref{divergence sigma2} gives us $\sigma^{(2)}_{N_0N_0}$ as a function of lower order terms in the construction, i.e $\sigma^{(0)}$ and $\sigma^{(1)}$. Then, \eqref{trace sigma2} gives us $\sigma^{(2)}_{\1\1}$ and $\sigma^{(2)}_{\2\2}$ as a function of $\sigma^{(2)}_{N_0N_0}$, $\sigma^{(0)}$ and $\sigma^{(1)}$. All together, the diagonal components of $\sigma^{(2)}$ in the frame $(N_0,e_\1,e_\2)$ are functions of $\sigma^{(0)}$ and $\sigma^{(1)}$ satisfying
\begin{align}
\left| \sigma^{(2)}_{N_0N_0} \right| + \left| \sigma^{(2)}_{\1\1} \right| + \left| \sigma^{(2)}_{\2\2} \right| \lesssim \left| \mathbf{d}^{[1]}(\sigma^{(0)}) \right| + \left| \mathbf{d}^{[0]}(\sigma^{(1)}) \right| + \left| (\gamma^{-1})^{(2)} \sigma^{(0)}  \right| + \left| (\gamma^{-1})^{(1)} \sigma^{(1)}   \right|, \label{estim sigma 2 diago}
\end{align}
with a high-frequency behaviour, meaning that we lose one power of $\lambda$ for each derivatives.

\begin{remark}
Note that we don't impose conditions on $\sigma^{(2)}_{\1\1}$ and $\sigma^{(2)}_{\2\2}$ separately but only on their sum.
\end{remark}

The other components of $\sigma^{(2)}$ in the frame $(N_0,e_\1,e_\2)$ as well as the scalar function $\f$ depends on the vector field $Y$, which is yet to be defined. The equation \eqref{divergence sigma2} allows us to define $\sigma^{(2)}_{N_0\mathbf{i}}$ as a linear function of $Y$ and of $\sigma^{(0)}$ and $\sigma^{(1)}$.  More precisely we obtain
\begin{align}
\left| \sigma^{(2)}_{N_0\1}(Y)\right| + \left| \sigma^{(2)}_{N_0\2}(Y)\right| \lesssim \left|  \bar{F}^{(1)} Y  \right|  +  \left| \mathbf{d}^{[1]}(\sigma^{(0)}) \right| + \left| \mathbf{d}^{[0]}(\sigma^{(1)}) \right| , \label{estim sigma 2 non diago}
\end{align}
with a high-frequency behaviour. Since this component doesn't appear in the equations $\sigma^{(2)}$ needs to solve, we set $\sigma^{(2)}_{\1\2}=0$.

\saut
The scalar function $\f$ is actually already defined by \eqref{eq f}, but as $\sigma^{(2)}$ is a function of $Y$, so is $\f$. Therefore, thanks to \eqref{estim sigma 2 diago} and \eqref{estim sigma 2 non diago} $\f$ satisfies
\begin{align}
| \f(Y) | & \lesssim  \left| (\gamma^{-1})^{(\geq 1)} \sigma^{(2)}  \right|  +   \left| (\gamma^{-1})^{(\geq 2)} \sigma^{(1)}   \right|  +  \left| (\gamma^{-1})^{(\geq 3)} \sigma^{(0)}   \right|\nonumber
\\&\lesssim \left| (\gamma^{-1})^{(\geq 1)} \bar{F}^{(1)} Y  \right|  +  \left| (\gamma^{-1})^{(\geq 1)}\mathbf{d}^{[1]}(\sigma^{(0)}) \right| + \left|(\gamma^{-1})^{(\geq 1)} \mathbf{d}^{[0]}(\sigma^{(1)}) \right| \nonumber
\\&\quad + \left| (\gamma^{-1})^{(\geq 1)} (\gamma^{-1})^{(2)} \sigma^{(0)}  \right| + \left| (\gamma^{-1})^{(\geq 1)} (\gamma^{-1})^{(1)} \sigma^{(1)}   \right|\nonumber
\\&\quad +   \left| (\gamma^{-1})^{(\geq 2)} \sigma^{(1)}   \right|  +  \left| (\gamma^{-1})^{(\geq 3)} \sigma^{(0)}   \right|\label{estim f} ,
\end{align}
with a high-frequency behaviour.

\subsubsection{Solving for $Y$}\label{section solving for Y}

To conclude the construction of $\sigma$, it remains to solve \eqref{eq Y}. As explained after Lemma \ref{lem conf laplace non osc}, this is done by actually "replacing" the operator $M^{[\geq 0]}$ by $\dive_e L_e$ together with a fixed point argument. More precisely, we define a map $\mathbf{\Psi}$ 
\begin{align*}
\mathbf{\Psi} : \mathcal{B} \longrightarrow \mathcal{B}
\end{align*}
such that $\mathbf{\Psi}(Y)$ is the solution of 
\begin{align}
\dive_e L_e \mathbf{\Psi}(Y)  & = \dive_e L_e Y  - M^{[\geq 0]}(Y) - \mathbf{d}^{[\geq 2]}(\sigma^{(0)})  \label{eq psi(Y)}
\\&\quad - \mathbf{d}^{[\geq 1]}(\sigma^{(1)}) - \mathbf{d}^{[\geq 0]}(\sigma^{(2)}(Y)) -  \frac{\lambda}{3} \d\f(Y),\nonumber
\end{align}
and where 
\begin{align}
\mathcal{B}=\enstq{ Z \in H^2_\delta }{ \l Z \r_{H^2_\delta}\leq C_1 \e }.\label{la boule}
\end{align}
with $C_1>0$ to be chosen later.  Note that any fixed point of $\mathbf{\Psi}$ is a solution of \eqref{eq Y}. In order to prove the existence of a fixed point, we need to show that $\mathbf{\Psi}$ is well-defined and is a contraction.

\begin{prop}\label{prop Psi}
If $C_1$ is large enough and $\e$ is small enough, then the map $\mathbf{\Psi}$ is well-defined and is a contraction.
\end{prop}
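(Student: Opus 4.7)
The plan is to prove the two assertions by standard elliptic estimates combined with a Banach fixed point argument, leveraging Proposition \ref{prop Delta et dive e L e}, which provides the inverse $(\dive_e L_e)^{-1} : L^2_{\delta+2} \to H^2_\delta$, together with the bilinear estimates from Proposition \ref{prop WSS chap 2}.

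First, I would show that $\mathbf{\Psi}$ is well-defined on $\mathcal{B}$ by splitting the RHS of \eqref{eq psi(Y)} into $Y$-independent source terms and $Y$-dependent terms. The source $\mathbf{d}^{[\geq 2]}(\sigma^{(0)}) + \mathbf{d}^{[\geq 1]}(\sigma^{(1)})$ is built only from background quantities controlled by $\e$ through \eqref{estim g0 K0}--\eqref{estim u0}, Lemma \ref{lem F1} and \eqref{estim Fbar 2}; the product and embedding rules of Proposition \ref{prop WSS chap 2} then yield an $L^2_{\delta+2}$-bound of size $C\e$, uniformly in $\lambda$. The $Y$-dependent contribution reads $(\dive_e L_e - M^{[\geq 0]})(Y) - \mathbf{d}^{[\geq 0]}(\sigma^{(2)}(Y)) - \tfrac{\lambda}{3}\d\f(Y)$: by \eqref{estimation m0} the coefficients of $\dive_e L_e - M^{[\geq 0]}$ are pointwise bounded by factors of size $\e$ (the background discrepancy $\gamma^{-1} - e^{-1}$ and the high-frequency corrections whose $\lambda$-powers compensate the $\theta$-derivatives), and by \eqref{estim sigma 2 diago}, \eqref{estim sigma 2 non diago}, \eqref{estim f} the tensors $\sigma^{(2)}(Y)$ and $\f(Y)$ depend linearly on $Y$ with coefficients of size $\e$. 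Altogether this yields $\|\mathrm{RHS}\|_{L^2_{\delta+2}} \leq C\e + C\e \|Y\|_{H^2_\delta}$.

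Applying $(\dive_e L_e)^{-1}$ then gives $\|\mathbf{\Psi}(Y)\|_{H^2_\delta} \leq C\e(1 + C_1 \e)$ for $Y \in \mathcal{B}$. Choosing $C_1 = 2C$ and then taking $\e_0$ small enough so that $C C_1 \e_0 \leq 1$ ensures $\|\mathbf{\Psi}(Y)\|_{H^2_\delta} \leq C_1 \e$, so $\mathbf{\Psi}(\mathcal{B}) \subset \mathcal{B}$. For the contraction property, the map $Y \mapsto$ RHS of \eqref{eq psi(Y)} is affine in $Y$, and its linear part is exactly the $Y$-dependent contribution above. The same estimates yield $\|\mathbf{\Psi}(Y_1) - \mathbf{\Psi}(Y_2)\|_{H^2_\delta} \leq C\e \|Y_1 - Y_2\|_{H^2_\delta}$, a strict contraction for $\e_0$ small enough; Banach's fixed point theorem then provides a unique solution to \eqref{eq Y} in $\mathcal{B}$.

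The main obstacle will be to extract genuine $\e$-smallness, uniform in $\lambda \in (0,1]$, for the $Y$-dependent terms. In particular, pointwise quantities such as $(\gamma^{-1})^{(\geq 1)}$, $(\dr^2\gamma)^{(\geq 0)}$ and $\d\f = \GO{\lambda^{-1}}$ contain naïve factors that blow up as $\lambda \to 0$; one must verify that these bad $\lambda^{-1}$ factors are systematically compensated, either by the explicit $\lambda$-powers sitting in front of them in the ansatz \eqref{ansatz sigma} or by the hierarchy's structure (for instance, the combination $\tfrac{\lambda}{3}\d\f$ is $\GO{1}$ with smallness of order $\e$ inherited from the background data). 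Carrying out this bookkeeping, combined with the weighted product and embedding inequalities of Proposition \ref{prop WSS chap 2}, is precisely what makes the $\e$-smallness genuine and uniform in $\lambda$, and thereby closes the fixed point argument.
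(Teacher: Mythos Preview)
Your proposal is correct and follows essentially the same approach as the paper: bound the right-hand side of \eqref{eq psi(Y)} in $L^2_{\delta+2}$ by exploiting \eqref{estimation m0} for the perturbative term $\dive_e L_e - M^{[\geq 0]}$, the affine dependence of $\sigma^{(2)}(Y)$ and $\f(Y)$ on $Y$ with $\e$-small coefficients, and the $\lambda$-compensation in $\tfrac{\lambda}{3}\d\f$, then invert $\dive_e L_e$ via Proposition \ref{prop Delta et dive e L e} and close the fixed point by choosing $C_1$ large and $\e$ small. The paper organizes the estimate into four terms $A,B,C,D$ rather than your source/affine split, but the content is identical.
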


\begin{proof}
Let $Y\in \mathcal{B}$. We bound the RHS of \eqref{eq psi(Y)} in $L^2_{\delta+2}$:
\begin{align*}
\l \text{RHS of \eqref{eq psi(Y)}}  \r_{L^2_{\delta+2}} & \lesssim  \l  \dive_e L_e Y  - M^{[\geq 0]}(Y) \r_{L^2_{\delta+2}} + \l  \mathbf{d}^{[\geq 0]}(\sigma^{(2)}(Y))\r_{L^2_{\delta+2}} +\lambda \l \d[\f(Y)]\r_{L^2_{\delta+2}}
\\&\quad + \l \mathbf{d}^{[\geq 2]}(\sigma^{(0)}) + \mathbf{d}^{[\geq 1]}(\sigma^{(1)})  \r_{L^2}
\\& =\vcentcolon A+B+C+D ,
\end{align*}
where we omitted the weights for the last term since it is compactly supported. For $A$,  \eqref{estimation m0} gives
\begin{align*}
A & \lesssim \l (\gamma^{-1}-e^{-1}) \dr^2 Y \r_{L^2_{\delta+2}} + \l \dr \gamma \dr Y \r_{L^2_{\delta+2}} + \l (\dr \gamma)^2 Y  \r_{L^2_{\delta+2}} + \l (\dr^2\gamma)^{(\geq 0)} Y  \r_{L^2_{\delta+2}} ,
\end{align*}
where we also used the fact that the coefficients of $\gamma$ are bounded. We bound all the metric terms in $L^\infty$ using the background regularity \eqref{estim g0 K0} and \eqref{estim Fbar 1}. More precisely,  we have
\begin{align}
\l \gamma^{-1}-e^{-1} \r_{L^\infty} & \lesssim \l \bar{g}_0 - e \r_{L^\infty} + \lambda \l \bar{F}^{(1)} \r_{L^\infty} + \lambda^2 \l \omega^{(2)} \r_{L^\infty} ,\label{estim gamma - e}
\\  \l \dr \gamma \r_{L^\infty} & \lesssim      \l \dr \bar{g}_0 \r_{L^\infty} + \l \bar{F}^{(1)} \r_{L^\infty} \label{estim d gamma}
\\&\quad + \lambda\left(  \l \dr \bar{F}^{(1)} \r_{L^\infty} + \l \omega^{(2)} \r_{L^\infty}  \right) + \lambda^2 \l \dr\omega^{(2)} \r_{L^\infty}    , \nonumber
\\  \l (\dr^2\gamma)^{(\geq 0)} \r_{L^\infty} & \lesssim   \l \dr^2\bar{g}_0 \r_{L^\infty} + \l \dr \bar{F}^{(1)} \r_{L^\infty} + \l \omega^{(2)} \r_{L^\infty}\label{estim d2 gamma}
\\&\quad + \lambda \left( \l \dr^2\bar{F}^{(1)} \r_{L^\infty} + \l \dr\omega^{(2)} \r_{L^\infty} \right) + \lambda^2 \l \dr^2 \omega^{(2)}  \r_{L^\infty}      . \nonumber
\end{align}
Using \eqref{estim g0 K0}, \eqref{estim Fbar 1}, \eqref{def omega2} and $\l Y \r_{H^2_\delta}\leq C_1\e$, the estimates \eqref{estim gamma - e}-\eqref{estim d gamma}-\eqref{estim d2 gamma} then imply that $A\lesssim C_1\e^2$. 

The term $D$ depends only on previous terms of the construction so \eqref{d1} simply gives $D\lesssim \e$. The maps $Y\longmapsto\sigma^{(2)}(Y)$ and $Y\longmapsto\f(Y)$ are affine with coefficients as $D$ (see \eqref{estim sigma 2 diago}, \eqref{estim sigma 2 non diago} and \eqref{estim f}) so a combination of the two previous arguments gives $B+C\lesssim C_1\e^2 + \e$.  Note that for $C$ we need to compensate the loss in power of $\lambda$ when differenciating the oscillating parts of $\f$ with the $\lambda$ in front.

\saut
We have proved that $A+B+C+D \lesssim C_1\e^2 + \e$. In particular, this allows us to use the second part of Proposition \ref{prop Delta et dive e L e} to prove that there exists a unique $\mathbf{\Psi}(Y)\in H^2_\delta$ solving \eqref{eq psi(Y)}. Moreover we have
\begin{align*}
\l \mathbf{\Psi}(Y) \r_{H^2_\delta} \lesssim C_1\e^2 + \e.
\end{align*}
Therefore, taking $C_1$ large compared to the numerical constant appearing in these estimates and $\e$ small compared to 1 proves that $\mathbf{\Psi}$ is well-defined and maps $\mathcal{B}$ to itself. 

\saut
To prove that $\mathbf{\Psi}$ is a contraction, we consider $Y_a$ and $Y_b$ two elements of $\mathcal{B}$. By substracting the equations satisfied by $\mathbf{\Psi}(Y_a)$ and $\mathbf{\Psi}(Y_b)$ we obtain the equation for their difference
\begin{align}
\dive_e L_e \left( \mathbf{\Psi}(Y_a) - \mathbf{\Psi}(Y_b) \right) & = \left( \dive_e L_e - M^{[\geq 0]} \right)(Y_a - Y_b) \label{eq diff Ya-Yb}
\\&\quad - \mathbf{d}^{[\geq 0]}\left(\sigma^{(2)}(Y_a) - \sigma^{(2)}(Y_b) \right) - \frac{\lambda}{3} \d \left( \f (Y_a)-\f(Y_b) \right) \nonumber .
\end{align}
Using again the fact that $Y\longmapsto\sigma^{(2)}(Y)$ and $Y\longmapsto\f(Y)$ are affine and using \eqref{estimation m0} for $\dive_e L_e - M^{[\geq 0]}$, we can prove that
\begin{align*}
\l \text{RHS of \eqref{eq diff Ya-Yb}} \r_{L^2_{\delta+2}} & \lesssim \e \l Y_a - Y_b \r_{H^2_{\delta}} .
\end{align*}
Therefore, taking $\e$ small enough ensures that $\mathbf{\Psi}$ is a contraction.
\end{proof}

Thanks to this proposition, the Banach fixed point theorem implies the existence of $Y\in \mathcal{B}$ solving \eqref{eq Y}, and therefore $(\sigma^{(2)}(Y),\f(Y),Y)$ solves \eqref{trace sigma2}-\eqref{eq f}-\eqref{divergence sigma2}-\eqref{eq Y}. 

\saut
We can also prove that $Y$ enjoys higher regularity. Indeed we can bound the RHS of \eqref{eq psi(Y)} in higher order Sobolev spaces and use elliptic estimates for $\dive_eL_e$ as in the previous proposition. The worse term in \eqref{eq psi(Y)} is given by $\nabla \sigma^{(1)}$ which is bounded in $H^{N-2}$ (see \eqref{sigma 1}, \eqref{def K1} and \eqref{estim Fbar 2}) and in terms of decay the worse term is $\nabla \sigma^{(0)}$ (see \eqref{sigma 0}, \eqref{def K0} and \eqref{estim g0 K0}). Therefore we obtain a solution $Y$ of \eqref{eq Y} such that $Y\in H^N_\delta$ and
\begin{align}
\l  Y \r_{H^{k+2}_{\delta}} \lesssim \frac{\e}{\lambda^{k}}, \label{estim Y}
\end{align} 
for $k\in\llbracket 0, N-2\rrbracket$, where the loss of $\lambda$ powers is due to the high-frequency character of each term in \eqref{eq psi(Y)}. We summarize what we know on the parameter $\sigma$ in the following corollary.

\begin{coro}
The tensor $\sigma$ defined by \eqref{ansatz sigma} is a TT-tensor for the metric $\gamma$, belongs to $H^{N-1}_{\delta+1}$ and satisfies
\begin{align}
\max_{ k\in\llbracket 0,N-3 \rrbracket } \lambda^{k+2}  \l  \sigma  \r_{H^{k+2}_{\delta+1}} + \max_{ k\in\llbracket 0,N-3 \rrbracket } \lambda^{k}  \l \nabla^k  \sigma  \r_{L^\infty}  \lesssim \e. \label{estim sigma}
\end{align}
\end{coro}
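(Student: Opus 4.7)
The plan is to verify the two claims of the corollary --- the TT-property and the weighted Sobolev/uniform estimate --- separately, both as direct consequences of the construction carried out in Sections~\ref{section almost TT} and \ref{section TT}.

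For the TT-property, I would substitute the ansatz \eqref{ansatz sigma} into $\tr_\gamma \sigma$ and $\dive_\gamma \sigma$, expand using \eqref{inverse gamma} and Lemmas \ref{lem divergence gamma} and \ref{lem conf laplace non osc}, and group by powers of $\lambda$. By Lemma~\ref{lem sigma01}, the $\lambda^{0}$ and $\lambda^{1}$ contributions to the trace and the $\lambda^{-1}$ and $\lambda^{0}$ contributions to the divergence vanish identically. All remaining contributions --- at orders $\lambda^{2}$ and higher in the trace, and $\lambda^{1}$ and higher in the divergence --- are designed to match exactly the coupled system \eqref{trace sigma2}--\eqref{eq Y} which was solved for $(\sigma^{(2)},\f,Y)$: \eqref{trace sigma2} and \eqref{eq f} make $\tr_\gamma \sigma \equiv 0$, while \eqref{divergence sigma2} and \eqref{eq Y} give $\dive_\gamma \sigma \equiv 0$. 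Crucially, the $L_\gamma Y$ piece in \eqref{ansatz sigma} is present precisely so that the $\lambda^{-1}$ term $M^{[-1]}(Y)$ produced by the conformal Laplacian (Lemma~\ref{lem conf laplace non osc}) can be used to absorb the non-oscillating remainder of the divergence at order $\lambda^{1}$.

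For the estimate \eqref{estim sigma}, I would bound each of the four pieces of \eqref{ansatz sigma} in $H^{k+2}_{\delta+1}$ and sum. The term $\sigma^{(0)}=K^{(0)}$ is bounded using \eqref{estim g0 K0} and Lemma~\ref{lem F1}. The term $\sigma^{(1)}$, written out via \eqref{sigma 1}, \eqref{def K1}, \eqref{def tau 1} and the definition \eqref{def W2} of $W^{(2)}$, is controlled by \eqref{estim Fbar 2} and the background bounds. The term $\sigma^{(2)}$ is handled by plugging the estimates on $\sigma^{(0)},\sigma^{(1)}$ and the bound \eqref{estim Y} on $Y$ into \eqref{estim sigma 2 diago}--\eqref{estim sigma 2 non diago}. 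The correction $L_\gamma Y$ is controlled directly by \eqref{estim Y} (since $L_\gamma$ loses only one derivative and has $C^{N-1}$ coefficients by \eqref{estim g0 K0} and \eqref{estim Fbar 1}), and $\f$ by \eqref{estim f}. The $L^\infty$ half of \eqref{estim sigma} then follows from the Sobolev embedding $H^{k+2}_{\delta+1}\hookrightarrow C^{k}_{\delta+5/2}$ of Proposition~\ref{prop WSS chap 2} applied term by term.

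The main obstacle is purely a bookkeeping of powers of $\lambda$. Each object among $\sigma^{(1)},\sigma^{(2)},Y,\f$ is built from oscillating ingredients of the form $\mathrm{T}(u_0/\lambda) f$, so every spatial derivative costs a factor $\lambda^{-1}$; one has to verify that these losses are exactly compensated by the explicit $\lambda^{k+2}$ prefactor in \eqref{estim sigma} and by the $\lambda$ prefactors in the ansatz itself, giving a bound uniform in $\lambda\in(0,1]$. The regularity ceiling $k\le N-3$ is forced by two derivatives being consumed in the $H^{k+2}$ norm together with $\bar{F}^{(2,1)}$ being only in $H^{N-1}$ via \eqref{estim Fbar 2}, which is the worst regularity appearing inside $\sigma^{(1)}$. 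Once each piece is estimated with the correct $\lambda$-scaling, taking the maximum over $k\in\llbracket 0,N-3\rrbracket$ yields both halves of \eqref{estim sigma}.
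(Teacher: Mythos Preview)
Your approach is essentially the paper's: the TT-property follows from the construction, and each piece of \eqref{ansatz sigma} is estimated separately using the bounds already established on $\sigma^{(0)}$, $\sigma^{(1)}$, $\sigma^{(2)}$, $Y$ and $\f$. The paper's proof is in fact terser than yours --- it simply notes that $\sigma^{(2)}(Y)$ and $\f(Y)$ can be neglected in favour of $L_\gamma Y\sim\nabla Y$, and then invokes \eqref{estim Y}.

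There is, however, one concrete issue with your derivation of the $L^\infty$ half of \eqref{estim sigma}. You propose to obtain it from the Sobolev embedding $H^{k+2}_{\delta+1}\hookrightarrow C^k_{\delta+5/2}$ applied to the pieces of $\sigma$ themselves. But the first half of \eqref{estim sigma} only gives $\l\sigma\r_{H^{k+2}_{\delta+1}}\lesssim \lambda^{-(k+2)}\e$, so the embedding yields $\l\nabla^k\sigma\r_{L^\infty}\lesssim \lambda^{-(k+2)}\e$, two powers of $\lambda$ short of the claimed $\lambda^{-k}\e$. The correct route --- and this is what the paper means by ``estimating \eqref{sigma 0}--\eqref{sigma 1} directly\ldots in $L^\infty$ using Sobolev embeddings'' --- is to apply the Sobolev embedding to the \emph{non-oscillating coefficients} (such as $K_0$, $\bar{F}^{(1)}$, $\omega^{(2)}$, and $Y$ via \eqref{estim Y}) rather than to the full oscillating pieces. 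Then the only $\lambda$-loss comes from differentiating the factors $\mathrm{T}(u_0/\lambda)$, which produces exactly $\lambda^{-k}$ after $k$ derivatives. Your final paragraph shows you are aware that the $\lambda$-bookkeeping is the crux; just be sure to run it through the coefficients, not through the oscillating composites.
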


\begin{proof}
The oscillating terms $\sigma^{(0)}$ and $\sigma^{(1)}$ lose one $\lambda$ power for each derivatives and we can estimate the actual tensors by estimating \eqref{sigma 0}-\eqref{sigma 1} directly in weighted Sobolev spaces or in $L^\infty$ using Sobolev embeddings of Proposition \ref{prop WSS chap 2}. Moreover, we can neglect $\sigma^{(2)}(Y)$ and $\f(Y)$ and focus on $L_\gamma Y$ in \eqref{ansatz sigma} which rewrites broadly as a $\nabla Y$ term since we can put the $\gamma$ and $\dr\gamma$ term in $L^\infty$. Therefore, the estimate \eqref{estim sigma} follows directly from \eqref{estim Y} and Sobolev embeddings.
\end{proof}

\section{Exact solution to the constraint equations}\label{section exact solution}

We are now ready to solve the constraint equations \eqref{hamiltonian constraint} and \eqref{momentum constraint}. The parameters of these equations are $\gamma$, $\tau$ and $\sigma$. The metric $\gamma$ and the scalar function $\tau$ are fully known thanks to Section \ref{section def gamma} and \eqref{def tau 1}. The TT-tensor $\sigma$ has been defined in Sections \ref{section TT} and \ref{section almost TT}. Recall that the solutions of \eqref{hamiltonian constraint}-\eqref{momentum constraint} are of the form
\begin{align*}
W & =  \lambda^2\left( W^{(2)}\left( \frac{u_0}{\lambda}\right) + \tW \right)+ \lambda^3 W^{(3)}\left( \frac{u_0}{\lambda}\right) ,
\\ \ffi & = 1 + \lambda^2 \left( \ffi^{(2)}\left( \frac{u_0}{\lambda}\right) + \tffi \right) + \lambda^3 \ffi^{(3)}\left( \frac{u_0}{\lambda}\right),
\end{align*}
where $\ffi^{(2)}$, $\ffi^{(3)}$ and $W^{(2)}$ are defined in Sections \ref{section ham 0}, \ref{section ham 1} and \ref{section mom 0} respectively. Therefore, it remains to construct $\tffi$, $\tW$ and $W^{(3)}$.

\subsection{System for the remainders}

The construction of Sections \ref{section ham 0}, \ref{section ham 1} and \ref{section mom 0} ensures that the constraint equations \eqref{hamiltonian constraint} and \eqref{momentum constraint} are partly solved. More precisely, it remains to solve the $\lambda^{\geq 2}$ levels of \eqref{hamiltonian constraint} and the $\lambda^{\geq 1}$ levels of \eqref{momentum constraint}. In this section, we compute the exact equations this gives for $\tffi$, $\tW$ and $W^{(3)}$.

\subsubsection{Definition of $W^{(3)}(\tW)$}

The purpose of the oscillating vector field $W^{(3)}$ is to solve the $\lambda^1$ momentum level. However, since the conformal Laplacian $\dive_\gamma L_\gamma$ loses one power of $\lambda$ even when applied to a non-oscillating field such as $\tW$ (see Lemma \ref{lem conf laplace non osc}), the latter is a source term in the equation for $W^{(3)}$. This explains why $W^{(3)}$ is considered as part of the remainders, when $\ffi^{(3)}$ is not.

\saut
We define $W^{(3)}$ as a function of $\tW$. The $\lambda^1$ momentum level writes
\begin{align}
\mathbf{M}_\ell^{[-2]}(W^{(3)}) + M^{[-1]}_\ell(\tW) + \mathbf{M}_\ell^{[-1]}(W^{(2)}) = \frac{2}{3}\dr_\ell \tau^{(1)}.
\end{align}
Thanks to \eqref{M-2 expression} and \eqref{m-1} this is equivalent to
\begin{align*}
 \dr^2_\theta W^{(3)}_\ell + \frac{1}{3}(N_0)_\ell \dr^2_\theta W^{(3)}_{N_0}    = -    \cos\left(\frac{u_0}{\lambda}\right)  \bar{g}_0^{ij}   \bar{F}^{(1)}_{i \ell}   \tW_j  + \frac{1}{|\nabla u_0|^2_{\bar{g}_0}}\left(  \frac{2}{3}\dr_\ell \tau^{(1)} -  \mathbf{M}_\ell^{[-1]}(W^{(2)}) \right) .
\end{align*}
Lemma \ref{inversion M-2} then gives
\begin{align}
 \dr^2_\theta W^{(3)}_\ell & = -    \cos\left(\frac{u_0}{\lambda}\right)  \bar{g}_0^{ij}   \bar{F}^{(1)}_{i \ell}   \tW_j  \label{def W3}
 \\&\quad+ \frac{1}{|\nabla u_0|^2_{\bar{g}_0}}\left[  \frac{2}{3}\dr_\ell \tau^{(1)} -  \mathbf{M}_\ell^{[-1]}(W^{(2)})   -  \frac{1}{4}(N_0)_\ell   \left(  \frac{2}{3}N_0 \tau^{(1)} -  \mathbf{M}_{N_0}^{[-1]}(W^{(2)}) \right) \right].\nonumber
\end{align}
Let us check that the RHS of this equation is purely oscillating. Since $\tau^{(1)}$ is purely oscillating (see \eqref{expression tau 1}), we only need to check $\mathbf{M}_\ell^{[-1]}(W^{(2)}) $. For this we use first \eqref{M-1 oscillation} and then \eqref{W2 oscillation}, this gives
\begin{align*}
\mathbf{M}^{[-1]}(W^{(2)}) & \simf   \cos(\theta) W^{(2)}   +  \cos(\theta) \dr^2_\theta W^{(2)}    + (1+\sin(\theta))   \dr_\theta W^{(2)}
\\&\simf  \cos(\theta) + \sin(2\theta) + \cos(3\theta).     
\end{align*}
The RHS of \eqref{def W3} is thus purely oscillating and we can integrate it twice with respect to $\theta$. We obtain
\begin{align}
W^{(3)}_\ell(\tW) & = \cos\left(\frac{u_0}{\lambda}\right)  \bar{g}_0^{ij}   \bar{F}^{(1)}_{i \ell}   \tW_j  +  W^{(3,rest)}_\ell , \label{def W3 bis}
\end{align}
where $W^{(3,rest)}$ satisfies
\begin{align}
\left| W^{(3,rest)} \right| & \lesssim \left|\dr\tau^{(1)} \right|  + \left| \mathbf{M}^{[-1]}(W^{(2)}) \right| , \label{estim W3 rest}
\end{align}
with a high-frequency behaviour.

\subsubsection{The system for $\tW$ and $\tffi$}

In this section, we will expand in the most concise way the non-linearities involved in the equations for $\tffi$ and $\tW$. We start with the equation for $\tW$, which, if we drop the vectorial notation, writes
\begin{align}
 M^{[\geq 0]}(\tW) & = - \mathbf{M}^{[\geq -1 ]}(W^{(3)}(\tW)) - \mathbf{M}^{[\geq 0 ]}(W^{(2)}) + \frac{2}{3}(\ffi^6 \dr\tau)^{(\geq 2)}      ,                            \label{eq tW}
\end{align}
The following lemma expands the non-linearity in \eqref{eq tW}. 

\begin{lem}\label{lem exp ffi^6}
We have 
\begin{align}
\frac{2}{3}(\ffi^6 \dr_\ell\tau)^{(\geq 2)}  & =  \mathbf{a}_0 +  \sum_{k=1}^{6}\lambda^{2(k-1)} \mathbf{a}_k \tffi^k\label{exp ffi^6} ,
\end{align}
where for $k\in \llbracket 0,6 \rrbracket$, $\mathbf{a}_k$ is supported in $B_R$ and 
\begin{align}
 \max_{i\in \llbracket  0,N-5  \rrbracket } \lambda^i \l \nabla^i  \mathbf{a}_k   \r_{L^\infty} \lesssim\e. \label{estim ak}
\end{align}
\end{lem}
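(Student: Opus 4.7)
The plan is a direct algebraic expansion. The identity \eqref{exp ffi^6} is a polynomial identity in $\tffi$ once the ansatz \eqref{ansatz ffi} is substituted, and the coefficients $\mathbf{a}_k$ will be read off from a binomial expansion. In order, I would (i) expand $\ffi^6$ as a polynomial in $\tffi$ whose coefficients depend only on $\ffi^{(2)}$, $\ffi^{(3)}$ and $\lambda$, (ii) compute $\dr_\ell \tau$ and identify the $\lambda^0$ and $\lambda^1$ contributions that are removed by the $(\cdot)^{(\geq 2)}$ operation, and (iii) bound the resulting $\mathbf{a}_k$ in $L^\infty$.

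First, setting $A := \lambda^2 \ffi^{(2)}(u_0/\lambda) + \lambda^3 \ffi^{(3)}(u_0/\lambda)$ and $B := \lambda^2 \tffi$ so that $\ffi = (1+A) + B$, the binomial theorem yields
\[
\ffi^6 = \sum_{j=0}^{6} \binom{6}{j} \lambda^{2j} \tffi^j (1+A)^{6-j}.
\]
Using $\tau = \lambda \tau^{(1)}(u_0/\lambda)$ together with the chain rule gives
\[
\dr_\ell \tau = \dr_\ell u_0 \cdot \dr_\theta \tau^{(1)} + \lambda \, (\dr_\ell \tau^{(1)})_{\mathrm{ns}},
\]
where the subscript $\mathrm{ns}$ denotes the spatial derivative not hitting the phase $u_0/\lambda$. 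Since $A = \GO{\lambda^2}$, the only $\lambda^0$ and $\lambda^1$ contributions to $\ffi^6 \dr_\ell \tau$ come from the $j=0$ summand above. Writing $(1+A)^6 = 1 + \lambda^2 P$ with $P = 6(\ffi^{(2)} + \lambda \ffi^{(3)}) + 15 \lambda^2 (\ffi^{(2)} + \lambda \ffi^{(3)})^2 + \cdots$ an explicit polynomial in $\ffi^{(2)}$, $\ffi^{(3)}$ and $\lambda$, one obtains
\[
\ffi^6 \dr_\ell \tau - \dr_\ell u_0 \cdot \dr_\theta \tau^{(1)} - \lambda \, (\dr_\ell \tau^{(1)})_{\mathrm{ns}} = \lambda^2 P \dr_\ell \tau + \sum_{j=1}^{6} \binom{6}{j} \lambda^{2j} \tffi^j (1+A)^{6-j} \dr_\ell \tau.
\]
Dividing by $\lambda^2$, multiplying by $2/3$ and comparing with \eqref{exp ffi^6} identifies
\[
\mathbf{a}_0 = \tfrac{2}{3}\, P\, \dr_\ell \tau, \qquad \mathbf{a}_k = \tfrac{2}{3}\binom{6}{k}(1+A)^{6-k}\dr_\ell \tau \quad (1 \leq k \leq 6).
\]

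The two stated properties then follow from the explicit form of $\mathbf{a}_k$. For the support, each $\mathbf{a}_k$ contains the factor $\dr_\ell \tau$, and the explicit expression \eqref{expression tau 1} shows that $\tau^{(1)}$, hence $\dr_\ell \tau$, is supported in $B_R$ (both $\bar{F}^{(1)}$ and $F_0$ are). For the bound \eqref{estim ak}, the factor $(1+A)^{6-k}$ is $L^\infty$-controlled by a constant close to $1$ for $\e$ small, using \eqref{def ffi 2}, \eqref{estim ffi 3} and the Sobolev embedding of Proposition \ref{prop WSS chap 2} to put $\ffi^{(2)}$ and $\ffi^{(3)}$ into $L^\infty$. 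Similarly, $\| \dr_\ell \tau\|_{L^\infty} \lesssim \e$ follows from \eqref{expression tau 1}, \eqref{estim g0 K0}, \eqref{estim Fbar 1} and \eqref{estim Fbar 2}. For higher derivatives, the weight $\lambda^i$ in \eqref{estim ak} compensates exactly the $\lambda^{-i}$ loss produced by $i$ derivatives falling on the phase $u_0/\lambda$, and the ceiling $i \leq N - 5$ is dictated by the weakest regularity available, namely $\ffi^{(3)} \in H^{N-3}$ from \eqref{estim ffi 3}, combined with the three-dimensional embedding $H^{N-3} \subset C^{N-5}$.

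No serious obstacle is expected: the content of the lemma is a polynomial identity, and the main care needed is bookkeeping — keeping track of the $\lambda$ powers, verifying that the subtraction of the $\lambda^0$ and $\lambda^1$ terms affects only the $j = 0$ summand, and matching the regularity constraint $i \leq N - 5$ to the regularity of $\ffi^{(3)}$.
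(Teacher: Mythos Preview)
Your proof is correct and follows essentially the same approach as the paper: a direct binomial expansion of $\ffi^6$ in powers of $\tffi$, identification of the coefficients $\mathbf{a}_k$ as polynomials in $\ffi^{(2)},\ffi^{(3)}$ multiplied by $\dr_\ell\tau$, and deduction of support and $L^\infty$ bounds from the compact support of $\dr_\ell\tau$ together with the known regularity of $\ffi^{(2)},\ffi^{(3)},\tau^{(1)}$. Your writeup is in fact more explicit than the paper's, which only sketches the expansion and refers to \eqref{def ffi 2}, \eqref{estim ffi 3} and \eqref{def tau 1} for the estimates.
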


\begin{proof}
Recall that $\tau=\lambda\tau^{(1)}$ implies $\dr\tau=\GO{\lambda^0}$, thus we only need to expand 
\begin{align*}
\left( 1 + \lambda^2 \left( \ffi^{(2)} + \tffi \right) + \lambda^3 \ffi^{(3)} \right)^6 ,
\end{align*}
and only keep the terms of order $\lambda^2$ or more, which only excludes the term 1. The coefficient $\mathbf{a}_0$ in \eqref{exp ffi^6} contains all the terms where $\tffi$ doesn't appear, it is thus a polynomial in $\ffi^{(2)}$ and $\ffi^{(3)}$ with no constant term and multiplied by $\dr\tau$. Therefore, $\mathbf{a}_0$ shares the same support property as $\ffi^{(2)}$ and $\ffi^{(3)}$ and the estimate \eqref{estim ak} follows from \eqref{def ffi 2}, \eqref{estim ffi 3} and \eqref{def tau 1}. If $k\in \llbracket 0,6 \rrbracket$, the same reasoning applies but $\mathbf{a}_k$ is now a polynomial in $\ffi^{(2)}$ and $\ffi^{(3)}$ with a constant term. But as this polynomial is still multiplied by $\dr\tau$, the support property and the estimate still hold.
\end{proof}

The equation for $\tffi$ writes
\begin{align}
8\Delta_\gamma \tffi & =   -  8 \sum_{i=2}^3 \mathbf{H}^{[\geq 2-i]}(\ffi^{(i)}) + R^{(\geq 2)} + R(\gamma)\left( \ffi^{(2)} + \tffi + \lambda\ffi^{(3)}\right)   \label{eq tffi} 
\\&\quad + \frac{2}{3}(\tau^2 \ffi^5)^{(\geq 2)} - \left(\left|\sigma+L_\gamma W\right|^2_{\gamma}\ffi^{-7} \right)^{(\geq 2)}.   \nonumber
\end{align}
The next two lemmas expand the non-linearities in \eqref{eq tffi}.

\begin{lem}\label{lem exp tau^2}
We have
\begin{align*}
\frac{2}{3}(\tau^2 \ffi^5)^{(\geq 2)} =      \mathbf{b}_0 +  \sum_{k=1}^{5}\lambda^{2k} \mathbf{b}_k \tffi^k       ,
\end{align*}
where for $k\in \llbracket 0,5 \rrbracket$, $\mathbf{b}_k$ is supported in $B_R$ and 
\begin{align}
 \max_{i\in \llbracket  0,N-5  \rrbracket } \lambda^i \l \nabla^i  \mathbf{b}_k   \r_{L^\infty} \lesssim\e. \label{estim bk}
\end{align}
\end{lem}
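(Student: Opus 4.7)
The plan is to mirror the short argument of Lemma \ref{lem exp ffi^6}, with one pleasant simplification. First I would use $\tau = \lambda \tau^{(1)}$ (recall \eqref{def tau}) to write $\tau^2 = \lambda^2 (\tau^{(1)})^2$, so the entire expression $\tau^2 \ffi^5$ is automatically $\GO{\lambda^2}$. Consequently no term is dropped by the $(\cdot)^{(\geq 2)}$ projection, and
\[
\frac{2}{3}(\tau^2 \ffi^5)^{(\geq 2)} = \frac{2}{3}(\tau^{(1)})^2 \ffi^5,
\]
which is already simpler than the $\dr\tau = \GO{\lambda^0}$ situation treated in Lemma \ref{lem exp ffi^6}.

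Next I would expand $\ffi^5 = \bigl((1 + \lambda^2 \ffi^{(2)} + \lambda^3 \ffi^{(3)}) + \lambda^2 \tffi\bigr)^5$ by the binomial theorem, treating $\lambda^2 \tffi$ as the expansion variable. Reading off the coefficient of $\tffi^k$ for $k \in \llbracket 0, 5 \rrbracket$ produces the announced identity, each $\mathbf{b}_k$ being a fixed multiple of $(\tau^{(1)})^2 \bigl(1 + \lambda^2 \ffi^{(2)} + \lambda^3 \ffi^{(3)}\bigr)^{5-k}$.

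For the support property, the explicit expression \eqref{expression tau 1} shows that $\tau^{(1)}$ is a sum of oscillating terms whose non-oscillating coefficients involve only $\bar{F}^{(1)}$ and $F_0^2$, both supported in $B_R$ by \eqref{support F0} and Lemma \ref{lem F1}. The factor $(\tau^{(1)})^2$ is therefore supported in $B_R$, and this property is inherited by every $\mathbf{b}_k$ via its product structure.

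Finally, the high-frequency bound \eqref{estim bk} would be established exactly as for the $\mathbf{a}_k$: combining \eqref{estim g0 K0}, \eqref{estim F0 chap 2}, \eqref{estim u0} and \eqref{estim Fbar 1} with the Sobolev embedding of Proposition \ref{prop WSS chap 2} yields $\lambda^i \l \nabla^i \tau^{(1)} \r_{L^\infty} \lesssim \e$ for $i \leq N - 5$, the loss of one power of $\lambda$ per derivative coming from differentiating the $u_0/\lambda$ oscillations. The factor $\bigl(1 + \lambda^2 \ffi^{(2)} + \lambda^3 \ffi^{(3)}\bigr)^{5-k}$ is uniformly close to $1$ in $L^\infty$, and its $\lambda$-weighted derivatives satisfy the same type of bound thanks to the $\lambda^2, \lambda^3$ prefactors compensating for the derivative losses on $\ffi^{(2)}$ and $\ffi^{(3)}$ (defined in \eqref{def ffi 2} and \eqref{def ffi 3} and estimated via \eqref{estim ffi 3}). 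Leibniz and Hölder in $L^\infty$ then deliver \eqref{estim bk}. I do not foresee any substantive obstacle beyond the essentially routine bookkeeping already performed in Lemma \ref{lem exp ffi^6}.
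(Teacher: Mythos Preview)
Your proposal is correct and is exactly what the paper intends: it explicitly says the proof is left to the reader since it is very similar to that of Lemma~\ref{lem exp ffi^6}, and your argument mirrors that proof with the simplification you note (namely $\tau^2=\lambda^2(\tau^{(1)})^2$, so the $(\cdot)^{(\geq 2)}$ projection drops nothing). The support and estimate steps via $(\tau^{(1)})^2$ and the $\ffi^{(2)},\ffi^{(3)}$ bounds are precisely the analogues of the $\dr\tau$ argument in Lemma~\ref{lem exp ffi^6}.
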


The proof of Lemma \ref{lem exp tau^2} is left to the reader since it is very similar to the one of Lemma \ref{lem exp ffi^6}. We now expand the non-linearities with a negative power of $\ffi$.

\begin{lem}\label{lem exp ffi-7}
There exists a universal constant $C_{emb}>0$ such that if $\l \tffi \r_{H^2_\sigma} < C_{emb}^{-1}$ and if $\e$ is small enough, then we have
\begin{align*}
  \left( \left|\sigma+L_\gamma W\right|^2_{\gamma} \ffi^{-7} \right)^{(\geq 2)} & = \left(\left|\sigma+L_\gamma W\right|^2_{\gamma} \right)^{(\geq 2)} + \left|\sigma+L_\gamma W\right|^2_{\gamma}\left( \mathbf{c}_0 +  \sum_{k\geq 1} \mathbf{c}_k \lambda^{2(k-1)} \tffi^k  \right)   ,
\end{align*}
where the $\mathbf{c}_k$ satisfy:
\begin{itemize}
\item $\mathbf{c}_0$ is supported in $B_R$ and we have
\begin{align}
 \max_{i\in \llbracket  0,N-5  \rrbracket } \lambda^i \l \nabla^i  \mathbf{c}_0   \r_{L^\infty} \lesssim \e, \label{estim c0}
\end{align}
\item if $k\geq 1$, we have
\begin{align}
 \max_{i\in \llbracket  0,N-5  \rrbracket } \lambda^i \l \nabla^i  \mathbf{c}_k   \r_{L^\infty} \lesssim 1. \label{estim ck geq 1}
\end{align}
\end{itemize}
\end{lem}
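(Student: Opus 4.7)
The strategy is to expand $\ffi^{-7}$ as a convergent series in $\lambda^2 \tffi$ with coefficients depending only on the known background $A \vcentcolon = \lambda^2 \ffi^{(2)} + \lambda^3 \ffi^{(3)}$, so that $\ffi = (1+A) + \lambda^2 \tffi$. I first verify the smallness needed for the binomial series to converge. The explicit formula \eqref{def ffi 2} for $\ffi^{(2)}$, the bound \eqref{estim ffi 3} on $\ffi^{(3)}$ and the Sobolev embeddings of Proposition \ref{prop WSS chap 2} (valid since $N\geq 10$) give $\l A\r_{L^\infty} \lesssim \lambda^2 \e$, hence $\l A\r_{L^\infty} < 1/2$ for $\e$ small. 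Next, since $\delta > -3/2$, Proposition \ref{prop WSS chap 2} provides an embedding $H^2_\delta \hookrightarrow L^\infty$; absorbing its constant into the universal $C_{emb}$, the hypothesis $\l\tffi\r_{H^2_\delta} < C_{emb}^{-1}$ ensures $\l\lambda^2 \tffi / (1+A)\r_{L^\infty} < 1$ uniformly in $\lambda \in (0,1]$.

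Under these conditions, the factorization
\begin{align*}
\ffi^{-7} = (1+A)^{-7} \left( 1 + \frac{\lambda^2 \tffi}{1+A} \right)^{-7}
\end{align*}
is valid and the binomial series for the second factor converges uniformly on $\R^3$. Expanding and regrouping yields
\begin{align*}
\ffi^{-7} - 1 = \bigl[(1+A)^{-7}-1\bigr] + \sum_{k=1}^\infty \binom{-7}{k} \lambda^{2k} \tffi^k (1+A)^{-7-k} = \mathbf{c}_0 + \sum_{k \geq 1} \lambda^{2(k-1)} \mathbf{c}_k \tffi^k,
\end{align*}
upon defining $\mathbf{c}_0 \vcentcolon = (1+A)^{-7} - 1$ and $\mathbf{c}_k \vcentcolon = \lambda^2 \binom{-7}{k}(1+A)^{-7-k}$ for $k \geq 1$. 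The identity of the lemma then follows from the decomposition $|\sigma+L_\gamma W|^2_\gamma \ffi^{-7} = |\sigma+L_\gamma W|^2_\gamma + |\sigma+L_\gamma W|^2_\gamma(\ffi^{-7}-1)$, by noting that $|\sigma+L_\gamma W|^2_\gamma$ is $\GO{\lambda^0}$ while $\ffi^{-7}-1$ is $\GO{\lambda^2}$, so the second summand already lies at order $\geq \lambda^2$ and equals its $(\geq 2)$-part.

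The estimates \eqref{estim c0}-\eqref{estim ck geq 1} are the remaining bookkeeping step, following the same pattern as Lemmas \ref{lem exp ffi^6} and \ref{lem exp tau^2}. The compact support of $\mathbf{c}_0$ in $B_R$ is inherited from the supports of $\ffi^{(2)}, \ffi^{(3)}$: outside $B_R$ the quantity $A$ vanishes, hence so does $(1+A)^{-7}-1$. For the weighted $L^\infty$ bounds I apply Fa\`a di Bruno to $(1+A)^{-7}-1$ (resp.~$(1+A)^{-7-k}$) to express $\nabla^i \mathbf{c}_0$ (resp.~$\nabla^i \mathbf{c}_k$) as a polynomial in $\nabla^{j_1}A, \dots, \nabla^{j_m}A$ with $j_1+\cdots+j_m = i$ and coefficients bounded by powers of $(1+A)^{-1}$ (uniformly since $\l A\r_{L^\infty} \leq 1/2$). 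The high-frequency structure of $\ffi^{(2)}, \ffi^{(3)}$ gives $\l\nabla^j A\r_{L^\infty} \lesssim \lambda^{2-j}\e$, so each monomial of total derivative order $i$ with $m\geq 1$ factors is bounded by $\lambda^{2m-i}\e^m \leq \lambda^{2-i}\e$ (using $\lambda^2 \e \leq 1$). Multiplying by $\lambda^i$ yields $\lambda^i\l\nabla^i \mathbf{c}_0\r_{L^\infty} \lesssim \e$, and for $\mathbf{c}_k$ with $k\geq 1$ the extra $\lambda^2$ in its definition together with the trivial $i=0$ case gives $\lambda^i\l\nabla^i \mathbf{c}_k\r_{L^\infty} \lesssim_k 1$; the $k$-dependent constant is harmless since the uniform convergence of the series was already established. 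The only real subtlety is this $\lambda$-scaling bookkeeping; the rest is a direct Taylor expansion.
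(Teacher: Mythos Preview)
Your approach is essentially the same as the paper's --- expand $\ffi^{-7}$ as a convergent series and read off the coefficients --- and your factorization through $(1+A)^{-7}$ makes the $\mathbf{c}_k$'s pleasantly explicit. There is, however, a $\lambda^2$ bookkeeping slip in matching your expansion to the lemma's identity. In the paper's convention (Section \ref{section HF notations}) one has $T = \lambda^2 T^{(\geq 2)}$ whenever $T=\GO{\lambda^2}$, so $T^{(\geq 2)}=\lambda^{-2}T$, \emph{not} $T$ itself. Hence from
\[
|\sigma+L_\gamma W|^2_\gamma \ffi^{-7} = |\sigma+L_\gamma W|^2_\gamma + |\sigma+L_\gamma W|^2_\gamma(\ffi^{-7}-1)
\]
one gets
\[
\bigl(|\sigma+L_\gamma W|^2_\gamma \ffi^{-7}\bigr)^{(\geq 2)} = \bigl(|\sigma+L_\gamma W|^2_\gamma\bigr)^{(\geq 2)} + \lambda^{-2}\,|\sigma+L_\gamma W|^2_\gamma(\ffi^{-7}-1),
\]
which means the coefficients in the lemma must satisfy $\mathbf{c}_0+\sum_{k\geq 1}\mathbf{c}_k\lambda^{2(k-1)}\tffi^k = \lambda^{-2}(\ffi^{-7}-1)$, not $\ffi^{-7}-1$. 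Consequently your definitions should read $\mathbf{c}_0 = \lambda^{-2}\bigl[(1+A)^{-7}-1\bigr]$ and $\mathbf{c}_k = \binom{-7}{k}(1+A)^{-7-k}$ for $k\geq 1$, i.e.\ drop the extra $\lambda^2$ you inserted. With this correction your Fa\`a di Bruno argument still gives exactly \eqref{estim c0}--\eqref{estim ck geq 1} (for $\mathbf{c}_0$, expand $(1+A)^{-7}-1=\sum_{n\geq 1}\binom{-7}{n}A^n$ so that the $\lambda^{-2}$ is absorbed by the leading $A=\lambda^2(\ffi^{(2)}+\lambda\ffi^{(3)})$), and the compact support of $\mathbf{c}_0$ is preserved since every term still carries at least one factor of $\ffi^{(2)}$ or $\ffi^{(3)}$.
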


\begin{proof}
The constant $C_{emb}$ is the one appearing in the embedding $H^2_\delta\xhookrightarrow{} L^\infty$ (see Proposition \ref{prop WSS chap 2}), i.e
\begin{align*}
\l u \r_{L^\infty} \leq C_{emb} \l u \r_{H^2_\delta},
\end{align*} 
for all $u\in H^2_\delta$. Now if $\e$ is small enough and if $\l \tffi \r_{H^2_\delta} < C_{emb}^{-1}$, we have
\begin{align*}
\l \ffi^{(2)} + \tffi + \lambda \ffi^{(3)} \r_{L^\infty} \leq 1.
\end{align*}
This allows us to expand $\ffi^{-7}= \left( 1 + \lambda^2\left(\ffi^{(2)} + \tffi + \lambda \ffi^{(3)} \right) \right)^{-7}$.  Indeed there exists a sequence $(\mathbf{c}_k)_{k\in\N}$ such that
\begin{align*}
\ffi^{-7} = 1  + \lambda^2 \left( \mathbf{c}_0 +  \sum_{k\geq 1} \mathbf{c}_k \lambda^{2(k-1)} \tffi^k  \right)
\end{align*}
where $\mathbf{c}_0$ is a polynomial in $\ffi^{(2)}$ and $\ffi^{(3)}$ with no constant term and $\mathbf{c}_k$ for $k\geq 1$ is a polynomial in $\ffi^{(2)}$ and $\ffi^{(3)}$ with a constant term bounded but not compactly supported. This justifies the estimates \eqref{estim c0} and \eqref{estim ck geq 1}. Therefore, we have
\begin{align*}
\left( \left|\sigma+L_\gamma W\right|^2_{\gamma} \ffi^{-7} \right)^{(\geq 2)} & = \left(\left|\sigma+L_\gamma W\right|^2_{\gamma} \right)^{(\geq 2)} + \left|\sigma+L_\gamma W\right|^2_{\gamma}\left( \mathbf{c}_0 +  \sum_{k\geq 1} \mathbf{c}_k \lambda^{2(k-1)} \tffi^k  \right),
\end{align*}
which concludes the proof.
\end{proof}

Putting Lemmas \ref{lem exp ffi^6}, \ref{lem exp tau^2} and \ref{lem exp ffi-7} together, we obtain the final form of the system solved by $\tffi$ and $\tW$:
\begin{align}
M^{[\geq 0]}(\tW) & = - \mathbf{M}^{[\geq -1 ]}(W^{(3)}(\tW)) +  \sum_{k=1}^{6}\lambda^{2(k-1)} \mathbf{a}_k \tffi^k + \mathcal{R}_{\mathrm{mom}},\label{eq tW final}
\\ 8\Delta_\gamma \tffi & =     R(\gamma) \tffi   +  \sum_{k=1}^{5}\lambda^{2k} \mathbf{b}_k \tffi^k  \label{eq tffi final}
\\&\quad - \left(\left|\sigma+L_\gamma W\right|^2_{\gamma} \right)^{(\geq 2)} - \left|\sigma+L_\gamma W\right|^2_{\gamma}\left( \mathbf{c}_0 +  \sum_{k\geq 1} \mathbf{c}_k \lambda^{2(k-1)} \tffi^k  \right)    +   \mathcal{R}_{\mathrm{Ham}}    ,    \nonumber
\end{align}
where we define the following remainders
\begin{align}
\mathcal{R}_{\mathrm{mom}} & = - \mathbf{M}^{[\geq 0 ]}(W^{(2)}) +  \mathbf{a}_0 ,\label{def Rmom}
\\ \mathcal{R}_{\mathrm{Ham}} & = -  8 \sum_{i=2}^3 \mathbf{H}^{[\geq 2-i]}(\ffi^{(i)}) + R^{(\geq 2)} + R(\gamma)\left( \ffi^{(2)}  + \lambda\ffi^{(3)}\right) + \mathbf{b}_0.\label{def Rham}
\end{align}

\subsection{Fixed point argument}

In this section, we solve \eqref{eq tW final} and \eqref{eq tffi final} by a fixed point argument. As in Section \ref{section solving for Y}, the idea is to replace the operators depending on $\gamma$ by their Euclidean equivalent and use the smallness of $\gamma-e$. We introduce the map $\mathbf{\Phi} $
\begin{align*}
\mathbf{\Phi} & : \mathcal{B}\times \mathcal{B}  \longrightarrow \mathcal{B}\times \mathcal{B} 
\\ &\quad (\tffi,\tW)  \longmapsto \left(\mathbf{\Phi}_1(\tffi) , \mathbf{\Phi}_2(\tW) \right),
\end{align*}
such that $\mathbf{\Phi}_1(\tffi)$ and $\mathbf{\Phi}_2(\tW)$ are solutions of the coupled system
\begin{align}
\dive_eL_e \mathbf{\Phi}_2(\tW) & = \dive_eL_e(\tW) - M^{[\geq 0]}(\tW)  \label{eq Phi tW}
\\&\quad - \mathbf{M}^{[\geq -1 ]}(W^{(3)}(\tW)) +  \sum_{k=1}^{6}\lambda^{2(k-1)} \mathbf{a}_k \tffi^k + \mathcal{R}_{\mathrm{mom}}\nonumber ,
\\ 8 \Delta \mathbf{\Phi}_1(\tffi) & = 8 \Delta \tffi - 8\Delta_\gamma\tffi  + R(\gamma) \tffi   +  \sum_{k=1}^{5}\lambda^{2k} \mathbf{b}_k \tffi^k   \label{eq Phi tffi}
\\&\quad - \left(\left|\sigma+L_\gamma W\right|^2_{\gamma} \right)^{(\geq 2)} - \left|\sigma+L_\gamma W\right|^2_{\gamma}\left( \mathbf{c}_0 +  \sum_{k\geq 1} \mathbf{c}_k \lambda^{2(k-1)} \tffi^k  \right)    +   \mathcal{R}_{\mathrm{Ham}}    ,    \nonumber
\end{align}
and where $\mathcal{B}$ is defined in \eqref{la boule}. Note that a fixed point of $\mathbf{\Phi}$ solves \eqref{eq tW final} and \eqref{eq tffi final}. In order to apply the Banach fixed point theorem and prove the existence of a fixed point, we prove in the next proposition that $\mathbf{\Phi}$ is well-defined and is a contraction.

\begin{prop}
If $C_1$ is large enough and $\e$ is small enough, then $\mathbf{\Phi}$ is well-defined and is a contraction.
\end{prop}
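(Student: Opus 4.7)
The plan is to mirror the argument of Proposition \ref{prop Psi}, applied now to the coupled system \eqref{eq Phi tW}--\eqref{eq Phi tffi}. For both equations, I would bound the RHS in $L^2_{\delta+2}$ by $C_1\e^2 + \e$, invert the constant-coefficient Euclidean operators $\Delta$ and $\dive_eL_e$ using Proposition \ref{prop Delta et dive e L e}, and conclude that $\l \mathbf{\Phi}(\tffi,\tW)\r_{H^2_\delta \times H^2_\delta} \lesssim C_1\e^2 + \e$. Choosing $C_1$ larger than the sum of the numerical constants appearing in the bounds and $\e$ small compared to $C_1^{-1}$ then gives $\mathbf{\Phi}(\mathcal{B}\times\mathcal{B})\subset \mathcal{B}\times\mathcal{B}$. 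The guiding principle is that every term must carry either a factor $\e$ (from $\gamma - e$, from $R(\gamma)$ after use of \eqref{hamiltonian background}, from $\bar{F}^{(1)}$, or from the coefficients $\mathbf{a}_k,\mathbf{b}_k,\mathbf{c}_0$) or a factor $C_1\e$ (from $\l\tffi\r_{H^2_\delta}$ or $\l\tW\r_{H^2_\delta}$).

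For \eqref{eq Phi tffi}, I would treat the quasilinear piece $8\Delta\tffi - 8\Delta_\gamma\tffi$ and the potential $R(\gamma)\tffi$ by placing $\gamma^{-1}-e^{-1}$, $\dr\gamma$ and $R(\gamma)$ in $L^\infty$ using \eqref{estim gamma - e}--\eqref{estim d2 gamma} and Lemma \ref{lem R(gamma)}, producing a bound $\lesssim \e\l\tffi\r_{H^2_\delta} \lesssim C_1\e^2$. The polynomial non-linearities $\mathbf{b}_k\tffi^k$ and $\mathbf{c}_k\tffi^k$ are handled by Lemmas \ref{lem exp tau^2}--\ref{lem exp ffi-7}, the algebra property $H^2_\delta\cdot H^2_\delta \hookrightarrow H^2_\delta$ provided by Proposition \ref{prop WSS chap 2}, and the hypothesis $\l\tffi\r_{H^2_\delta} \leq C_1\e < C_{emb}^{-1}$ for $\e$ small enough. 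The remaining term $(|\sigma+L_\gamma W|^2_\gamma)^{(\geq 2)}$ is expanded via the ansatz \eqref{ansatz sigma} and Lemma \ref{lem conformal killing}, then bounded using \eqref{estim sigma}: the $(\geq 2)$ projection removes the leading order so only $\GO{\e^2}$ pieces survive. Finally $\mathcal{R}_{\mathrm{Ham}}$ is $\GO{\e}$ by inspection of \eqref{def Rham} using \eqref{def ffi 2}, \eqref{estim ffi 3}, Lemma \ref{lem R(gamma)} and Lemma \ref{lem expansion laplace}. For \eqref{eq Phi tW} the same template applies: $\dive_e L_e \tW - M^{[\geq 0]}(\tW)$ gains $\e$ by \eqref{estimation m0} exactly as in Proposition \ref{prop Psi}; $\mathbf{M}^{[\geq -1]}(W^{(3)}(\tW))$ is affine in $\tW$ with a factor $\bar{F}^{(1)}$ coming from \eqref{def W3 bis} and thus contributes $\GO{C_1\e^2}$; and $\mathcal{R}_{\mathrm{mom}}$ is $\GO{\e}$ by \eqref{def Rmom}.

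The contraction property is obtained by subtracting the equations for $\mathbf{\Phi}(\tffi_a,\tW_a)$ and $\mathbf{\Phi}(\tffi_b,\tW_b)$ and re-running the above estimates on the differences, using $\tffi_a^k - \tffi_b^k = (\tffi_a-\tffi_b)\sum_{j=0}^{k-1}\tffi_a^j\tffi_b^{k-1-j}$ and bounding the geometric sum in $H^2_\delta$ by $(C_1\e)^{k-1}$. Each contribution acquires a Lipschitz factor $\GO{\e}$ from one of the smallness sources listed above, so for $\e$ small enough $\mathbf{\Phi}$ becomes a strict contraction. The main obstacle in carrying out this plan will be the non-linearity $(|\sigma+L_\gamma W|^2_\gamma)^{(\geq 2)}$ in \eqref{eq Phi tffi}, which depends on $\tW$ through the affine map $\tW\mapsto W = \la^2(W^{(2)}+\tW) + \la^3 W^{(3)}(\tW)$ and mixes several high-frequency orders; showing that its Lipschitz norm in $\tW$ is $\GO{\e\l\tW_a-\tW_b\r_{H^2_\delta}}$ will require the $L^\infty$ control on $\sigma$ provided by \eqref{estim sigma}, the $\la^2$ prefactor in front of $\tW$, and the observation that $\sigma^{(0)}+\lambda\sigma^{(1)}=\GO{1}$ enters only through its coupling with $L_\gamma W = \GO{\la}$.
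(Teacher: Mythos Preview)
Your proposal is correct and follows essentially the same approach as the paper: bound each RHS in $L^2_{\delta+2}$ by a quantity of the form $C(C_1)\e^2 + \e$, invert the Euclidean operators via Proposition \ref{prop Delta et dive e L e}, and repeat on differences for the contraction. The only point where the paper is slightly more explicit is the term $(L_\gamma\tW)^2$ inside $(|\sigma+L_\gamma W|^2_\gamma)^{(\geq 2)}$: since $L_\gamma\tW$ contains $\dr\gamma\,\tW$, a naive use of $H^1_{\delta+1}\times H^1_{\delta+1}\hookrightarrow L^2_{\delta+2}$ would cost a power of $\lambda$, so the paper expands the square and applies product laws termwise---exactly the obstacle you flag at the end of your proposal.
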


\begin{proof}
Let $(\tffi,\tW)\in \mathcal{B}\times \mathcal{B}$. We start by estimating the $L^2_{\delta+2}$ norm of the RHS of \eqref{eq Phi tW}:
\begin{align*}
\l \text{RHS of \eqref{eq Phi tW}} \r_{L^2_{\delta+2}} & \lesssim \l  \dive_eL_e(\tW) - M^{[\geq 0]}(\tW) \r_{L^2_{\delta+2}} + \l \mathbf{M}^{[\geq -1 ]}(W^{(3)}(\tW)) \r_{L^2_{\delta+2}} 
\\&\quad + \l \sum_{k=1}^{6}\lambda^{2(k-1)} \mathbf{a}_k \tffi^k \r_{L^2} + \l\mathcal{R}_{\mathrm{mom}} \r_{L^2}
\\& =\vcentcolon A + B + C + D,
\end{align*}
where we omitted the weights for the last two terms since they are compactly supported. As in the proof of Proposition \ref{prop Psi}, we obtain $A\lesssim C_1\e^2$.  For $B$, we note that the operator $\mathbf{M}^{[\geq -1 ]}$ is linear and has bounded coefficients and involves up to two derivatives of $W^{(3)}(\tW)$, recall Lemma \ref{lem conf laplace osc}. Moreover, thanks to \eqref{def W3 bis} $W^{(3)}(\tW)$ is compactly supported so we obtain $B\lesssim \l W^{(3)}(\tW) \r_{H^2} $. Using \eqref{estim Fbar 1} and \eqref{estim W3 rest} this implies $B\lesssim C_1\e^2 + \e$.

For $C$, we simply estimate $\tffi$ in $L^\infty$ using the embedding $H^2_\delta\xhookrightarrow{}L^\infty$ (see Proposition \ref{prop WSS chap 2}) and together with \eqref{estim ak} this gives $C\lesssim C(C_1) \e^2$,  where $C(C_1)$ denotes a numerical constant depending on $C_1$. Using \eqref{def Rmom}, \eqref{def W2} and \eqref{estim ak} again we also obtain $D\lesssim \e$. This discussion proves that
\begin{align}
\l \text{RHS of \eqref{eq Phi tW}} \r_{L^2_{\delta+2}} & \lesssim C(C_1)\e^2 + \e.\label{estim RHS eq tW}
\end{align}

We now estimate the RHS of \eqref{eq Phi tffi}:
\begin{align*}
&\l \text{RHS of \eqref{eq Phi tffi}} \r_{L^2_{\delta+2}} 
\\&\quad \lesssim \l \Delta \tffi - \Delta_\gamma\tffi \r_{L^2_{\delta+2}} + \l R(\gamma)\tffi \r_{L^2_{\delta+2}} + \l \sum_{k=1}^{5}\lambda^{2k} \mathbf{b}_k \tffi^k  \r_{L^2} + \l \mathcal{R}_{\mathrm{Ham}} \r_{L^2_{\delta+2}}
\\&\quad\quad + \l \left(\left|\sigma+L_\gamma W\right|^2_{\gamma} \right)^{(\geq 2)} \r_{L^2_{\delta+2}} + \l \left|\sigma+L_\gamma W\right|^2_{\gamma}\left( \mathbf{c}_0 +  \sum_{k\geq 1} \mathbf{c}_k \lambda^{2(k-1)} \tffi^k  \right) \r_{L^2_{\delta+2}} 
\\& \quad =\vcentcolon A + B + C + D + E + F,
\end{align*}
where we omitted the weights for the third and sixth terms since they are compactly supported. For $A$ we use the expansion defining $\gamma$, similarly as in \eqref{estimation m0}:
\begin{align*}
A & \lesssim \l (\gamma^{-1}-e^{-1})\dr^2\tffi \r_{L^2_{\delta+2}}  +   \l \dr \gamma \dr \tffi \r_{L^2_{\delta+2}} \lesssim C_1\e^2,
\end{align*}
where we bound the metric coefficients and their derivatives in $L^\infty$ using \eqref{estim gamma - e} and \eqref{estim d gamma}. The terms $B$ and $C$ only contains $\tffi$ with zero derivatives, which we simply bound in $L^\infty$ using $H^2_\delta\xhookrightarrow{}L^\infty$. We then use Lemma \ref{lem R(gamma)} and \eqref{estim bk} to obtain $B+C\lesssim C(C_1)\e^2$. Similar arguments lead to $D\lesssim \e$.

We now estimate $E$ and $F$. It involves the TT-tensor $\sigma$ but thanks to the estimate \eqref{estim sigma} we can put it in $L^\infty$ and thus focus on $L_\gamma W$. For the same reason, we neglect $W^{(2)}$ and $W^{(3)}(\tW)$.  Since the term
\begin{align*}
\mathbf{c}_0 +  \sum_{k\geq 1} \mathbf{c}_k \lambda^{2(k-1)} \tffi^k ,
\end{align*}
can be bounded in $L^\infty$ by $C(C_1)\e$ (using \eqref{estim c0}-\eqref{estim ck geq 1} and $H^2_\delta\xhookrightarrow{}L^\infty$ for the powers of $\tffi$), in order to estimate $E$ and $F$ it is enough to estimate $\l (L_\gamma \tW)^2 \r_{L^2_{\delta+2}} $. Since $L_\gamma \tW$ contains derivatives of $\gamma$ we can't directly use the product law $H^1_{\delta+1}\times H^1_{\delta+1} \xhookrightarrow{} L^2_{\delta+2} $ of Proposition \ref{prop WSS chap 2} without losing one $\lambda$ power. Instead we expand
\begin{align*}
\l (L_\gamma \tW)^2 \r_{L^2_{\delta+2}}  & \lesssim \l  (\dr\tW)^2 \r_{L^2_{\delta+2}}   +  \l  (\dr\gamma)^2 (\tW)^2 \r_{L^2_{\delta+2}}   +  \l   \dr\gamma \tW \dr\tW  \r_{L^2_{\delta+2}} .
\end{align*}
 For the first term we use the product law $H^1_{\delta+1}\times H^1_{\delta+1} \xhookrightarrow{} L^2_{\delta+2} $ of Proposition \ref{prop WSS chap 2}. For the second and third terms, we bound $\dr\gamma$ in $L^\infty$ (recall \eqref{estim d gamma}) and use the product laws $H^2_\delta\times H^2_\delta \xhookrightarrow{} L^2_{\delta+2} $ and $H^2_{\delta}\times H^1_{\delta+1} \xhookrightarrow{} L^2_{\delta+2} $. We obtain $\l (L_\gamma \tW)^2 \r_{L^2_{\delta+2}}   \lesssim C(C_1)\e^2$ and
\begin{align*}
E+F\lesssim C(C_1)\e^2 + \e.
\end{align*}
This discussion proves that 
\begin{align}
\l \text{RHS of \eqref{eq Phi tffi}} \r_{L^2_{\delta+2}} & \lesssim C(C_1)\e^2 + \e.\label{estim RHS eq tffi}
\end{align}
Using the first part of Proposition \ref{prop Delta et dive e L e}, \eqref{estim RHS eq tW} and \eqref{estim RHS eq tffi} prove that there exists a unique $\left(\mathbf{\Phi}_1(\tffi) ,  \mathbf{\Phi}_2(\tW) \right)\in H^2_\delta\times H^2_\delta$ solving \eqref{eq Phi tW}-\eqref{eq Phi tffi} and satisfying
\begin{align*}
\l \mathbf{\Phi}_1(\tffi)  \r_{H^2_\delta} + \l  \mathbf{\Phi}_2(\tW)\r_{H^2_\delta} \lesssim C(C_1)\e^2 + \e.
\end{align*}
Therefore, if we take $C_1$ larger than the numerical constant appearing in these estimates and $\e$ small compared to $C_1$, then $\left(\mathbf{\Phi}_1(\tffi) ,  \mathbf{\Phi}_2(\tW) \right)\in \mathcal{B}\times \mathcal{B}$. This shows that $\mathbf{\Phi}$ is well-defined. 

\saut
In order to show that $\mathbf{\Phi}$ is a contraction we consider the equations satisfied by the differences $\mathbf{\Phi}_1(\tffi_a) -  \mathbf{\Phi}_1(\tffi_b)$ and $\mathbf{\Phi}_2(\tW_a)  -  \mathbf{\Phi}_2(\tW_b)$, where $(\tffi_a,\tW_a)$ and $(\tffi_b,\tW_b)$ are two elements of $\mathcal{B}\times \mathcal{B}$. Together with non-linear inequalities of the form
\begin{align*}
\left| x^k - y^k \right| \lesssim  \sup_{0\leq p,q\leq k-1}\left\{ |x|^p, |y|^q \right\} \times  |x-y|,
\end{align*}
we can mimick the previous arguments leading to \eqref{estim RHS eq tW} and \eqref{estim RHS eq tffi} and prove that by taking $C_1$ larger and $\e$ smaller if necessary the map $\mathbf{\Phi}$ is a contraction. We omit the details.
\end{proof}

The Banach fixed point theorem then implies that there exists $(\tffi,\tW)\in\mathcal{B}\times \mathcal{B}$ solving \eqref{eq tW final} and \eqref{eq tffi final}. We can also prove that $\tffi$ and $\tW$ enjoy higher regularity, as we did for $Y$ in Section \ref{section solving for Y}. We obtain $\tffi,\tW\in H^{N-3}_\delta$ with
\begin{align}
\l \tffi \r_{H^{k+2}_\delta} + \l \tW \r_{H^{k+2}_\delta} \lesssim \frac{\e}{\lambda^k}, \label{estim tffi et tW}
\end{align}
for $k\in \llbracket 0, N-5 \rrbracket$. This concludes the construction of high-frequency solutions to \eqref{hamiltonian constraint}-\eqref{momentum constraint}.

\section{Proof of the main theorem}\label{section conclusion chap 2}

In this section we conclude the proof of Theorem \ref{theo main chap 2}. The solution of the constraint equations $(\bar{g}_\lambda,K_\lambda)$ we constructed through the conformal method is given by
\begin{align}
\bar{g}_\lambda & =\ffi^4\gamma,  \label{g lambda final}
\\ K_\lambda & = \ffi^{-2}(\sigma+L_\gamma W)+\frac{1}{3}\ffi^4\gamma\tau, \label{K lambda final}
\end{align}
where $\gamma$, $\tau$, $\sigma$, $W$ and $\ffi$ are the parameters and unknowns of the conformal method and are defined in the previous sections. Let us check that the two previous expressions match the expressions of Theorem \ref{theo main chap 2} and the estimates therein.

\subsection{The metric $\bar{g}_\lambda$ and proof of \eqref{estim h bar}}

We start with the induced metric. Thanks to \eqref{ansatz ffi} and \eqref{g lambda final} we first have
\begin{align*}
\bar{g}_\lambda & = \bar{g}_0 + \lambda\gamma^{(1)} + \GO{\lambda^2},
\end{align*}
which matches \eqref{g bar theo chap 2} using \eqref{def gamma1} and \eqref{def Fbar 1}. If we now look at the order $\lambda^2$ or higher in $\ffi^4\gamma$, we see that it is composed of oscillating terms and terms satisfying better estimates:
\begin{align}
\left( \ffi^4\gamma \right)^{(\geq 2)} & = 4\ffi^{(2)}\bar{g}_0 + 4\tffi \bar{g}_0 + \gamma^{(2)} +
\lambda \left(  4\ffi^{(3)}\bar{g}_0 + 4\left(\tffi + \ffi^{(2)}\right) \gamma^{(1)}   \right) + \GO{\lambda^2},\label{ffi 4 gamma geq 2}
\end{align}
where the $\GO{\lambda^2}$ is a polynomial in terms of $\ffi^{(2)}$, $\tffi$, $\ffi^{(3)}$, $\bar{g}_0$, $\gamma^{(1)}$ and $\gamma^{(2)}$. Using \eqref{def ffi 2}, \eqref{def gamma2} and \eqref{def F21}-\eqref{def F22} we see that
\begin{align*}
\ffi^{(2)}\bar{g}_0 + \gamma^{(2)} & = \sin\left( \frac{u_0}{\lambda}\right) \bar{F}^{(2,1)} + \cos\left( \frac{2u_0}{\lambda}\right) \bar{F}^{(2,2)} .
\end{align*}
Therefore, by setting
\begin{align*}
\bar{\h}_\lambda & = \left( \ffi^4\gamma \right)^{(\geq 2)}  - 4 \ffi^{(2)}\bar{g}_0 - 4\gamma^{(2)},
\end{align*}
we prove that $\bar{g}_\lambda$ is indeed given by the expression \eqref{g bar theo chap 2}. We now prove estimate \eqref{estim h bar}. Thanks to \eqref{ffi 4 gamma geq 2} we have
\begin{align}
\bar{\h}_\lambda & = 4\tffi \bar{g}_0  +  \lambda \left(  4\ffi^{(3)}\bar{g}_0 + 4\left(\tffi + \ffi^{(2)}\right) \gamma^{(1)}   \right) + \GO{\lambda^2}.\label{h bar expansion}
\end{align}
The regularity of each term in $\bar{\h}_\lambda$ (recall \eqref{estim tffi et tW}) and the decay of $\tffi$ and $\bar{g}_0$ at infinity imply easily that the amount of derivatives together with the weights in \eqref{estim h bar} are allowed. The only part of \eqref{estim h bar} that remains to be checked is the $\lambda$ behaviour. From this perspective, $\ffi^{(3)}$, $\ffi^{(2)}$ and $\gamma^{(1)}$ are the worse terms since they lose one $\lambda$ power for each derivative. As they are already multiplied by $\lambda$ in \eqref{h bar expansion}, this concludes the justification of \eqref{estim h bar}.

\subsection{The tensor $K_\lambda$ and proof of \eqref{estim K geq 2}}

For the tensor $K_\lambda$, we first prove that \eqref{K lambda final} matches the expression \eqref{K lambda theo chap 2}. Since $\ffi=1+\GO{\lambda^2}$, we have $\ffi^{-2}=1+\GO{\lambda^2}$ and  $\ffi^{4}=1+\GO{\lambda^2}$. Therefore from \eqref{K lambda final} we obtain
\begin{align*}
K_\lambda & = \sigma^{(0)} + (L_\gamma W)^{(0)} + \lambda \left( \sigma^{(1)} + (L_\gamma W)^{(1)} + \frac{1}{3}\bar{g}_0 \tau^{(1)} \right) + \GO{\lambda^2}.
\end{align*}
We now use the ansatz for $W$ (see \eqref{ansatz W}) and the expansion of Lemma \ref{lem conformal killing} to obtain $(L_\gamma W)^{(0)}=0$ and $(L_\gamma W)^{(1)}= \mathbf{K}^{[-1]}(W^{(2)})$. This gives
\begin{align*}
K_\lambda & = \sigma^{(0)}  + \lambda \left( \sigma^{(1)} + \mathbf{K}^{[-1]}(W^{(2)}) + \frac{1}{3}\bar{g}_0 \tau^{(1)} \right) + \GO{\lambda^2}
\\& = K^{(1)}_\lambda + \lambda K^{(1)}_\lambda + \GO{\lambda^2},
\end{align*}
where we used the definition of $\sigma^{(0)}$ and $\sigma^{(1)}$, see \eqref{sigma 0} and \eqref{sigma 1}. Therefore, the solution $K_\lambda$ matches the expression \eqref{K lambda theo chap 2}. The remainder $K_\lambda^{(\geq 2)}$ satisfies
\begin{align}
K_\lambda^{(\geq 2)} & =  \sigma^{(2)}(Y) + L_\gamma Y + \mathbf{K}^{[\geq 0]}(W^{(2)}) +  \mathbf{K}^{[\geq -1]}(W^{(3)}(\tW)) + L_\gamma\tW  \label{K geq 2 expansion}
\\&\quad   -2\left( \tffi + \ffi^{(2)} \right) \sigma^{(0)}   +   \frac{1}{3}\gamma^{(1)}\tau^{(1)} + \GO{\lambda^3}.\nonumber
\end{align}
The estimate \eqref{estim K geq 2} then follows from estimating directly all the oscillating terms in \eqref{K geq 2 expansion} and using \eqref{estim Y} and \eqref{estim tffi et tW} for $L_\gamma Y$, $L_\gamma \tW$ or $\tffi$. This concludes the proof of Theorem \ref{theo main chap 2}.

\bibliographystyle{alpha}

\begin{thebibliography}{DIMM14}

\bibitem[ADM08]{Arnowitt2008}
Richard~L. Arnowitt, Stanley Deser, and Charles~W. Misner.
\newblock {The Dynamics of general relativity}.
\newblock {\em Gen. Rel. Grav.}, 40:1997--2027, 2008.

\bibitem[Bur89]{Burnett1989}
Gregory~A. Burnett.
\newblock The high-frequency limit in general relativity.
\newblock {\em J. Math. Phys.}, 30(1):90--96, 1989.

\bibitem[CB09]{ChoquetBruhat2009}
Yvonne Choquet-Bruhat.
\newblock {\em General relativity and the {E}instein equations}.
\newblock Oxford Mathematical Monographs. Oxford University Press, Oxford,
  2009.

\bibitem[CBIY00]{ChoquetBruhat2000a}
Yvonne Choquet-Bruhat, James Isenberg, and James~W. York, Jr.
\newblock Einstein constraints on asymptotically {E}uclidean manifolds.
\newblock {\em Phys. Rev. D (3)}, 61(8):084034, 20, 2000.

\bibitem[DIMM14]{Dilts2014}
James Dilts, Jim Isenberg, Rafe Mazzeo, and Caleb Meier.
\newblock Non-{CMC} solutions of the {Einstein} constraint equations on
  asymptotically {Euclidean} manifolds.
\newblock {\em Classical Quantum Gravity}, 31(6):10, 2014.
\newblock Id/No 065001.

\bibitem[GdC21]{Guerra2021}
André Guerra and Rita~Teixeira da~Costa.
\newblock Oscillations in wave map systems and homogenization of the einstein
  equations in symmetry.
\newblock {\em arXiv:2107.00942}, 2021.

\bibitem[HL18]{Huneau2018a}
C\'{e}cile Huneau and Jonathan Luk.
\newblock High-frequency backreaction for the {E}instein equations under
  polarized {$\Bbb{U}(1)$}-symmetry.
\newblock {\em Duke Math. J.}, 167(18):3315--3402, 2018.

\bibitem[HL19]{Huneau2019}
Cécile Huneau and Jonathan Luk.
\newblock Trilinear compensated compactness and {B}urnett's conjecture in
  general relativity.
\newblock {\em arXiv:1907.10743}, 2019.

\bibitem[Ise95]{Isenberg1995}
James Isenberg.
\newblock Constant mean curvature solutions of the {Einstein} constraint
  equations on closed manifolds.
\newblock {\em Classical Quantum Gravity}, 12(9):2249--2274, 1995.

\bibitem[Lic44]{Lichnerowicz1944}
Andre Lichnerowicz.
\newblock L'int{\'e}gration des {\'e}quations de la gravitation relativiste et
  le probl{\`e}me des {{\(n\)}} corps.
\newblock {\em J. Math. Pures Appl. (9)}, 23:37--63, 1944.

\bibitem[LR20]{Luk2020}
Jonathan Luk and Igor Rodnianski.
\newblock High-frequency limits and null dust shell solutions in general
  relativity.
\newblock {\em arXiv:2009.08968}, 2020.

\bibitem[Max09]{Maxwell2009}
David Maxwell.
\newblock A class of solutions of the vacuum {Einstein} constraint equations
  with freely specified mean curvature.
\newblock {\em Math. Res. Lett.}, 16(4):627--645, 2009.

\bibitem[McO79]{McOwen1979}
Robert~C. McOwen.
\newblock The behavior of the {L}aplacian on weighted {S}obolev spaces.
\newblock {\em Comm. Pure Appl. Math.}, 32(6):783--795, 1979.

\bibitem[Tar09]{Tartar2009}
Luc Tartar.
\newblock {\em The general theory of homogenization}, volume~7 of {\em Lecture
  Notes of the Unione Matematica Italiana}.
\newblock Springer-Verlag, Berlin; UMI, Bologna, 2009.

\bibitem[Tou22]{Touati2022a}
Arthur Touati.
\newblock High-frequency solutions to the {E}instein vacuum equations: local
  existence in generalised wave gauge.
\newblock {\em arXiv:2206.12318}, 2022.

\end{thebibliography}

\end{document}